    \theoremstyle{acmplain}
    \newtheorem{assumption}[theorem]{Assumption}
    \theoremstyle{acmdefinition}
    \newtheorem{remark}[theorem]{Remark}
\DeclareSIUnit{\txn}{txn}
\newcommand{\Replicas}{\mathfrak{R}}
\newcommand{\Clients}{\mathfrak{C}}
\newcommand{\Faulty}{\mathcal{F}}
\newcommand{\n}{\mathbf{n}}
\newcommand{\f}{\mathbf{f}}
\newcommand{\nf}{\mathbf{nf}}
\newcommand{\ID}[1]{\mathop{\textsf{id}}(#1)}
\newcommand{\Replica}[1][r]{\textsc{#1}}
\newcommand{\Primary}[1][p]{\textsc{#1}}
\newcommand{\Client}{\gamma}
\newcommand{\Name}[1]{\textnormal{\textsc{#1}}}
\newcommand{\PoE}{\Name{PoE}}
\newcommand{\Bitcoin}{\Name{Bitcoin}}
\newcommand{\Ethereum}{\Name{Ethereum}}
\newcommand{\PBFT}{\Name{Pbft}}
\newcommand{\FaB}{\Name{FaB}}
\newcommand{\ZZ}{\Name{Zyzzyva}}
\newcommand{\LinBFT}{\Name{LinBFT}}
\newcommand{\SBFT}{\Name{Sbft}}
\newcommand{\HS}{\Name{HotStuff}}
\newcommand{\MinBFT}{\Name{MinBFT}}
\newcommand{\MAC}{\Name{MAC}}
\newcommand{\T}{\tau}
\newcommand{\MName}[1]{\texttt{#1}}
\newcommand{\Message}[2]{\ensuremath{\MName{#1}(#2)}}
\newcommand{\Cert}[2]{\langle #1 \rangle_{#2}}
\newcommand{\PCert}[2]{\langle\!| #1 |\!\rangle_{#2}}
\newcommand{\RCert}[2]{\langle\!| #1 |\!\rangle_{#2}'}
\newcommand{\View}[1][v]{#1}
\newcommand{\rn}{\rho}
\newcommand{\PC}[1]{\mathbb{P}(#1)}
\newcommand{\CC}[1]{\mathbb{C}(#1)}
\newcommand{\RC}[1]{\mathbb{R}(#1)}
\newcommand{\VE}{\mathsf{E}}
\newcommand{\VI}{\mathsf{V}}
\newcommand{\NVP}[2]{\mathsf{P}(#1, #2)}
\newcommand{\NVrn}[1]{{\mathsf{LP}}(#1)}
\newcommand{\NVrncc}[1]{{\mathsf{LC}}(#1)}
\newcommand{\NVm}[2]{\mathsf{M}(#1, #2)}
\newcommand{\NVcc}[1]{\mathsf{C}(#1)}
\newcommand{\Ledger}{\mathcal{L}}
\newcommand{\Digest}[1]{\texttt{digest}(#1)}
\newcommand{\CSize}{\mathbf{C}}
\newcommand{\MSize}{\mathbf{M}}
\newcommand{\WSize}{\mathbf{W}}
\newcommand{\BW}{\mathbf{B}}
\newcommand{\BigO}[1]{\mathcal{O}\mathord{\left(#1\right)}}
\newcommand{\abs}[1]{\lvert #1 \rvert}
\newcommand{\Size}[1]{\lVert #1 \rVert}
\newcommand{\union}{\cup}
\newcommand{\intersect}{\cap}
\newcommand{\difference}{\setminus}
\newcommand{\lfref}[2]{Line~\ref{#1:#2} of Figure~\ref{#1}}
\newcommand{\lsfref}[3]{Lines~\ref{#1:#2}--\ref{#1:#3} of Figure~\ref{#1}}
\newcommand{\subref}[2]{\ref{#1}(\ref{#1:#2})}
\newcommand{\inref}[1]{\textbf{(\ref{#1})}}
\newcommand{\GETS}{:=}
\newenvironment{myprotocol}{
    \hrule
    \small
    \smallskip
    \algsetup{linenosize=\scriptsize}
    \begin{algorithmic}[1]
        \newenvironment{algenumerate}{\begin{enumerate}}{\end{enumerate}}
        \newcommand{\SPACE}{\item[]}
        \newcommand{\TITLE}[2]{\item[] \textbf{\underline{##1}} (##2) \textbf{:}\\[2pt]}
        \makeatletter
            \newcommand{\EVENT}[1]{\STATE \textbf{event} ##1 \textbf{do}\begin{ALC@g}}
            \newcommand{\ENDEVENT}{\end{ALC@g} \STATE \textbf{end event}}
        \makeatother
}{
    \end{algorithmic}
    \smallskip
    \hrule
}
\tikzset{
    dot/.append style={circle,scale=0.35,draw=black,fill=black},
    sdot/.append style={scale=0.45,draw=black,fill=black},
    label/.append style={align=center,font=\strut\footnotesize},
    >=Stealth,
    every edge/.append style={semithick},
    thread/.append style={align=center,draw,thick,rectangle,text width=1cm,text height=2ex,text depth=.25ex,minimum height=0.75cm,font=\strut},
}
\definecolor{colA}{RGB}{230,159,0}
\definecolor{colB}{RGB}{86,180,233}
\definecolor{colC}{RGB}{0,158,115}
\definecolor{colD}{RGB}{240,228,66}
\definecolor{colE}{RGB}{0,114,178}
\definecolor{colF}{RGB}{213,94,0}
\definecolor{colG}{RGB}{204,121,167}
\title{On the Correctness of Speculative Consensus}
\thanks{An extended abstract of this work appeared at the 24\textsuperscript{th} International Conference on Extending Database Technology (EDBT 2021)~\cite{edbt_2021}.}
\author{Jelle Hellings}
\affiliation{\department{Department of Computing and Software}
             \institution{McMaster University}
             \streetaddress{1280 Main Street West}
             \city{Hamilton} \state{ON} \country{Canada}}
\author{Suyash Gupta}
\affiliation{\department{RISELab}
             \department{Department of Electrical Engineering and\\ Computer Science}
             \institution{University of California, Berkeley}
             \city{Berkeley} \state{CA} \country{USA}}
\author{Sajjad Rahnama}
\affiliation{\department{Exploratory Systems Lab}
             \department{Department of Computer Science}
             \institution{University of California, Davis}
             \streetaddress{One Shields Avenue}
             \city{Davis} \state{CA} \country{USA}}
\author{Mohammad Sadoghi}
\affiliation{\department{Exploratory Systems Lab}
             \department{Department of Computer Science}
             \institution{University of California, Davis}
             \streetaddress{One Shields Avenue}
             \city{Davis} \state{CA} \country{USA}}
\begin{document}

\begin{abstract}
The introduction of \Bitcoin{} fueled the development of blockchain-based resilient data management systems that are resilient against failures, enable federated data management, and can support data provenance. The key factor determining the performance of such resilient data management systems is the consensus protocol used by the system to replicate client transactions among all participants. Unfortunately, existing high-throughput consensus protocols are costly and impose significant latencies on transaction processing, which rules out their usage in responsive high-performance data management systems.

In this work, we improve on this situation by introducing the \emph{Proof-of-Execution consensus protocol} (\PoE{}), a consensus protocol designed for high-performance low-latency resilient data management. \PoE{} introduces \emph{speculative execution}, which minimizes latencies by starting execution before consensus is reached, and \PoE{} introduces \emph{proof-of-executions} to guarantee successful execution to clients. Furthermore, \PoE{} introduces a single-round check-commit protocol to reduce the overall communication costs of consensus. Hence, we believe that \PoE{} is a promising step towards flexible general-purpose low-latency resilient data management systems.
\end{abstract}

\maketitle

\section{Introduction}
The introduction of the cryptocurrency \Bitcoin{}~\cite{bitcoin} marked the first wide-spread deployment of a \emph{permissionless blockchain}. The emergence of \Bitcoin{} and other blockchains has fueled the development of new resilient data management systems~\cite{hyperledger,blockchaindb,blockmeetdb,caper,blockplane,vldb_2020}. These new systems are attractive for the database community, as they can be used to provide data management systems that are resilient against failures, enable cooperative (federated) data management with many independent parties, and can support data provenance. Due to these qualities, interest in blockchains is widespread and includes applications in health care, IoT, finance, agriculture, and the governance of supply chains for fraud-prone commodities (e.g., such as hardwood and fish)~\cite{blockhealthover,blockhealthfac,blockchain_iot,surviott,blockfood,blockfoodtwo,bceco}.

At their core, blockchain systems are distributed systems in which each participating replica maintains a copy of a ledger that stores an append-only list of all transactions requested by users and executed by the system~\cite{mc_2021}. This ledger is constructed and stored in a tamper-proof manner: changes (e.g., appending new transactions) are made via a \emph{consensus protocol}, which will only allow changes that are supported by the majority of all participants, ruling out malicious changes (e.g., overwriting existing operations) by a minority of faulty participants~\cite{mc_2021,distalgo}. These \emph{consensus protocols} can be seen as generalizations of the well-known two-phase commit~\cite{2pc} and three-phase commit~\cite{3pc} protocols (as uses in traditional replicated databases) toward dealing gracefully with \emph{replica failures} and even \emph{malicious behavior}. As the ledger is replicated over and maintained by all participating replicas, it is highly \emph{resilient} and will survive even if individual participants fail.

The key factor determining the performance of blockchain-based systems is the choice of \emph{consensus protocol}~\cite{mc_2021,pbftj,icdcs}: the consensus protocol determines the \emph{throughput} of the system, as the consensus protocol determines the speed by which transactions are replicated and appended to the ledger of each replica, and the \emph{latency} clients observe on their requested transactions, as the operations necessary to reach consensus on these transactions determine the minimum time it takes for individual replicas to execute the requested transactions and inform clients of the result.

Unfortunately, existing consensus protocols typically focus on either providing high throughput or low latency, thereby failing to provide the combination of high throughputs and low latencies required for responsive high-performance data management systems. First, we observe that the Proof-of-Work style consensus protocols of permissionless blockchains such as \Bitcoin{} and \Ethereum{} suffer from high costs, very low throughputs, and very high latencies, making such permissionless designs impractical for high-performance data management~\cite{hypereal,badcoin,badbadcoin}. Permissioned blockchains, e.g., those based on \PBFT{}-style consensus, are more suitable for high-performance data management: fine-tuned permissioned systems can easily process up-to-hundreds-of-thousands transactions per second, this even in wide-area (Internet) deployments~\cite{icdcs,vldb_2020,mc_2021,pbftj}. Still, even the best permissioned consensus protocols cannot provide the low latencies we are looking for, as all reliable consensus protocols require three-or-more subsequent rounds of internal communication before requests can be executed and clients can be informed.

To further unlock the development of new resilient data management systems, we designed the \emph{Proof-of-Execution consensus protocol} (\PoE{}), a \emph{novel} consensus protocol that is able to provide high throughputs, while minimizing client latencies. At the core of \PoE{} are two innovative techniques:
\begin{enumerate}
\item \PoE{} introduces \emph{speculative execution}: \PoE{} executes transactions requested by clients and informs clients of the result \emph{before} consensus is reached on these transactions, while providing the clients a \emph{proof-of-execution} that guarantees that speculatively-executed transactions will eventually reach consensus; and
\item \PoE{} introduces the \emph{check-commit} protocol, a decentralized single-round protocol that, under normal conditions, can commit consensus decisions for which a \emph{proof-of-execution} exists and can replicate such decisions among all replicas without relying on specific replicas and without requiring several rounds of communication.
\end{enumerate}

By combining these innovative techniques, \PoE{} only imposes \emph{two} rounds of communication on a consensus step before execution can commence and the client can be informed, while only requiring three rounds of communication to complete a consensus step in the normal case. Furthermore, the design of \PoE{} is flexible and allows for optimizations that further balance communication costs, transaction latency, and recovery complexity. E.g., via the usage of digests to reduce communication costs (at the cost of more-complex recovery paths), via the usage of threshold signatures to further reduce communication costs (at the cost of higher latencies), via the usage of message authentication codes to reduce computation costs (at the cost of more-complex recovery paths), and via the usage of out-of-order processing to significantly improve throughput (at the cost of higher resource usage).

In this paper, we not only introduce the design of \PoE{}, but also provide rigorous proofs of the correctness of \emph{all parts} of the protocol. Furthermore, we provide an in-depth analytical and experimental evaluation of \PoE{} in comparison with other contemporary permissioned consensus protocols. In specific, we make the following contributions:

\begin{enumerate}
\item In Section~\ref{sec:cons_serv} we introduce the concept of \emph{speculative execution} and formalize its usage in a client-oriented system that processes transactions via consensus.
\item In Section~\ref{sec:poe}, we provide an in-depth description of all parts of \PoE{}. In specific:
\begin{enumerate}
\item Section~\ref{ss:poe_nc} describes the normal-case operations that are optimized for high-performance low-latency transaction processing utilizing speculative execution and proof-of-execution;
\item Section~\ref{ss:poe_failure} describes the situations in which the normal-case of \PoE{} can fail and the impact this has on the state of individual replicas;
\item Section~\ref{ss:poe_cc} introduces the novel single-round check-commit protocol that allows \PoE{} replicas to recover from minor failures without interrupting the normal-case operations;
\item Section~\ref{ss:poe_vc} introduces the view-change protocol that allows \PoE{} replicas to recover from major failures (including network failures) without invalidating any transactions that have received a proof-of-execution;
\item Section~\ref{ss:poe_proofs} proves the correctness of \emph{all parts} of \PoE{}, showing that \PoE{} provides consensus and maintains all speculatively-executed transactions that have received a proof-of-execution. Due to the level of detail (which includes all modes of operations), this proof of correctness is a major contribution in itself and can be used as a stepping stone in the analysis of other primary-backup consensus protocols; and
\item Section~\ref{ss:poe_client} proves that \PoE{} provides services to clients. In the normal case, \PoE{} does so by providing the client with a \emph{proof-of-execution}, which can be provided with low latency. In the case of failures, \PoE{} can always fall back to a \emph{proof-of-commit}, which takes an additional round of communication to establish. 
\end{enumerate}
\item In Section~\ref{sec:complexity}, we take an in-depth look at the complexity of \PoE{}, we introduce \PoE{} variants that use digests, threshold signatures, or message authentication codes to reduce complexity, and we show that \PoE{} and all its variants can utilize out-of-order processing to maximize performance.
\item In Section~\ref{ss:lin} we introduce \emph{Linear-\PoE{}}, a variant of \PoE{} that uses threshold signatures to make the normal-case and the check-commit protocol fully linear. Central in this variant is a novel \emph{Linear Check-Commit protocol} that utilizes aggregator rotation, aggregated (multi-round) check-commits, and recovery certificates to ensure a single-round decentralized design with only linear communication costs.
\item Finally, in Section~\ref{sec:anal}, we perform an in-depth analytical evaluation of \PoE{} in comparison with other frequently-used consensus protocols.
\end{enumerate}
Furthermore, Section~\ref{sec:prelim} introduces the notation used throughout this paper and Section~\ref{sec:conclusion} concludes on our findings.

\begin{figure*}
\centering
\begin{minipage}{\textwidth}
\centering
\makebox[0pt]{
\begin{tabular}{lccccccc}
\toprule
         &\multicolumn{2}{c}{Communication Rounds}&\multicolumn{2}{c}{Message Complexity\footnote{For readability, we used a simplified notation for the message complexity by omitting off-by-one terms. E.g., \PoE{} requires a total of $(\n-1)\CSize + 2\n(\n-1)$ messages.}}&\\
        \cmidrule(l{4pt}r{4pt}){2-3}\cmidrule(l{4pt}r{4pt}){4-5}
Protocol & Before Execution & Total &Total&Per Replica (max)& Environment\footnote{The \emph{environment} specifies in which environment the protocol can operate. This does not mean that the protocol will be able to make \emph{progress} (new consensus decisions are made), however. Indeed, all these protocols require sufficiently reliable communication to guarantee \emph{progress}. Only the protocols labeled with \emph{asynchronous (recovery)} can recover from any number of periods in which communication is not sufficiently reliable.} & Remarks\\
\midrule
\PoE{}        & 2 & 3   & $\n\CSize + 2\n^2$ & $\n\CSize +  \n$&\begin{tabular}{c}Asynchronous (recovery)\\Partial Synchrony (progress)\end{tabular}&Speculative execution.\\
Linear-\PoE{} & 3 & 5   & $\n\CSize + 4\n$ &$\n\CSize + \n$&\begin{tabular}{c}Asynchronous (recovery)\\Partial Synchrony (progress)\end{tabular}&Speculative execution.\\
\midrule
\PBFT{}       & 3 & 4   & $\n\CSize + 3\n^2$ &$\n\CSize + 2\n$&\begin{tabular}{c}Asynchronous (recovery)\\Partial Synchrony (progress)\end{tabular}&\\
\midrule
\ZZ{}\footnote{This entry only reflects the \emph{optimistic} fast path of \ZZ{}, which cannot deal with any replica failures.  The complexity of the slow path of \ZZ{} is akin that of \PBFT{}.}        & 1 & 1  & $\n\CSize$ & $\n\CSize$&\begin{tabular}{c}Asynchronous (recovery)\\Partial Synchrony (progress)\end{tabular}&\begin{tabular}{c}Requires reliable clients;\\Has vulnerabilities~\cite{zfail,zfailfix}\end{tabular}\\
\SBFT{}\footnote{This entry only reflects the \emph{optimistic} fast path of \SBFT{}, which cannot deal with any replica failures when replicas are either non-faulty or malicious. Furthermore, \SBFT{} utilizes a checkpoint protocol akin to the one used by \PBFT{}. As no explicit description of this checkpoint protocol is provided in the original \SBFT{} paper, we have used the cost of the \PBFT{} checkpoint protocol (complexity-terms related to the checkpoint protocol are subscripted with $\textsc{cp}$).}& 4 & $5 + 1_{\textsc{cp}}$ &$\n\CSize + 4\n + {\n^2}_{\textsc{cp}}$& $\n\CSize + 3$&\begin{tabular}{c}Asynchronous (recovery)\\Partial Synchrony (progress)\end{tabular}&\\
\HS{}\footnote{In the standard configuration of \HS{}, phases of up-to-four consensus decisions are overlapped. Even with this overlapping, each consensus decision has to go through four all-to-one-to-all communication phases. The costs in this table reflect the communication necessary to complete these four phases.}  & 7 & 8   & $\n\CSize + 3\n$ & $\n\CSize + 3$&\begin{tabular}{c}Partial synchrony\\(progress and recovery)\end{tabular}& No out-of-order processing.\\
\MinBFT{}&2&2&$\n\CSize + \n^2$ & $\n\CSize + \n$ &\begin{tabular}{c}Reliable communication\\(progress and recovery)\end{tabular}& \begin{tabular}{c}Requires trusted hardware;\\can tolerate more faulty replicas.\end{tabular}\\
\bottomrule
\end{tabular}}
\end{minipage}
\caption{A cost comparison of the normal-case operations of \PoE{} and other consensus protocols when reaching reaching consensus among $\n$ replicas on a client request with a size bounded by $\CSize$. We refer to Section~\ref{sec:anal} for an in-depth analysis and breakdown of the details in this table.
}\label{fig:anal}
\end{figure*}

 A summary of the comparison between \PoE{} and other high-performance consensus protocols can be found in Figure~\ref{fig:anal}. As one can see, \PoE{} outperforms other high-performance consensus protocols, as \PoE{} combines the lowest latencies with low communication costs. Furthermore, \PoE{} provides high resilience, as it can operate in fully asynchronous environments without any further assumptions on replicas or clients.

An extended abstract of this work appeared at the 24th International Conference on Extending Database Technology (EDBT 2021)~\cite{edbt_2021}. In comparison with that extended abstract, we have added a full presentation of the operations of \PoE{}, introduced the novel single-round check-commit protocol to further improve the performance of \PoE{}, included complete proofs of the correctness of \PoE{}, introduced a new single-round linear check-commit protocol for use in Linear-\PoE{}, and included an in-depth analytical evaluation of \PoE{} in comparison with contemporary consensus protocols.

\section{Preliminaries}\label{sec:prelim}

\paragraph{System Model}
We model a \emph{system} as a tuple $(\Replicas, \Clients)$, in which $\Replicas$ is a set of \emph{replicas} and $\Clients$ is a set of \emph{clients}. We assign each replica $\Replica \in \Replicas$ a unique identifier $\ID{\Replica}$ with $0 \leq \ID{\Replica} < \abs{\Replicas}$. We write $\Faulty \subseteq \Replicas$ to denote the set of \emph{Byzantine replicas} that can behave in arbitrary, possibly coordinated and malicious, ways.

We assume that non-faulty replicas behave in accordance to the protocols they are executing. We do not make any assumptions on clients: all clients can be malicious without affecting \PoE{}. We write $\n = \abs{\Replicas}$, $\f = \abs{\Faulty}$, and $\nf = \n - \f$ to denote the number of replicas, faulty replicas, and non-faulty replicas, respectively.

\paragraph{Communication}
We assume \emph{authenticated communication}: Byzantine replicas are able to impersonate each other, but replicas cannot impersonate non-faulty replicas. Authenticated communication is a minimal requirement to deal with Byzantine behavior. To enforce authenticated communication and simplify presentation, we assume that all messages are \emph{digitally signed} (e.g., via public-key cryptography)~\cite{cryptobook}: every replica and every client $z \in (\Replicas \union \Clients)$ can \emph{sign} arbitrary messages $m$, resulting in a certificate $\Cert{m}{z}$. These certificates are non-forgeable and can be constructed only if $z$ cooperates in constructing them. Based on only the certificate $\Cert{m}{z}$, anyone can verify that $m$ was originally supported by $z$. We refer to Section~\ref{ss:mac} for a discussion on how to eliminate digital signatures from all messages used between replicas in a system.

\paragraph{Consensus}
A \emph{consensus protocol}~\cite{mc_2021,distalgo} coordinates decision making among the replicas $\Replicas$ of a system by providing a reliable ordered replication of \emph{decisions}. To do so, consensus protocols provide the following guarantees:
\begin{description}
    \item[Termination] if non-faulty replica $\Replica \in \Replicas$ makes a $\rn$-th decision, then all non-faulty replicas $\Replica[q] \in \Replicas$ will make a $\rn$-th decision;
    \item[Non-Divergence] if non-faulty replicas $\Replica_1, \Replica_2 \in \Replicas$ make $\rn$-th decisions $D_1$ and $D_2$, respectively, then $D_1 = D_2$ (they make the same $\rn$-th decisions); and
    \item[Non-Triviality] whenever a non-faulty replica $\Replica \in \Replicas$ learns that a decision $D$ needs to be made, then replica $\Replica$ can force consensus on $D$.
\end{description}

In this work, we assume that each decision represents one or more client transactions. Hence, in practice, the \emph{non-triviality} guarantee simply specifies that replicas can force processing of new client requests whenever clients are requesting execution of transactions.

Consensus cannot be solved in environments in which communication is \emph{asynchronous} (e.g., when messages can get lost or be arbitrarily delayed)~\cite{flp}. Even though practical networks are reliable most of the time, they also have periods of failure during which they behave asynchronous. One  way to deal with this is by providing \emph{weak consensus}: weak consensus always guarantees non-divergence, while only guaranteeing termination and non-triviality in periods of \emph{reliable communication} (during which messages are delivered within some unknown bounded delay)~\cite{pbftj}. We assume $\n > 3\f$ ($\nf = \n - \f > 2\f$), a minimal requirement to provide consensus in an asynchronous environment~\cite{mc_2021,byzgenagain,bt}.

\section{From Consensus to Client Services}\label{sec:cons_serv}

In the previous section, we defined consensus. The definition of consensus does not specify how one builds an effective service that clients can use for the execution of their transactions, however. Next, we take a look at how consensus-base systems can provide such client services.

\paragraph{Traditional Execution}
First, we consider traditional consensus-based systems that provide client services. Consider a client $\Client$ requesting the execution of some transaction $\T$. In traditional systems, replicas will \emph{execute} $\T$ as the $\rn$-th transaction and inform $\Client$ of the outcome \emph{after they decided} upon $\T$ (using consensus) as the $\rn$-th decision and after executing all transactions decided upon by preceding decisions.

\begin{example}\label{ex:pbft}
Consider a deployment of the \PBFT{} consensus protocol~\cite{mc_2021,pbftj}. Under normal conditions, \PBFT{} operates via a \emph{primary-backup} design in which a designated replica (the primary) is responsible for proposing client transactions to all other replicas (the backups). The primary does so via a \MName{PrePrepare} message. Next, all replicas exchange their local state to determine whether the primary properly proposed a decision. To do so, all replicas participate in two phases of all-to-all communication.

In the \emph{first phase}, all (non-faulty) replicas that receive a proposal via \MName{PrePrepare} message $m$ broadcast a message \Message{Prepare}{m}. Then each replica $\Replica$ waits until it receives \MName{Prepare} messages identical to \Message{Prepare}{m} from at-least $\nf$ distinct replicas. After receiving these $\nf$ messages, the proposal $m$ is \emph{prepared}.
    
Notice that at-least $\nf - \f = \n-2\f$ \MName{Prepare} messages received by $\Replica$ are sent by non-faulty replicas. Hence, there are at-most $\f$ non-faulty replicas that did not participate in preparing $m$. As such, for any other \MName{PrePrepare} message $m$, replicas will only be able to collect up-to $2\f < \nf$ \MName{Prepare} messages, guaranteeing that only the message $m$ will be prepared at non-faulty replicas.

In the \emph{second phase}, all (non-faulty) replicas that prepared $m$ broadcast a message \Message{Commit}{m}. Then each replica $\Replica$ waits until it receives \MName{Commit} messages identical to \Message{Commit}{m} from at-least $\nf$ distinct replicas. After receiving these $\nf$ messages, the proposal $m$ is \emph{committed}, after which $\Replica$ decides $m$ (and, hence, executes the client transaction proposed by $m$).

In \PBFT{}, a replica $\Replica$ \emph{commits} $m$ when it has a guarantee that $m$ can always be recovered from the state of at-most $\nf$ replicas (e.g., all non-faulty replicas). To see this, consider the at-least $\nf$ \MName{Commit} messages received by $\Replica$ due to which $\Replica$ commits $m$. Of these messages, at-least $\nf - \f = \n-2\f$ are sent by non-faulty replicas. Now consider \emph{any replica} $\Replica[q]$ trying to recover based on the state of any set $C$ of at-least $\nf$ replicas. At-least $\nf - \f = \n-2\f$ of the replicas in $C$ are non-faulty replicas. Let $T = S - \Faulty$ be the non-faulty replicas in $S$ and let $D = C - \Faulty$ be the non-faulty replicas in $C$. We have $\abs{T} > \n-2\f$ and $\abs{D} > \n-2\f$. As we assumed $\n > 3\f$, we must have $(T \intersect D) \neq \emptyset$ (as otherwise, $\abs{T} + \abs{D} \geq 2(\n-2\f)$ and $\abs{T} + \abs{D} \leq \nf = \n-\f$ must hold, which would imply $\n \leq 3\f$). As such, $\Replica[q]$ will be able to recover $m$ from the state of any replica in $T \intersect D$. Hence, after a non-faulty replica commits $m$, there is the guarantee that $m$ can be recovered from the transactions prepared by any set of $\nf$ replicas.

We have sketched this working of \PBFT{} in Figure~\ref{fig:pbft}. Besides the normal-case operations of \PBFT{} outlined above, \PBFT{} also has two recovery mechanisms to recover from primary failures and network failures, namely a checkpoint protocol and a view-change protocol. Crucially, these recovery mechanisms assure that \emph{all} transactions that are ever decided (committed) by a non-faulty replica will eventually be recovered and committed by all non-faulty replicas whenever communication becomes sufficiently reliable.
\begin{figure}[t!]
    \centering
    \begin{tikzpicture}[yscale=0.5,xscale=0.75]
        \draw[thick,draw=black!75] (1.75,   0) edge ++(6.5, 0)
                                   (1.75,   1) edge ++(6.5, 0)
                                   (1.75,   2) edge ++(6.5, 0)
                                   (1.75,   3) edge[blue!50!black!90] ++(6.5, 0);

        \draw[thin,draw=black!75] (2, 0) edge ++(0, 3)
                                  (4,   0) edge ++(0, 3)
                                  (6, 0) edge ++(0, 3)
                                  (8,   0) edge ++(0, 3);

        \node[left] at (1.8, 0) {$\Replica_3$};
        \node[left] at (1.8, 1) {$\Replica_2$};
        \node[left] at (1.8, 2) {$\Replica_1$};
        \node[left] at (1.8, 3) {$\Replica[p]$};

        \path[->] (2, 3) edge (4, 2)
                         edge (4, 1)
                         edge (4, 0)
                           
                  (4, 2) edge (6, 0)
                         edge (6, 1)
                         edge (6, 3)

                  (4, 1) edge (6, 0)
                         edge (6, 2)
                         edge (6, 3)
                         
                  (6, 2) edge (8, 0)
                         edge (8, 1)
                         edge (8, 3)

                  (6, 1) edge (8, 0)
                         edge (8, 2)
                         edge (8, 3);
        
        \node[dot,colA] at (8, 0) {};
        \node[dot,colA] at (8, 1) {};
        \node[dot,colA] at (8, 2) {};
        \node[dot,colA] at (8, 3) {};
        
        \path (8, 3) edge[thick,colA] (8, -1.3);
        \node[label,below right,align=left] at (8, 0) {Decide $\T$\\Execute $\T$};

        \node[label,below,yshift=3pt] at (3, 0) {\vphantom{Decide $\T$}\MName{PrePrepare}};
        \node[label,below,yshift=3pt] at (5, 0) {\vphantom{Decide $\T$}\MName{Prepare}};
        \node[label,below,yshift=3pt] at (7, 0) {\vphantom{Decide $\T$}\MName{Commit}};
    \end{tikzpicture}
    \caption{A schematic representation of the normal-case of \PBFT{}: the primary $\Replica[p]$ proposes transaction $\T$ to all replicas via a \MName{PrePrepare} message. Next, replicas commit to $\T$ via a two-phase all-to-all message exchange. In this example, replica $\Replica_3$ is faulty and does not participate.}\label{fig:pbft}
\end{figure}
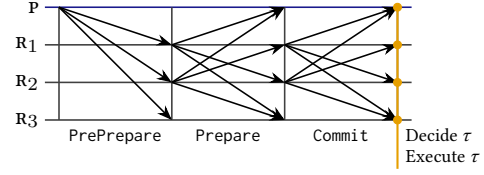
\end{example}

We assume that execution of $\T$ is deterministic: executing $\T$ at any non-faulty replica yields identical outputs when executed upon identical inputs. Using this deterministic nature, the consensus-coordinated replication and execution of transactions will implement a fully-replicated system in which all non-faulty replicas will manage a copy of the same data. Under these assumptions, it is straightforward to deliver \emph{client services}:
\begin{enumerate}
\item To assure that client transactions get executed, a client simply needs to send their transaction $\T$ to any non-faulty replica (whom can then use the non-triviality property of consensus to force a decision on $\T$).
\item To observe the result of execution of a transaction $\T$, the client simply awaits until it receives a single response by any non-faulty replica. As non-faulty replicas execute after they decided on $\T$, non-divergence and termination guarantee that \emph{all} non-faulty replicas will eventually do the same.
\end{enumerate}
When using traditional execution, a client can effectively \emph{detect} whether it received a single response by any non-faulty replica after it received identical responses from at-least $\f+1$ distinct replicas, as at-most $\f$ of those responses can originate from faulty replicas. Using the assumption that execution is deterministic and that all $\n - \f > \f$ non-faulty replicas will eventually execute $\T$, all non-faulty replicas will eventually inform the client with the same identical result. Hence, independent of the behavior of faulty replicas, the client will receive at-least $\f+1$ identical responsed and is able to reliably derive execution results.

\paragraph{Speculative Execution}
Traditional execution assures that it is easy to reason about the operations of a system, both for its replicas (which have strong guarantees during execution) and for the clients (whom can easily derive execution results). This ease-of-use comes at a significant cost: by only executing transactions until replicas are able to make consensus decisions, we significantly delay the latencies clients will observe on their requests, even when the system is operating entirely correctly.

\begin{example}
Consider again the operations of \PBFT{} with traditional execution, as outlined in Example~\ref{ex:pbft}. Let $\T$ be a transaction requested by some client and let $\delta$ be the message delay. At $t$, the primary receives $\T$ and is able to propose $\T$. Assuming that the bandwidth and processing time required to send, receive, and process a message is negligible, these \MName{PrePrepare} proposals will arrive after $t + \delta$ at all other replicas, whom then all broadcast \MName{Prepare} messages. All these \MName{Prepare} messages will arrive after $t + 2\delta$. Only then are all replicas able to broadcast \MName{Commit} messages, which will arrive after $t + 3\delta$. Hence, execution will only happen $3\delta$ after the initial proposal, and only after execution will the client be notified of any outcome.

Modern variants of \PBFT{} such as \SBFT{}~\cite{sbft} and \HS{}~\cite{hotstuff} use \emph{threshold signatures} to replace some phases of all-to-all communication with a quadratic message complexity (e.g., the prepare-phase and the commit-phase), to subphases of all-to-one and one-to-all communication with a linear message complexity each. Such implementations typically trade computational complexity and latency for lower communication costs, and will result in \PBFT{} variants with much higher client latencies. E.g., execution in \SBFT{} happens after $4\delta$ and execution in \HS{} happens after $7\delta$. 

The many phases before execution in these consensus protocols is especially noticeable in practical deployments of consensus: to maximize resilience against disruptions at any location, individual replicas need to be spread out over a wide-area network. Due to this spread-out nature, the message delay will be high and a message delay of $\SI{15}{\milli\second} \leq \delta \leq \SI{200}{\milli\second}$ is not uncommon~\cite{vldb_2020,ahl}.
\end{example}

As an alternative to traditional execution, we propose \emph{speculative execution}: replicas will \emph{execute} $\T$ as the $\rn$-th transaction and inform $\Client$ of the outcome \emph{before they decided} upon $\T$ as the $\rn$-th decision (but still after executing all preceding transactions). As replicas execute transactions before a final consensus decision is made, this introduces two challenges:
\begin{enumerate}
    \item A non-faulty replica $\Replica$ can execute $\T$ as the $\rn$-th transaction only to later make an $\rn$-th decision for another transaction $\T'$. In this case, $\Replica$ needs a way to \emph{rollback} the execution of $\T$ and replace it with an execution of $\T'$.
    \item As non-faulty replicas can rollback their execution, clients can no longer observe the result of execution of a transaction $\T$ via a single response of any non-faulty replica (or $\f+1$ identical responses).
\end{enumerate}
Even with these challenges, speculative execution is worthwhile: when the system operates correctly, speculative execution can greatly reduce the latency clients perceive upon their requests, especially in systems utilizing threshold signatures. In the next section, we introduce the \emph{Proof-of-Execution consensus protocol} (\PoE{}) that utilizes speculative execution and shows methods to overcome both these challenges.

\section{Consensus via Proof-of-Execution}\label{sec:poe}
The \emph{Proof-of-Execution consensus protocol} (\PoE{}) shares the primary-backup design of \PBFT{} and utilizes speculative execution to minimize client latencies in the normal case (when the primary is non-faulty and communication is reliable). To simplify presentation, we will present a non-optimized version of \PoE{}, after which we take an in-depth look at optimizing the complexity of \PoE{} in Section~\ref{sec:complexity}.  

Our presentation of \PoE{} is broken-up in six parts. First, in Section~\ref{ss:poe_nc}, we describe the \emph{normal-case protocol} that is used by the primary to propose consensus decision. Next, in Section~\ref{ss:poe_failure}, we look at how replica and network failures can disrupt the normal-case protocol. Third, in Section~\ref{ss:poe_cc}, we describe the \emph{check-commit protocol} to deal with failures that do not disrupt the normal-case. Then, in Section~\ref{ss:poe_vc}, we describe the \emph{view-change protocol} to deal with failures that disrupt the normal-case. After presenting these three protocols in full detail, we will prove in Section~\ref{ss:poe_proofs} that \PoE{} provides \emph{weak consensus} and we will prove in Section~\ref{ss:poe_client} that \PoE{} provides reliable service to clients.

\subsection{The Normal-Case Protocol}\label{ss:poe_nc}

\PoE{} operates in views and in view $\View$ the replica $\Primary$ with $\ID{\Primary} = \View \bmod \n$ is the \emph{primary} that coordinates the normal-case protocol. 

Consider a client $\Client$ that wants to request transaction $\T$. For now, we assume that $\Client$ knows that $\Primary$ is the current primary, we refer to Section~\ref{ss:poe_client} for the case in which the primary is unknown to $\Client$. To prevent any party to forge requests by client $\Client$, the client $\Client$ signs any transactions it wants to request before sending them to the current primary. Hence, to request $\T$, the client will send $\Cert{\T}{\Client}$ to primary $\Primary$.

After primary $\Primary$ receives $\Cert{\T}{\Client}$, a transaction $\T$ requested and signed by client $\Client$, it can propose $\T$. To \emph{propose $\T$ as the $\rn$-th transaction}, the primary broadcasts a message \Message{Propose}{\Cert{\T}{\Client}, \View, \rn} to all other replicas.

 Next, all replicas exchange their local state to determine whether the primary properly proposed a decision. In \PoE{}, the replicas do so in \emph{one} phase of all-to-all communication. Upon arrival of the \emph{first} proposal for round $\rn$ of view $\View$ via some \MName{Propose} message $m = \Message{Propose}{\Cert{\T}{\Client}, \View, \rn}$, each (non-faulty) replica $\Replica$ that received $m$ will enter the \emph{prepare phase} for $m$. As the first step of the prepare phase, $\Replica$ broadcasts a message \Message{Prepare}{m}. Next, $\Replica$ waits until it receives \MName{Prepare} messages identical to \Message{Propose}{m} from at-least $\nf$ distinct replicas. After receiving these $\nf$ messages, the proposal $m$ is \emph{prepared}.  As a proof of this prepared state for $m$, $\Replica$ stores a \emph{prepared certificate} $\PC{m}$ consisting of these $\nf$ \MName{Prepare} messages.

When replica $\Replica$ prepared proposal $m$, it schedules $\T$ for speculative execution. Whenever all preceding transactions are already executed, $\T$ will be executed by $\Replica$ yielding some result $r$. Next, $\Replica$ will inform the client $\Client$ via an \Message{Inform}{\Cert{\T}{\Client}, \View, \rn, r} of the outcome. Finally, client $\Client$ waits for a \emph{proof-of-execution} for $\Cert{\T}{\Client}$ consisting of identical \Message{Inform}{\Cert{\T}{\Client}, \View, \rn, r} messages from $\nf$ distinct replicas. When $\Client$ receives this proof-of-execution, it considers $\T$ executed. As we shall prove later on, the existence of this $(\View, \rn)$-proof-of-execution for $\Cert{\T}{\Client}$ guarantees that the \emph{speculative} execution of $\T$ will be preserved by all replicas (and will never rollback).

The pseudo-code for this normal-case protocol can be found in Figure~\ref{fig:poe_nc} and an illustration of the working of this protocol can be found in Figure~\ref{fig:poe_nc_ill}.

\begin{figure}[t]
    \begin{myprotocol}
        \TITLE{Client role}{used by client $\Client$ to request transaction $\T$}
        \STATE Send $\Cert{\T}{\Client}$ to the primary $\Primary$.
        \STATE Await a \emph{$(\View, \rn)$-proof-of-execution for $\Cert{\T}{\Client}$} consisting of identical messages \Message{Inform}{\Cert{\T}{\Client}, \View, \rn, r} from $\nf$ distinct replicas.
        \STATE Considers $\T$ executed, with result $r$, as the $\rn$-th transaction.\label{fig:poe_nc:cc}
        \SPACE
        \TITLE{Primary role}{running at the primary $\Primary$ of view $\View$}
        \STATE Let view $\View$ start after execution of the $\rn$-th transaction.
        \WHILE{$\Primary$ is the primary}
            \STATE Await receipt of well-formed client requests $\Cert{\T}{\Client}$.
            \STATE Broadcast \Message{Propose}{\Cert{\T}{\Client}, \View, \rn} to all replicas.\label{fig:poe_nc:propose}
            \STATE $\rn \GETS \rn + 1$.
        \ENDWHILE
        \SPACE
        \TITLE{Backup role}{running at every replica $\Replica \in \Replicas$}\label{fig:poe_nc:pa_backup}
        \EVENT{$\Replica$ receives message $m = \Message{Propose}{\Cert{\T}{\Client}, \View, \rn}$ such that:
            \begin{algenumerate}
                \item $\View$ is the current view;
                \item $m$ is signed by the primary of view $\View$;
                \item $\Replica$ did not prepare a $\rn$-th proposal in view $\View$; and
                \item $\Cert{\T}{\Client}$ is a well-formed client request
            \end{algenumerate}
        }\label{fig:poe_nc:sup}
            \STATE Prepare $m$ as the proposal for round $\rn$ in view $\View$.
            \STATE Broadcast \Message{Prepare}{m} to all replicas.\label{fig:poe_nc:prepare}
        \ENDEVENT
        \EVENT{$\Replica$ receives $\nf$ messages $m_i = \Message{Prepare}{m}$ such that:
            \begin{algenumerate}
                \item $\View$ is the current view;
                \item $m$ is a well-formed proposal \Message{Propose}{\Cert{\T}{\Client}, \View, \rn};
                \item each message $m_i$ is signed by a distinct replica; and
                \item $\Replica$ started the prepare phase for $m$
            \end{algenumerate}
        }\label{fig:poe_nc:prepared}
            \STATE Wait until execution of all rounds preceding $\rn$.\label{fig:poe_nc:wait}
            \STATE Store \emph{prepared certificate} $\PC{m} = \{ m_i \mid 1 \leq i \leq \nf \}$.\label{fig:poe_nc:store}
            \STATE Execute $\T$ as the $\rn$-th transaction, yielding result $r$.\label{fig:poe_nc:execute}
            \STATE Send \Message{Inform}{\Cert{\T}{\Client}, \View, \rn, r} to $\Client$.\label{fig:poe_nc:inform}
        \ENDEVENT
    \end{myprotocol}
    \caption{The normal-case protocol in \PoE{}.}\label{fig:poe_nc}
\end{figure}

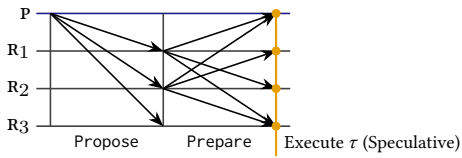
\begin{figure}[t!]
    \centering
    \begin{tikzpicture}[yscale=0.5,xscale=0.75]
        \draw[thick,draw=black!75] (1.75,   0) edge ++(4.5, 0)
                                   (1.75,   1) edge ++(4.5, 0)
                                   (1.75,   2) edge ++(4.5, 0)
                                   (1.75,   3) edge[blue!50!black!90] ++(4.5, 0);

        \draw[thin,draw=black!75] (2, 0) edge ++(0, 3)
                                  (4, 0) edge ++(0, 3)
                                  (6, 0) edge ++(0, 3);

        \node[left] at (1.8, 0) {$\Replica_3$};
        \node[left] at (1.8, 1) {$\Replica_2$};
        \node[left] at (1.8, 2) {$\Replica_1$};
        \node[left] at (1.8, 3) {$\Replica[p]$};

        \path[->] (2, 3) edge (4, 2)
                         edge (4, 1)
                         edge (4, 0)
                           
                  (4, 2) edge (6, 0)
                         edge (6, 1)
                         edge (6, 3)

                  (4, 1) edge (6, 0)
                         edge (6, 2)
                         edge (6, 3);
                                 
        \node[dot,colA] at (6, 0) {};
        \node[dot,colA] at (6, 1) {};
        \node[dot,colA] at (6, 2) {};
        \node[dot,colA] at (6, 3) {};
        
        \path (6, 3) edge[thick,colA] (6, -0.8);
        \node[label,below right,align=left] at (6, 0) {Execute $\T$ (Speculative)\\\phantom{Decide $\T$}};

        \node[label,below,yshift=3pt] at (3, 0) {\vphantom{Execute $\T$}\MName{Propose}};
        \node[label,below,yshift=3pt] at (5, 0) {\vphantom{Execute $\T$}\MName{Prepare}};
    \end{tikzpicture}
    \caption{A schematic representation of the normal-case protocol of \PoE{}: the primary $\Replica[p]$ proposes transaction $\T$ to all replicas via a \MName{Propose} message. Next, replicas prepare $\T$ via a one-phase all-to-all message exchange. Notice that replicas do not explicitly \emph{decide} on $\T$ in the normal case, but do \emph{execute} $\T$. In this example, replica $\Replica_3$ is faulty and does not participate.}\label{fig:poe_nc_ill}
\end{figure}

The correctness of \PoE{} is based on the following properties of the normal-case protocol:

\begin{theorem}\label{thm:main_nc}
Round $\rn$ of view $\View$ of the normal-case protocol of \PoE{} satisfies the following two properties.
\begin{enumerate}
\item\label{thm:main_nc:fst} If non-faulty replicas $\Replica_i$, $i \in \{1, 2\}$, prepared proposals $m_i = \Message{PrePrepare}{\Cert{\T_i}{\Client_i}, \View, \rn}$, then $m_1 = m_2$. 
\item\label{thm:main_nc:snd}  If a non-faulty primary $\Primary$ proposes $m = \Message{Propose}{\Cert{\T}{\Client}, \View, \rn}$, communication is reliable, transaction execution is deterministic, and all non-faulty replicas executed the same sequence of $\rn-1$ transactions, then $\Client$ will receive a $(\View, \rn)$-proof-of-execution for $\Cert{\T}{\Client}$.
\end{enumerate}
\end{theorem}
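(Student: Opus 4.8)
The plan is to prove the two parts independently; both are routine once the event conditions of Figure~\ref{fig:poe_nc} are read carefully.

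For part~\ref{thm:main_nc:fst}, I would use the standard $\n > 3\f$ quorum-intersection argument. Suppose non-faulty $\Replica_i$ ($i \in \{1,2\}$) prepared $m_i$ in round $\rn$ of view $\View$; by \lfref{fig:poe_nc}{prepared} it then holds a prepared certificate, i.e. it received \Message{Prepare}{m_i} from a set $S_i$ of $\nf$ distinct replicas. At most $\f$ members of $S_i$ are faulty, so $T_i := S_i \setminus \Faulty$ has $\abs{T_i} \geq \nf - \f$. Both $T_1$ and $T_2$ are subsets of the $\nf$ non-faulty replicas, so if $T_1 \intersect T_2 = \emptyset$ then $\nf \geq \abs{T_1} + \abs{T_2} \geq 2(\nf - \f)$, i.e. $\nf \leq 2\f$, contradicting $\n > 3\f$. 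Hence some non-faulty $\Replica[q]$ lies in $T_1 \intersect T_2$ and therefore broadcast both \Message{Prepare}{m_1} and \Message{Prepare}{m_2}. But a non-faulty replica broadcasts a \MName{Prepare} message for round $\rn$ of view $\View$ only upon entering the prepare phase at \lfref{fig:poe_nc}{sup}, and condition~(3) there (``$\Replica$ did not prepare a $\rn$-th proposal in view $\View$'') lets it do so for at most one proposal. Thus $m_1 = m_2$.

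For part~\ref{thm:main_nc:snd}, I would simply trace the protocol forward under the stated assumptions (in the normal-case setting all non-faulty replicas are in view $\View$, so the ``current view'' conditions hold). Since $\Primary$ is non-faulty and communication is reliable, all $\nf$ non-faulty replicas receive $m = \Message{Propose}{\Cert{\T}{\Client}, \View, \rn}$; moreover, by authenticated communication no replica can forge $\Primary$'s signature, so $m$ is the \emph{only} message that can satisfy condition~(2) of \lfref{fig:poe_nc}{sup} for round $\rn$ of view $\View$, and in particular no non-faulty replica can have latched onto any other $\rn$-th proposal. Hence every non-faulty replica, if it has not already, enters the prepare phase for $m$ and broadcasts \Message{Prepare}{m}. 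By reliable communication each non-faulty replica then receives these $\nf$ messages from $\nf$ distinct replicas and prepares $m$ via \lfref{fig:poe_nc}{prepared}. Because all non-faulty replicas executed the same $\rn - 1$ transactions, the wait at \lfref{fig:poe_nc}{wait} completes everywhere; by determinism of execution each non-faulty replica then executes $\T$ from the same state, obtains the same result $r$, and sends \Message{Inform}{\Cert{\T}{\Client}, \View, \rn, r} to $\Client$. Reliable communication delivers these $\nf$ identical \MName{Inform} messages from $\nf$ distinct replicas, which is exactly a $(\View, \rn)$-proof-of-execution for $\Cert{\T}{\Client}$.

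I do not anticipate a genuine obstacle: part~\ref{thm:main_nc:fst} is the classical single-quorum intersection bound, and part~\ref{thm:main_nc:snd} is a forward simulation of the normal case. The only points demanding care are (a) using authenticated communication to rule out a non-faulty replica ever committing to a spurious $\rn$-th proposal in view $\View$ --- this is what makes condition~(3) of \lfref{fig:poe_nc}{sup} behave as ``at most one \MName{Prepare} per round and view'' --- and (b) explicitly invoking determinism of execution together with the ``same $\rn-1$ transactions'' hypothesis to conclude that every non-faulty replica produces the identical result $r$, so that the \MName{Inform} messages really are identical and form a proof-of-execution.
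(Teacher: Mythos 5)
Your proposal is correct and follows essentially the same route as the paper's own proof: part~\ref{thm:main_nc:fst} is the same $\n > 3\f$ quorum-intersection counting argument (you phrase it directly via a non-faulty replica in $T_1 \intersect T_2$, the paper phrases it as a contradiction from disjointness, but these are contrapositives of the identical bound $2(\nf-\f) \leq \nf$), and part~\ref{thm:main_nc:snd} is the same forward trace of \lfref{fig:poe_nc}{propose}--\ref{fig:poe_nc:inform} under reliable communication and deterministic execution. Your added remark that authenticated communication prevents any non-faulty replica from latching onto a forged $\rn$-th proposal is a sound justification of a step the paper asserts without comment.
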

\begin{proof} We prove the two statements separately.

We prove the first statement by contradiction. Assume that non-faulty replicas $\Replica_1$ and $\Replica_2$ prepared proposals $m_1$ and $m_2$ with $m_1 \neq m_2$. As $\Replica_i$, $i \in \{1,2\}$, prepared proposal $m_i$, it must have received messages \Message{Prepare}{m_i} from $\nf$ distinct replicas (\lfref{fig:poe_nc}{prepared}). Let $S_i$  be the set of $\nf$ replicas from which $\Replica_i$ received these messages and let $T_i = S_i \difference \Faulty$ be the non-faulty replicas in $S_i$. By construction, we have $\abs{T_i} \geq \nf - \f$. As each non-faulty replica only sends \MName{prepare} messages for a \emph{single} proposal in round $\rn$ of view $\View$ (\lfref{fig:poe_nc}{sup}), $m_1 \neq m_2$ implies that $S_1 \intersect S_2 = \emptyset$. Hence, we must have $\abs{S_1 \union S_2} = \abs{S_1} + \abs{S_2} \geq 2(\nf - \f)$. As all replicas in $S_1 \union S_2$ are non-faulty, we must also have $\abs{S_1 \union S_2} \leq \nf$. Hence, we must have $2(\nf - \f) \leq \nf$, which implies $\nf \leq 2\f$. As $\n = \nf + \f$, this implies $\n \leq 3\f$, a contradiction. Consequently, we conclude that $m_1 = m_2$ must hold.

Next, we prove the second statement. A non-faulty primary $\Primary$ will broadcast $m$ to all non-faulty replicas (\lfref{fig:poe_nc}{propose}). As communication is reliable, all $\nf$ non-faulty replicas will receive $m$ as the first proposal of round $\rn$ of view $\View$ (\lfref{fig:poe_nc}{sup}) and broadcast a message \Message{Prepare}{m}. As communication is reliable, all $\nf$ non-faulty replicas will receive \Message{Prepare}{m} from these $\nf$ non-faulty replicas (\lfref{fig:poe_nc}{prepared}) and execute $\T$ (\lfref{fig:poe_nc}{execute}). As all non-faulty replicas executed the same sequence of $\rn-1$ transactions before executing $\T$, each non-faulty replica has the same state before executing $\T$. Consequently, due to deterministic execution of $\T$, all non-faulty replicas will obtain the same result $r$ from execution of $\T$ and send the same \Message{Inform}{\Cert{\T}{\Client}, \View, \rn, r} to $\Client$ (\lfref{fig:poe_nc}{inform}).
\end{proof}

\PoE{} only requires \emph{two} phases of communication before execution commences, which is one less phase than \PBFT{}. Consequently, the processing time of transactions within \PoE{} is sharply reduced from at-least $3\delta$ to $2\delta$, which will also reduce the latency clients perceive. Finally, the elimination of a phase of communication eliminates one round of messages, reducing the bandwidth cost for the normal-case of \PoE{} (and, hence, allowing for an increase in throughput). As shown in Theorem~\subref{thm:main_nc}{snd}, the elimination of a phase in \PoE{} does \emph{not} affect the service that is provided under normal conditions: clients still have a strong guarantee of service whenever the primary is non-faulty and communication is reliable.

The normal-case protocol of \PoE{} provides only few guarantees, however. If non-faulty replica $\Replica$ speculatively executes some transaction $\T$ proposed via $m = \Message{Propose}{\Cert{\T}{\Client}, \View, \rn}$, then, due to Theorem~\subref{thm:main_nc}{fst}, $\Replica$ \emph{only} has the guarantee that no other transaction $\T'$ is executed in round $\rn$ of $\View$. In specific, $\Replica$ has no guarantees that any other non-faulty replicas prepared $m$ or executed $\T$ and has no guarantees that $\Client$ received a $(\View, \rn)$-proof-of-execution for $\Cert{\T}{\Client}$. As we shall see in Section~\ref{ss:poe_failure} and Section~\ref{ss:poe_vc}, a subsequent failure of view $\View$ can easily lead to a situation in which the proposal $m$ is \emph{not preserved}, due to which $\Replica$ needs to rollback $\T$. As we shall show in the remainder of Section~\ref{sec:poe}, the design of \PoE{} does provide the strong guarantee that $m$ will always be guaranteed \emph{if} a proof-of-execution for $m$ \emph{could} be received by client $\Client$. To provide this strong guarantee (and to deal with certain kinds of failures), non-faulty replicas rely on the check-commit protocol of Section~\ref{ss:poe_cc} and the view-change protocol of Section~\ref{ss:poe_vc}. Next, we look at the operations of \PoE{} during failures: we look at how \PoE{} deals with faulty (and possibly malicious) primaries and how \PoE{} recovers from periods of unreliable communication.

\subsection{Failure of the Normal-Case Protocol}\label{ss:poe_failure}

The normal-case protocol described in Section~\ref{ss:poe_nc} is designed to efficiently make consensus decisions and provide clients with proof-of-executions when operating under normal conditions. If the normal conditions are not met, then the normal-case protocol can fail in several ways, each following directly from the conditions stipulated in Theorem~\subref{thm:main_nc}{snd}:

\begin{example}\label{ex:failure}
Consider round $\rn$ of view $\View$ in a deployment of \PoE{}. The normal-case protocol of \PoE{} can be disrupted in round $\rn$ in the following ways:
\begin{enumerate}
    \item Any \emph{non-primary faulty replica} can behave malicious by not participating or by sending invalid messages. Under normal conditions, this will not disrupt the normal-case protocol, however, as Theorem~\subref{thm:main_nc}{snd} does not depend on the behavior of any faulty replicas (that are not the primary).
    \item\label{ex:failure:prim} A \emph{malicious primary} can choose to send different proposals for round $\rn$ of view $\View$ to different non-faulty replicas or can choose to send no proposals to some non-faulty replicas. Due to Theorem~\subref{thm:main_nc}{fst}, at-most one proposal will be prepared in round $\rn$ of view $\View$, this independent of the behavior of any faulty replicas. Hence, this malicious behavior can have only two outcomes:
    \begin{enumerate}
        \item\label{ex:failure:prim:dark}  A proposal $m = \Message{Propose}{\Cert{\T}{\Client}, \View, \rn}$ will be prepared by some non-faulty replicas. The non-faulty replicas that did not receive $m$ are \emph{left in the dark}, as they will not be able to prepare in round $\rn$ of view $\View$ and, consequently, are stuck. In this case, the client can still receive a \emph{proof-of-execution}: if at-least $\nf - \f$ non-faulty replicas prepared $m$, then the $\f$ faulty replicas can choose to also execute $\T$ and send \MName{Inform} messages to the client, this to assure the client receives $\nf$ identical \MName{Inform} messages.
        \item\label{ex:failure:prim:none} No proposal will be prepared by non-faulty replicas, in which case the primary \emph{disrupts the progress} of the normal-case protocol and, consequently, prevents any transactions from being executed.
    \end{enumerate}
    \item Due to \emph{unreliable communication}, messages can get lost (or arbitrarily delayed due to which receiving replicas consider them lost). Consequently, unreliable communication can prevent replicas from receiving proposals even from non-faulty primaries. Furthermore, unreliable communication can prevent replicas from receiving sufficient \MName{Prepare} messages to finish their prepare phases.
\end{enumerate}
\end{example}

Using digital signatures, it is rather straightforward to detect that a primary is sending conflicting proposals for a given round $\rn$ of view $\View$, as non-faulty replicas forward these conflicting proposals to each other during the prepare phase. Unfortunately, this  does not extend to other malicious behavior, as we shall show next.

\begin{example}\label{ex:distin}
Consider a system with $\Replicas = \{ \Primary, \Replica_1, \Replica_2, \Replica_3 \}$ such that $\Primary$ is the current primary of some view $\View$. We consider the following three cases:
\begin{enumerate}
    \item The primary $\Primary$ is faulty and does \emph{not} send any proposal to $\Replica_3$. Hence, eventually $\Replica_3$  detects a primary failure. To alert all other replicas of this failure, $\Replica_3$ broadcasts message \Message{Failure}{\View}.
    \item The primary $\Primary$ is non-faulty and sends a proposal to $\Replica_3$. Unfortunately, $\Replica_3$ is faulty and pretends that the primary failed to propose. Consequently, $\Replica_3$ broadcasts message \Message{Failure}{\View}.
    \item The primary $\Primary$ is non-faulty and sends a proposal to $\Replica_3$. Unfortunately, communication is unreliable and this proposal is lost. This loss is interpreted by $\Replica_3$ as a failure of the primary to propose. To alert all other replicas of this failure, $\Replica_3$ broadcasts message \Message{Failure}{\View}.
\end{enumerate}
We have sketched these three cases in Figure~\ref{fig:poe_nc_failures}. As one can see, the replicas $\Replica_1$ and $\Replica_2$ receive the \emph{exact same set of messages} in all three cases and, hence, observe identical behavior and cannot distinguish between the three cases.
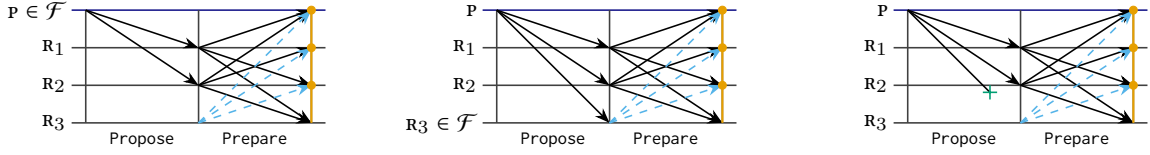
\begin{figure*}[t!]
    \centering
    \begin{tikzpicture}[yscale=0.5,xscale=0.75]
        \draw[thick,draw=black!75] (1.75,   0) edge ++(4.5, 0)
                                   (1.75,   1) edge ++(4.5, 0)
                                   (1.75,   2) edge ++(4.5, 0)
                                   (1.75,   3) edge[blue!50!black!90] ++(4.5, 0);

        \draw[thin,draw=black!75] (2, 0) edge ++(0, 3)
                                  (4,   0) edge ++(0, 3)
                                  (6, 0) edge ++(0, 3);

        \node[left] at (1.8, 0) {$\Replica_3$};
        \node[left] at (1.8, 1) {$\Replica_2$};
        \node[left] at (1.8, 2) {$\Replica_1$};
        \node[left] at (1.8, 3) {$\Replica[p] \in \Faulty$};

        \path[->] (2, 3) edge (4, 2)
                         edge (4, 1)
                           
                  (4, 2) edge (6, 0)
                         edge (6, 1)
                         edge (6, 3)

                  (4, 1) edge (6, 0)
                         edge (6, 2)
                         edge (6, 3)
                           
                  (4, 0)   edge[colB,dashed] (6, 1)
                           edge[colB,dashed] (6, 2)
                           edge[colB,dashed] (6, 3);
                                 
        \node[dot,colA] at (6, 1) {};
        \node[dot,colA] at (6, 2) {};
        \node[dot,colA] at (6, 3) {};
        
        \path (6, 3) edge[thick,colA] (6, -0);

        \node[label,below,yshift=3pt] at (3, 0) {\vphantom{Execute $\T$}\MName{Propose}};
        \node[label,below,yshift=3pt] at (5, 0) {\vphantom{Execute $\T$}\MName{Prepare}};
    \end{tikzpicture}\qquad\quad
    \begin{tikzpicture}[yscale=0.5,xscale=0.75]
        \draw[thick,draw=black!75] (1.75,   0) edge ++(4.5, 0)
                                   (1.75,   1) edge ++(4.5, 0)
                                   (1.75,   2) edge ++(4.5, 0)
                                   (1.75,   3) edge[blue!50!black!90] ++(4.5, 0);

        \draw[thin,draw=black!75] (2, 0) edge ++(0, 3)
                                  (4, 0) edge ++(0, 3)
                                  (6, 0) edge ++(0, 3);

        \node[left] at (1.8, 0) {$\Replica_3 \in \Faulty$};
        \node[left] at (1.8, 1) {$\Replica_2$};
        \node[left] at (1.8, 2) {$\Replica_1$};
        \node[left] at (1.8, 3) {$\Replica[p]$};

        \path[->] (2, 3) edge (4, 2)
                         edge (4, 1)
                         edge (4, 0)
                           
                  (4, 2) edge (6, 0)
                         edge (6, 1)
                         edge (6, 3)

                  (4, 1) edge (6, 0)
                         edge (6, 2)
                         edge (6, 3)
                           
                  (4, 0)  edge[colB,dashed] (6, 1)
                          edge[colB,dashed] (6, 2)
                          edge[colB,dashed] (6, 3);
                                 
        \node[dot,colA] at (6, 1) {};
        \node[dot,colA] at (6, 2) {};
        \node[dot,colA] at (6, 3) {};
        
        \path (6, 3) edge[thick,colA] (6, -0);

        \node[label,below,yshift=3pt] at (3, 0) {\vphantom{Execute $\T$}\MName{Propose}};
        \node[label,below,yshift=3pt] at (5, 0) {\vphantom{Execute $\T$}\MName{Prepare}};
    \end{tikzpicture}\qquad\quad
    \begin{tikzpicture}[yscale=0.5,xscale=0.75]
        \draw[thick,draw=black!75] (1.75,   0) edge ++(4.5, 0)
                                   (1.75,   1) edge ++(4.5, 0)
                                   (1.75,   2) edge ++(4.5, 0)
                                   (1.75,   3) edge[blue!50!black!90] ++(4.5, 0);

        \draw[thin,draw=black!75] (2, 0) edge ++(0, 3)
                                  (4, 0) edge ++(0, 3)
                                  (6, 0) edge ++(0, 3);

        \node[left] at (1.8, 0) {$\Replica_3$};
        \node[left] at (1.8, 1) {$\Replica_2$};
        \node[left] at (1.8, 2) {$\Replica_1$};
        \node[left] at (1.8, 3) {$\Replica[p]$};

        \path[->] (2, 3) edge (4, 2)
                         edge (4, 1)
                         edge[shorten >=0.5cm,-{Rays[colC]}] (4, 0)
                           
                  (4, 2) edge (6, 0)
                         edge (6, 1)
                         edge (6, 3)

                  (4, 1) edge (6, 0)
                         edge (6, 2)
                         edge (6, 3)
                           
                  (4, 0)  edge[colB,dashed] (6, 1)
                          edge[colB,dashed] (6, 2)
                          edge[colB,dashed] (6, 3);
                                 
        \node[dot,colA] at (6, 1) {};
        \node[dot,colA] at (6, 2) {};
        \node[dot,colA] at (6, 3) {};
        
        \path (6, 3) edge[thick,colA] (6, -0);
        \node[left] at (1.8, 0) {\phantom{$\Replica_3 \in \Faulty$}};

        \node[label,below,yshift=3pt] at (3, 0) {\vphantom{Execute $\T$}\MName{Propose}};
        \node[label,below,yshift=3pt] at (5, 0) {\vphantom{Execute $\T$}\MName{Prepare}};
    \end{tikzpicture}
    \caption{A schematic representation of three failures in the normal-case protocol of \PoE{}. \emph{Left}, a faulty primary $\Primary$ does not propose to $\Replica_3$. \emph{Middle}, a faulty replica $\Replica_3$ pretends that the primary $\Primary$ did not propose. \emph{Right}, unreliable communication prevents the delivery of a proposal to $\Replica_3$. In all three cases, $\Replica_3$ alerts other replicas of failure via a \MName{Failure} message (dashed arrow), while the replicas $\Replica_1$ and $\Replica_2$ observe identical behavior.}\label{fig:poe_nc_failures}
\end{figure*}
\end{example}

As Example~\ref{ex:failure} illustrates, any disruption of the normal-case protocol of \PoE{} is caused by a faulty primary or by unreliable communication. If communication is unreliable, then there is no way to guarantee continuous service~\cite{flp}. Hence, replicas assume failure of the current primary if the normal-case protocol  is disrupted, while the design of \PoE{} guarantees that unreliable communication does not affect the correctness of \PoE{} and that the normal-case protocol of \PoE{} will be able to recover when communication becomes reliable.

As Example~\subref{ex:failure}{prim} illustrated, a faulty primary can cause two kinds of disruptions. First, the primary can leave \emph{replicas in the dark} without disrupting the progress of the normal-case protocol (Example~\subref{ex:failure}{prim:dark}) and we use the \emph{check-commit protocol} of Section~\ref{ss:poe_cc} to deal with such behavior. Second, the behavior of the primary can \emph{disrupt the progress} of the normal-case protocol (Example~\subref{ex:failure}{prim:none}) and we use the \emph{view-change protocol} of Section~\ref{ss:poe_vc} to deal with such behavior.

\subsection{The Check-Commit Protocol}\label{ss:poe_cc}

The main  purpose of the check-commit protocol is to \emph{commit} proposals $m = \Message{Propose}{\Cert{\T}{\Client}, \View, \rn}$, after which non-faulty replicas that speculatively executed $\T$ have the guarantee that $\T$ will never rollback. Furthermore, we use the check-commit protocol to assure that non-fauulty replicas cannot be left in the dark: the check-commit protocol assures that all non-faulty replicas receive prepared certificates for $m$ if any replica can commit $m$.

For non-faulty replicas that successfully prepared via the normal-case protocol of Section~\ref{ss:poe_nc}, the check-commit protocol operates in a \emph{single phase} of communication. Let $m = \Message{Propose}{\Cert{\T}{\Client}, \View, \rn}$ be some proposal. The non-faulty replica $\Replica$ uses the check-commit protocol for $m$ to indicate its willingness to commit to $m$ and to determine whether a sufficient number of other replicas are willing to commit $m$. Replica $\Replica$ is willing to commit to $m$ if the following conditions are met:
\begin{enumerate}
    \item $\Replica$ prepared $m$ and speculatively executed $\T$ and, hence, stored a prepared certificate for $m$ and informed the client;
    \item $\Replica$ committed in all rounds before round $\rn$;\footnote{To maximize throughput, one can choose for a design with out-of-order commit steps (in which round $\rn+1$ can commit before round $\rn$). We have omitted such a design in this presentation of \PoE{}, as it would complicate view-changes and substantially increase the complexity of the correctness proofs of \PoE{}.} and
    \item $\Replica$ is currently still in view $\View$.
\end{enumerate}
If these conditions are met, then $\Replica$ starts the process to commit to $m$ by broadcasting a message \Message{CheckCommit}{\PC{m}} with $\PC{m}$ the prepared certificate for $m$ stored by $\Replica$. Then each replica $\Replica$ waits until it receives well-formed \MName{CheckCommit} messages for proposal $m$ from at-least $\nf$ distinct replicas. After receiving these $\nf$ messages, the proposal $m$ is \emph{committed}. As a proof of this committed state for $m$, $\Replica$ stores a \emph{commit certificate} $\CC{m}$ consisting of these $\nf$ \MName{CheckCommit} messages.

Let $m = \Message{Propose}{\Cert{\T}{\Client}, \View, \rn}$ be a proposal. If any replica receives a well-formed message \Message{CheckCommit}{\PC{m}}, then it can use the provided prepared certificate $\PC{m}$ for $m$ to prepare $m$ and execute $\T$ (if it has not yet done so).\footnote{If digital signatures are not used on \MName{Prepare} messages, then prepared certificates cannot be reliably forwarded. In that case, a replica needs to receive $\nf - \f > \f$ identical \MName{CheckCommit} messages for proposal $m$ from distinct replicas before it can use the provided information to prepare $m$.  We refer to Section~\ref{ss:fine} and Section~\ref{ss:mac} for further details. We note that if communication is reliable, then a replica is guaranteed to receive $\nf -\f$ identical \MName{CheckCommit} messages \emph{unless} the behavior of the primary disrupts the progress of the normal-case protocol, in which case a view-change will happen (see Section~\ref{ss:poe_vc}).}

The pseudo-code for this check-commit protocol can be found in Figure~\ref{fig:poe_cc} and an illustration of the working of this protocol can be found in Figure~\ref{fig:poe_cc_ill}.

\begin{figure}[t]
    \begin{myprotocol}
        \TITLE{Check-commit role}{running at every replica $\Replica \in \Replicas$}
        \EVENT{$\Replica$ prepared $m = \Message{Propose}{\Cert{\T}{\Client}, \View, \rn}$ and executed $\T$}
            \STATE Wait until all previous rounds have a commit certificate.\label{fig:poe_cc:wait}
            \IF{$\View$ is the current view}
                \STATE Let $\PC{m}$ be the prepared certificate stored for $m$.
                \STATE Broadcast \Message{CheckCommit}{\PC{m}} to all replicas.\label{fig:poe_cc:broad}
            \ENDIF
        \ENDEVENT
        \EVENT{$\Replica$ receives a well-formed message \Message{CheckCommit}{\PC{m}}}\label{fig:poe_cc:prepare}
            \STATE Let $m = \Message{Propose}{\Cert{\T}{\Client}, \View', \rn'}$ and $\PC{m} = \{ m_1, \dots, m_{\nf} \}$.
            \IF{$\View = \View'$ and $\Replica$ did not prepare $m$}
                \STATE Prepare $m$ and execute $\T$ using \lsfref{fig:poe_nc}{store}{inform} with \MName{Prepare} messages $m_1, \dots, m_{\nf}$.\label{fig:poe_cc:prepare_exec}
            \ENDIF
        \ENDEVENT
        \EVENT{$\Replica$ receives $\nf$ messages $m_i = \Message{CheckCommit}{\PC{m}_i}$ such that:
            \begin{algenumerate}
                \item $m$ is a well-formed proposal \Message{Propose}{\Cert{\T}{\Client}, \View, \rn};
                \item $\View$ is the current view;
                \item $\PC{m}_i$ is a well-formed prepared certificate for $m$;
                \item each message is signed by a distinct replica; and
                \item $\Replica$ prepared $m$
            \end{algenumerate}
        }\label{fig:poe_cc:cped}
            \STATE Wait until all previous rounds have a commit certificate.\label{fig:poe_cc:wait_store}
            \STATE Store \emph{commit certificate} $\CC{m} = \{ \PC{m}_i \mid 1 \leq i \leq \nf \}$.\label{fig:poe_cc:store}
        \ENDEVENT
    \end{myprotocol}
    \caption{The check-commit protocol in \PoE{}.}\label{fig:poe_cc}
\end{figure}

\begin{figure}[t!]
    \centering
    \begin{tikzpicture}[yscale=0.5,xscale=0.75]
        \draw[thick,draw=black!75] (1.75,   0) edge ++(4.7, 0)
                                   (1.75,   1) edge ++(4.7, 0)
                                   (1.75,   2) edge ++(4.7, 0)
                                   (1.75,   3) edge ++(4.7, 0);

        \draw[thin,draw=black!75] (2, 0) edge ++(0, 3)
                                  (4.2, 0) edge ++(0, 3)
                                  (6.2, 0) edge ++(0, 3);

        \node[left] at (1.8, 0) {$\Replica_4$};
        \node[left] at (1.8, 1) {$\Replica_3$};
        \node[left] at (1.8, 2) {$\Replica_2$};
        \node[left] at (1.8, 3) {$\Replica_1$};

        \path[->] (2, 2) edge (4, 3)
                         edge (4, 1)
                         edge (4, 0)
                           
                  (2.2, 3) edge (4.2, 0)
                           edge (4.2, 1)
                           edge (4.2, 2)

                  (2.8, 1) edge (4.8, 0)
                           edge (4.8, 2)
                           edge (4.8, 3)

                  (4.2, 0) edge[dashed] (6.2, 1)
                           edge[dashed] (6.2, 2)
                           edge[dashed] (6.2, 3);

        \node[dot,colF] at (2.2, 3) {};
        \node[dot,colF] at (2, 2) {};
        \node[dot,colF] at (2.8, 1) {};
        \node[dot,colF] at (4.2, 0) {};

        \node[dot,colA] at (4.8, 3) {};
        \node[dot,colA] at (4.8, 2) {};
        \node[dot,colA] at (4.2, 1) {};
        \node[dot,colA] at (4.8, 0) {};

        \node[dot,colF]  (la) at (8, 2) {};
        \node[right=4pt] at (la) {Prepare and execute $\T$.};

        \node[dot,colA]  (la) at (8, 1) {};
        \node[right=4pt] at (la) {Commit $\T$ (Decide).};

        \node[label,below,yshift=3pt] at (3.1, 0) {\vphantom{Execute $\T$}\MName{CheckCommit}};
        \node[label,below,yshift=3pt] at (5.2, 0) {\vphantom{Execute $\T$}\MName{CheckCommit}};
    \end{tikzpicture}
    \caption{A schematic representation of the check-commit protocol of \PoE{}. In this illustration, replicas $\Replica_1$, $\Replica_2$, and $\Replica_3$ initiate the check-commit protocol due to preparing and executing a transaction $\T$ (which they finish at different times), whereas $\Replica_4$ learns $\T$ via the \Name{CheckCommit} message it receives from replicas $\Replica_1$ and $\Replica_2$. The replicas explicitly \emph{decide} on $\T$ upon finishing the protocol.}\label{fig:poe_cc_ill}
\end{figure}
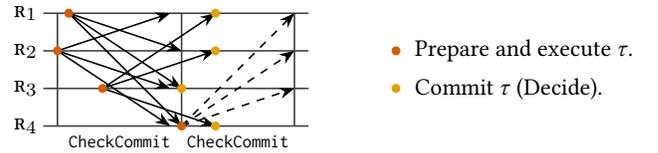

The correctness of \PoE{} is based on the following properties of the check-commit protocol:

\begin{theorem}\label{thm:main_cc}
Assume that communication is reliable, transaction execution is deterministic, and all non-faulty replicas are in view $\View$. Round $\rn$ of view $\View$ of the check-commit protocol of \PoE{} satisfies the following three properties:
\begin{enumerate}
\item\label{thm:main_cc:fst} if a client $\Client$ receives a $(\View, \rn)$-proof-of-execution for $\Cert{\T}{\Client}$, then all non-faulty replicas will prepare some proposal proposing $\Cert{\T}{\Client}$ in round $\rn$;
\item\label{thm:main_cc:snd} if all non-faulty replicas executed the same sequence of $\rn-1$ transactions and a non-faulty replica stored a commit certificate for proposal $m =\Message{Propose}{\Cert{\T}{\Client}, \View, \rn}$, then all non-faulty replicas will store a commit certificate for $m$ and $\Client$ will receive a $(\View, \rn)$-proof-of-execution for $\Cert{\T}{\Client}$; and
\item\label{thm:main_cc:trd} if there exists a commit certificate $\CC{m}$ for proposal $m =\Message{Propose}{\Cert{\T}{\Client}, \View, \rn}$, then at-least $\nf-\f$ non-faulty replicas stored prepared certificates for proposal $m$, executed $m$, and stored commit certificates for proposals in every round $\rn'$ preceding round $\rn$ ($\rn' < \rn$).
\end{enumerate}
\end{theorem}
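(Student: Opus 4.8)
The plan is to reduce all three properties to two recurring ingredients: a \emph{counting} step — since $\n > 3\f$, every set of $\nf$ distinct replicas contains at-least $\nf - \f \geq \f + 1$ non-faulty ones — and a \emph{propagation} step — using reliable communication and the assumption that all non-faulty replicas are in view $\View$, forwarded prepared certificates and \MName{CheckCommit} messages reach every non-faulty replica. Before handling the three properties I would isolate and prove one auxiliary claim by induction on the round index $\rn'$ (from the first round of view $\View$ upward): \emph{if at-least $\nf - \f$ non-faulty replicas prepared the (by Theorem~\subref{thm:main_nc}{fst}, unique) proposal $m'$ of round $\rn'$ and executed its transaction, then every non-faulty replica eventually prepares $m'$, executes its transaction, and stores a commit certificate for $m'$.} The inductive step is the crux: by the induction hypothesis every non-faulty replica already holds commit certificates for all rounds below $\rn'$, so the at-least $\nf-\f$ preparers clear the wait of \lfref{fig:poe_cc}{wait} and broadcast \Message{CheckCommit}{\PC{m'}} (\lfref{fig:poe_cc}{broad}); by reliability every non-faulty replica — including one ``left in the dark'' that never received the original proposal — receives such a message and prepares $m'$ and executes its transaction through \lfref{fig:poe_cc}{prepare_exec}; now all $\nf$ non-faulty replicas are preparers and broadcast \MName{CheckCommit}, so each collects $\nf$ such messages and stores a commit certificate for $m'$ (\lfref{fig:poe_cc}{store}). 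The base case is the first round of view $\View$, where the wait of \lfref{fig:poe_cc}{wait} holds by the invariant — established by the view-change protocol of Section~\ref{ss:poe_vc} and vacuous if $\View$ is the initial view — that rounds preceding the start of view $\View$ are already committed at every non-faulty replica.

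Given the auxiliary claim, the three properties are short. Property~\ref{thm:main_cc:trd} needs no induction: a commit certificate $\CC{m}$ is, by \lfref{fig:poe_cc}{store}, assembled from \MName{CheckCommit} messages of $\nf$ distinct replicas, hence of at-least $\nf-\f$ non-faulty ones; a non-faulty replica only emits \MName{CheckCommit} via \lfref{fig:poe_cc}{broad}, which forces it to have prepared $m$ (hence stored $\PC{m}$, whether through the normal case or \lfref{fig:poe_cc}{prepare_exec}), executed the transaction (\lfref{fig:poe_nc}{execute}), and — through the wait of \lfref{fig:poe_cc}{wait} — stored commit certificates for every round $\rn' < \rn$, which is exactly the stated conclusion. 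For Property~\ref{thm:main_cc:fst}, a $(\View,\rn)$-proof-of-execution for $\Cert{\T}{\Client}$ consists of $\nf$ identical \MName{Inform} messages from distinct replicas, so at-least $\nf-\f$ non-faulty replicas sent one; by \lsfref{fig:poe_nc}{store}{inform} each prepared the unique proposal $m = \Message{Propose}{\Cert{\T}{\Client}, \View, \rn}$ and executed $\T$ — and, by the execution-order wait of \lfref{fig:poe_nc}{wait}, each also executed (hence prepared) the proposal of every round $\rn' < \rn$ — so the auxiliary claim's hypothesis holds for all rounds up to $\rn$ and its induction makes every non-faulty replica prepare $m$, a proposal proposing $\Cert{\T}{\Client}$. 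For Property~\ref{thm:main_cc:snd}, a non-faulty replica holding a commit certificate for $m$ yields, via Property~\ref{thm:main_cc:trd}, at-least $\nf-\f$ non-faulty preparers of $m$ that executed $\T$; the auxiliary claim then makes all $\nf$ non-faulty replicas prepare $m$, execute $\T$, and store a commit certificate for $m$; since by hypothesis all of them executed the same $\rn-1$ transactions and execution is deterministic, they all obtain the same result $r$ and send identical \Message{Inform}{\Cert{\T}{\Client}, \View, \rn, r} (\lfref{fig:poe_nc}{inform}), delivering the proof-of-execution to $\Client$.

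The hard part will be the inductive step of the auxiliary claim, in particular the interplay between the ``left in the dark'' replicas — which can become preparers of $m'$ only through the forwarded-certificate path of \lfref{fig:poe_cc}{prepare_exec}, never through the normal-case event of \lfref{fig:poe_nc}{sup}, whose trigger requires receiving the original \MName{Propose} message — and the ``all previous rounds committed'' waits of \lfref{fig:poe_cc}{wait} and \lfref{fig:poe_cc}{wait_store}: one must first unblock the $\nf-\f$ preparers via the induction hypothesis, then propagate \MName{CheckCommit} to turn the remaining up-to-$\f$ non-faulty replicas into preparers, and only then conclude that every non-faulty replica can gather $\nf$ \MName{CheckCommit} messages. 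A secondary nuisance is bookkeeping the hypotheses of Property~\ref{thm:main_cc:snd}: one should verify that ``all non-faulty executed the same $\rn'-1$ transactions'' for each $\rn' \le \rn$ follows from the stated hypothesis at $\rn$ together with Theorem~\subref{thm:main_nc}{fst} and \lfref{fig:poe_nc}{wait}, and that having a commit certificate for a round implies having executed that round, so that the induction can legitimately be invoked at every intermediate round.
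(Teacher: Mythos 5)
Your proposal is correct and rests on the same two mechanisms as the paper's proof — counting ($\nf$ signers of a certificate or proof-of-execution yield at-least $\nf-\f$ non-faulty ones) and propagation (a non-faulty preparer broadcasts \Message{CheckCommit}{\PC{m}}, and under reliable communication every non-faulty replica, including those left in the dark, recovers $m$ via \lfref{fig:poe_cc}{prepare_exec} and then commits). The decomposition differs, though: the paper proves the three statements directly and separately, and in doing so treats the ``wait until all previous rounds have a commit certificate'' conditions of \lfref{fig:poe_cc}{wait} and \lfref{fig:poe_cc}{wait_store} implicitly — e.g., in its proof of statement~(\ref{thm:main_cc:fst}) it asserts that the replica $\Replica[q]$ that sent an \MName{Inform} message ``satisfies the conditions to start the commit-phase'' without arguing that the wait on earlier rounds ever clears. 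You instead isolate exactly that issue into an auxiliary claim proved by induction on the round index, observing that the in-order execution constraint of \lfref{fig:poe_nc}{wait} pushes the claim's hypothesis down to all earlier rounds so the induction hypothesis unblocks the waits. This buys a more rigorous discharge of the round-ordering preconditions at the cost of a longer argument; your derivations of the three properties from the auxiliary claim and of property~(\ref{thm:main_cc:trd}) directly from \lfref{fig:poe_cc}{broad} and \lfref{fig:poe_cc}{store} then match the paper's reasoning essentially step for step.
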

\begin{proof} We prove the three statements separately.

Client $\Client$ only receives a $(\View, \rn)$-proof-of-execution for $\Cert{\T}{\Client}$ if at-least $\nf$ distinct replicas signed some message \Message{Inform}{\Cert{\T}{\Client}, \View, \rn, r}. As there are at-most $\f$ faulty replicas and $\nf > 2\f$, there exist at-least $\nf - \f \geq \f+1$ non-faulty replicas that must have sent these \MName{Inform} messages to $\Client$. Choose such a non-faulty replica $\Replica[q]$. Replica $\Replica[q]$ will only do so after preparing some proposal $m =\Message{Propose}{\Cert{\T}{\Client}, \View, \rn}$ and executing $\T$ (\lfref{fig:poe_nc}{inform}). Consequently, $\Replica[q]$ satisfies the conditions to start the commit-phase for $m$ and will broadcast a well-formed message \Message{CheckCommit}{\PC{m}_{\Replica[q]}} to all replicas (\lfref{fig:poe_cc}{broad}). Hence, as communication is reliable, all non-faulty replicas will receive a well-formed message \Message{CheckCommit}{\PC{m}_{\Replica[q]}} and will be able to prepare $m$ (\lfref{fig:poe_cc}{prepare_exec}).

Next, we prove the second statement. Assume that non-faulty replica $\Replica$ stored a commit certificate $\CC{m}$. This certificate is based on well-formed \MName{CheckCommit} messages from $\nf$ distinct replicas (\lfref{fig:poe_cc}{cped}). Consequently, there exists a non-faulty replica $\Replica[q]$ that must have sent one of these \MName{CheckCommit} messages to $\Replica$. This replica $\Replica[q]$ must have broadcasted its well-formed message \Message{CheckCommit}{\PC{m}_{\Replica[q]}} to all replicas (\lfref{fig:poe_cc}{broad}). Hence, as communication is reliable, all non-faulty replicas will receive a well-formed message \Message{CheckCommit}{\PC{m}_{\Replica[q]}}, will be able to prepare $m$ (\lfref{fig:poe_cc}{prepare_exec}), will receive well-formed messages \Message{CheckCommit}{\PC{m}_i} from all $\nf$ non-faulty replicas (\lfref{fig:poe_cc}{broad}), and will be able to commit $m$ (\lfref{fig:poe_cc}{cped}). Finally, we can use the same argument as in the proof of Theorem~\subref{thm:main_nc}{snd} to prove that $\Client$ will receive a $(\View, \rn)$-proof-of-execution for $\Cert{\T}{\Client}$.

Finally, we prove the third statement. The commit certificate $\CC{m}$ can only exist if $\nf$ distinct replicas signed the message \Message{CheckCommit}{m}. Hence, there are at-least $\nf-\f$ non-faulty replicas that signed message \Message{CheckCommit}{m}. These $\nf-\f$ non-faulty replicas will only sign \Message{CheckCommit}{m} after they prepared and executed $m$ and committed all previous rounds (\lfref{fig:poe_cc}{wait}), completing the proof.
\end{proof}

\subsection{The View-Change Protocol}\label{ss:poe_vc}
To deal with disruptions of the normal-case protocol of \PoE{}, \PoE{} employs a view-change protocol.  This protocol has two goals:
\begin{enumerate}
\renewcommand{\theenumi}{G\arabic{enumi}}
\item\label{vcg:preserve} The view-change protocol must always \emph{preserve requests $\Cert{\T}{\Client}$ with a proof-of-execution}: if client $\Client$ could have observed successful execution of transaction $\T$, then the view-change protocol must guarantee that the execution of $\T$ is preserved by the system and included in \emph{all} future views. Furthermore, the view-change protocol must always \emph{preserve committed transactions}.
\item\label{vcg:recover} The view-change protocol must \emph{resume the normal-case protocol} when communication is reliable: if communication is reliable and the normal-case protocol is disrupted to the point where the check-commit protocol can no longer assure that all non-faulty replicas commit, then the view-change protocol eventually puts all non-faulty replicas in the same future view $\View'$ in which the normal-case protocol can operate without disruptions.
\end{enumerate}
These goals reflect the guarantees provided by weak consensus: Goal~\ref{vcg:preserve} will be used to provide \emph{non-divergence}, whereas Goal~\ref{vcg:recover} will be used to provide \emph{termination} (when communication is sufficiently reliable).

The view-change protocol for view $\View$ operates in three stages:
\begin{enumerate}
\item Replicas enter the \emph{failure detection stage} when they detect failure of view $\View$. In this first stage, each replica that detects failure of the current primary will \emph{alert} all other replicas of this failure. These failure alert messages are not only used to assure that all replicas detect failure of the primary, but will also assure sufficient \emph{synchronization} of the replicas to guarantee the view-change protocol will succeed whenever communication is reliable. 

\item When replicas receive failure alerts for the same primary from $\nf$ distinct replicas, they enter the \emph{new-view proposal stage}. During this second stage, replicas provide the \emph{new primary}, the replica $\Primary'$ with $\ID{\Primary'} = (\View + 1) \bmod \n$, with a summary of their internal state. After the new primary $\Primary'$ receives such summary of sufficient replicas, $\Primary'$ can determine the state in which the next view starts, which it then proposes and broadcast to all other replicas. All replicas wait for this new-view proposal from $\Primary'$ (and detect failure of view $\View+1$ if no valid new-view proposal arrives on time).
\item When replicas receive a valid new-view proposal from the new primary, they enter the \emph{new-view accept stage}. In this third and final stage, the replicas will validate the new-view proposal they received, update their local state based on the information in the new-view proposal, and start the normal-case protocol of \PoE{} for this new view.
\end{enumerate}
The pseudo-code for the view-change protocol can be found in Figure~\ref{fig:poe_vc}, and an illustration of the working of this protocol can be found in Figure~\ref{fig:poe_vc_ill}. Next, we will detail each of the three stages in more detail.

\begin{figure}[t]
    \begin{myprotocol}
        \TITLE{Failure detection stage}{running at every replica $\Replica \in \Replicas$}
        \EVENT{$\Replica$ detects failure of view $\View$}\label{fig:poe_vc:detect}
            \IF{$\Replica$ did not previously detect failure of view $\View$}
                \STATE Broadcast \Message{Failure}{\View} to all replicas and
                periodically rebroadcast until the new-view proposal stage is entered.\label{fig:poe_vc:request}
            \ENDIF
        \ENDEVENT
        \EVENT{$\Replica$ receives messages \Message{Failure}{\View'} with $\View' \geq \View$ and\\
                \phantom{\textbf{event }}signed by $\f+1$ distinct replicas}
            \STATE $\Replica$ detects failure of view $\View$.\label{fig:poe_vc:amplify}
        \ENDEVENT
        \SPACE
        \TITLE{New-view proposal stage}{running at every replica $\Replica \in \Replicas$}
        \EVENT{$\Replica$ receives messages \Message{Failure}{\View'} with $\View' \geq \View$ and\\
                \phantom{\textbf{event }}signed by $\nf$ distinct replicas}\label{fig:poe_vc:ep}
            \STATE Halt the normal-case protocol of Section~\ref{ss:poe_nc} for view $\View$.
            \STATE Halt the check-commit protocol of Section~\ref{ss:poe_cc} for view $\View$.
            \STATE Let $\CC{m}$ be the last commit certificate stored by $\Replica$ and let $\VE$ be the set of prepared certificates $\PC{m'}$ for all proposals $m'$ that $\Replica$ executed (without rollback) after proposal $m$.\label{fig:poe_vc:include_ve}
            \STATE Send \Message{ViewState}{\View, \CC{m}, \VE} to replica $\Primary'$, $\ID{\Primary'} = (\View + 1) \bmod \n$.
            \STATE Await a valid \MName{NewView} message for view $\View+1$. If no such message arrives on time, then detect failure of view $\View+1$.\label{fig:poe_vc:failure_next}
        \ENDEVENT
        \EVENT{$\Replica$ receives well-formed messages \Message{ViewState}{\View, \CC{m_i}_i, \VE_i},\\
                \phantom{\textbf{event }}$1 \leq i \leq \nf$, signed by $\nf$ distinct replicas}\label{fig:poe_vc:nve}
            \IF{$\ID{\Replica} = (\View + 1) \bmod \n$ ($\Replica$ is the primary of view $\View+1$)}
                \STATE Let $\VI = \{ \Message{ViewState}{\View, \CC{m_i}_i, \VE_i} \mid 1 \leq i \leq \nf \}$.
                \STATE Broadcast \Message{NewView}{\View + 1, \VI} to all replicas.\label{fig:poe_vc:request_nv}
            \ENDIF
        \ENDEVENT
        \SPACE
        \TITLE{New-view accept stage}{running at every replica $\Replica \in \Replicas$}
        \EVENT{$\Replica$ receives well-formed message \Message{NewView}{\View + 1, \VI}\\
                \phantom{\textbf{event }}from replica  $\Primary'$, $\ID{\Primary'} = (\View + 1) \bmod \n$}\label{fig:poe_vc:nv_msg}
            \STATE Update the internal state in accordance to $\VI$ and start the normal-case protocol of Section~\ref{ss:poe_nc} for view $\View + 1$.\label{fig:poe_vc:update}
        \ENDEVENT
    \end{myprotocol}
    \caption{The view-change protocol of \PoE{}.}\label{fig:poe_vc}
\end{figure}

\begin{figure}[t!]
    \centering
    \begin{tikzpicture}[yscale=0.5,xscale=0.75]
        \draw[thick,draw=black!75] (1.75,   0) edge ++(9.5, 0)
                                   (1.75,   1) edge ++(9.5, 0)
                                   (1.75,   2) edge ++(9.5, 0)
                                   (1.75,   3) edge[blue!50!black!90] ++(9.5, 0);

        \draw[thin,draw=black!75] (2, 0) edge ++(0, 3)
                                  (5, 0) edge ++(0, 3)
                                  (7, 0) edge ++(0, 3)
                                  (9, 0) edge ++(0, 3)
                                  (11, 0) edge ++(0, 3);

        \node[left] at (1.8, 0) {$\Replica[b] \in \Faulty$};
        \node[left] at (1.8, 1) {$\Replica_2$};
        \node[left] at (1.8, 2) {$\Replica_1$};
        \node[left] at (1.8, 3) {$\Replica[p]'$};

        \path[->] (2, 3) edge[colB,dashed] (4, 2)
                         edge[colB,dashed] (4, 1)
                         edge[colB,dashed] (4, 0)
                           
                  (3, 2) edge[colB,dashed] (5, 0)
                         edge[colB,dashed] (5, 1)
                         edge[colB,dashed] (5, 3)

                  (5, 1) edge[colB,dashed] (7, 0)
                           edge[colB,dashed] (7, 2)
                           edge[colB,dashed] (7, 3)

                  (9, 3) edge[<-] (7, 1)
                         edge[<-] (7, 2)

                  (9, 3) edge (11, 2)
                         edge (11, 1)
                         edge (11, 0);
                                                          
        \node[dot,colD] at (9, 3) {};
        \node[sdot,colE] at (11, 0) {};
        \node[sdot,colE] at (11, 1) {};
        \node[sdot,colE] at (11, 2) {};

        \node[label,below,yshift=3pt] at (3.5, 0) {\vphantom{Execute $\T$}\MName{Failure}};
        \node[label,below,yshift=3pt,align=center] (x) at (6, 0) {\vphantom{Execute $\T$}\MName{Failure}};
        \node[label,below] at (x) {(Join)};
        \node[label,below,yshift=3pt] at (8, 0) {\vphantom{\MName{Failure}}\MName{ViewState}};
        \node[label,below,yshift=3pt] at (10, 0) {\vphantom{\MName{Failure}}\MName{NewView}};
    \end{tikzpicture}
    \caption{A schematic representation of the view-change protocol of \PoE{}. The current primary $\Replica[b]$ is faulty and
needs to be replaced. The next primary, $\Replica[p]'$, and the replica $\Replica_1$ detected this failure first and alerted all replicas via \MName{Failure} messages. The replica $\Replica_2$ joins in on this failure. After replicas receive $\nf = 3$ \MName{Failure} messages, they send their state to $\Primary'$ via \MName{ViewState} messages. Finally, $\Primary'$ uses $\nf$ such \MName{ViewState} messages to propose a new view via a \MName{NewView} message.}\label{fig:poe_vc_ill}
\end{figure}
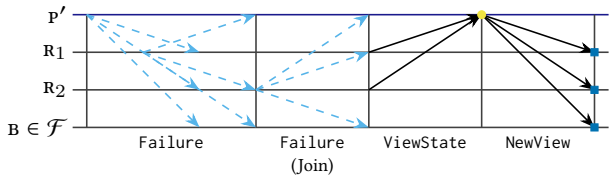

\paragraph{The Failure Detection Stage}

In the failure detection stage, replicas detect failure of view $\View$ and alert other replicas of this failure. To alert other replicas of a failure of view $\View$, replicas will broadcast messages \Message{Failure}{\View}. Before a replica $\Replica$ enters the failure detection stage, replica $\Replica$ needs to detect failure of the primary. Replica $\Replica$ can do so in two ways.

First, $\Replica$ can set a timer whenever it \emph{expects a proposal} for some round $\rn$ from the current primary. If this timer expires and no proposal for round $\rn$ was finished (proposed, executed, and committed), then $\Replica$ detects failure. Replica $\Replica$ can expect a proposal whenever it forwarded a valid client request to the current primary or whenever it receives valid \MName{Prepare} messages for round $\rn$ from non-faulty replicas (e.g., by receiving such \MName{Prepare} messages from $\f+1$ distinct replicas) without receiving any corresponding \MName{Propose} messages.

Second, $\Replica$ can receive failure alerts for the current (or future) view of at-least $\f+1$ distinct other replicas. As there are at-most $\f$ faulty replicas, at-least one of these alerts must have originated from some non-faulty replica $\Replica[q]$. In this case, $\Replica$ can simply use this observation to detect failure.

\paragraph{The New-View Proposal Stage}
Replicas enter the new-view proposal stage for view $\View$ after they receive messages \Message{Failure}{\View'}, $\View' \geq \View$, from $\nf$ distinct replicas. This condition will \emph{synchronize} the view-change in all non-faulty replicas whenever communication is reliable:

\begin{lemma}\label{lem:sync}
Assume communication is reliable and has message delay $\delta$. If the first non-faulty replica to enter the new-view proposal stage does so at time $t$, then all non-faulty replicas will enter the new-view proposal stage before-or-at $t + 2\delta$.
\end{lemma}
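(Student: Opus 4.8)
The plan is to run the standard two-hop ``amplification'' argument on the two thresholds built into the failure-detection stage of Figure~\ref{fig:poe_vc}: collecting matching $\Message{Failure}{\cdot}$ messages from $\f+1$ distinct replicas makes a replica detect failure (\lfref{fig:poe_vc}{amplify}), and collecting them from $\nf$ distinct replicas makes it enter the new-view proposal stage (\lfref{fig:poe_vc}{ep}). Throughout, $\View$ is the view the replicas are leaving (the view fixed in Section~\ref{ss:poe_vc}), so ``entering the new-view proposal stage'' means having received messages $\Message{Failure}{\View'}$ with $\View' \geq \View$ from $\nf$ distinct replicas; as is implicit in this setting, I take all non-faulty replicas to be in view $\View$ at time $t$. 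The two elementary facts I rely on about reliable communication with delay $\delta$ are that a message received by a replica at or before time $s$ was sent at or before $s$, and that a message (broadcast) sent at or before $s$ is delivered to every non-faulty replica by time $s+\delta$.

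For the first hop I would start from the first non-faulty replica $\Replica$ entering the stage at time $t$: by \lfref{fig:poe_vc}{ep}, $\Replica$ has, by time $t$, received $\Message{Failure}{\View'}$ messages ($\View' \geq \View$) from $\nf$ distinct replicas, and since $\nf \geq 2\f+1$, at least $\nf-\f \geq \f+1$ of those senders are non-faulty. Each such non-faulty sender produced its message by \emph{broadcasting} it in its own failure-detection stage (\lfref{fig:poe_vc}{request}) at a time no later than $t$, so every non-faulty replica has received all $\geq\f+1$ of these $\MName{Failure}$ messages by time $t+\delta$; hence, by \lfref{fig:poe_vc}{amplify}, every non-faulty replica has detected failure of view $\View$ by time $t+\delta$. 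Having first detected this failure at some time $\leq t+\delta$, each non-faulty replica has (by \lfref{fig:poe_vc}{request}) broadcast a $\MName{Failure}$ message for a view $\geq\View$ at that moment. For the second hop, each of these $\nf$ broadcasts (one per non-faulty replica) is sent by time $t+\delta$, so every non-faulty replica has received $\MName{Failure}$ messages for views $\geq\View$ from all $\nf$ non-faulty (hence distinct) replicas by time $t+2\delta$; that is exactly the $\nf$-message condition of \lfref{fig:poe_vc}{ep}, so every non-faulty replica has entered the new-view proposal stage by $t+2\delta$ --- either earlier, or at the latest through this collection.

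I do not expect a genuine obstacle; the work is entirely bookkeeping, and the points I would be most careful about are: (i) that a $\MName{Failure}$ message which $\Replica$ holds by time $t$ was \emph{sent} by time $t$, which is what turns ``$\Replica$ has it by $t$'' into ``everyone has it by $t+\delta$''; (ii) that the $\geq\f+1$ non-faulty senders genuinely broadcast their $\MName{Failure}$ messages rather than only sending them to $\Replica$, which is guaranteed by \lfref{fig:poe_vc}{request}; and (iii) that the $\nf$ second-hop $\MName{Failure}$ messages all carry views $\geq\View$ and originate from distinct replicas, so that they jointly meet the threshold in \lfref{fig:poe_vc}{ep} even if by then individual replicas sit in slightly different views. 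Non-faulty replicas that had already detected failure, already broadcast a $\MName{Failure}$ message, or already entered the stage only make the $t+2\delta$ deadline easier to meet.
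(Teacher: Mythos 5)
Your proof is correct and follows essentially the same two-hop amplification argument as the paper: the $\nf$ \MName{Failure} messages held by the first replica at time $t$ include at least $\f+1$ from non-faulty broadcasters, so all non-faulty replicas detect failure and broadcast by $t+\delta$ (\lfref{fig:poe_vc}{amplify}), and hence all collect $\nf$ \MName{Failure} messages and enter the new-view proposal stage by $t+2\delta$ (\lfref{fig:poe_vc}{ep}). The extra bookkeeping you flag (messages received by $t$ were sent by $t$; senders broadcast rather than unicast; views $\geq\View$ from distinct replicas) is consistent with, and slightly more explicit than, the paper's own argument.
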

\begin{proof}
Let $\Replica$ be the first non-faulty replica that enters the new-view proposal stage. As replica $\Replica$ entered the new-view proposal stage at $t$, it received messages \Message{Failure}{\View'}, $\View' \geq \View$, from $\nf$ distinct replicas before-or-at $t$ (\lfref{fig:poe_vc}{ep}). As there are at-most $\f$ faulty replicas and $\nf > 2\f$, at-least $\nf - \f \geq \f+1$ of these messages originated from non-faulty replicas, whom always broadcast their \MName{Failure} messages (\lfref{fig:poe_vc}{request}). Hence, if communication is reliable, then all replicas will receive at-least $\f+1$ \MName{Failure} messages within at-most a message-delay $\delta$. Consequently, all non-faulty replicas will have detected failure of view $\View$ at-most at $t + \delta$ and broadcast \MName{Failure} messages themselves (\lfref{fig:poe_vc}{amplify}), due to which all replicas will receive $\nf$ \MName{Failure} messages at-most at $t + 2\delta$ and enter the new-view proposal stage for view $\View$ (\lfref{fig:poe_vc}{ep}).
\end{proof}

Consider a period of reliable communication in an asynchronous environment. In this environment, the message delay $\delta$ as used in Lemma~\ref{lem:sync} is bounded by a value unknown to the non-faulty replicas. In Lemma~\ref{lem:poe_timing_proof} and Theorem~\ref{thm:poe_timing}, we show how non-faulty replicas can determine a sufficiently high upper bound for $\delta$.

When a non-faulty replica $\Replica$ enters the new-view proposal stage for view $\View$, $\Replica$ first halts its participation in the normal-case protocol of Section~\ref{ss:poe_nc} and the check-commit protocol of Section~\ref{ss:poe_cc} for view $\View$. Next, $\Replica$ constructs a summary of its internal state. To do so, $\Replica$ constructs the set $\VE$ that holds the prepared certificates $\PC{m'}$ of proposals executed (without rollback) by $\Replica$ and stored \emph{since} the last commit certificate $\CC{m}$ stored by $\Replica$. To simplify presentation, we assume that each replica $\Replica$ has a \emph{dummy commit certificate} for round $\rn = 0$ (that does not propose any request), which $\Replica$ uses when it has not yet committed proposals. The pair $(\CC{m}, \VE)$ will serve as the summary of the current state of $\Replica$. Finally, $\Replica$ sends  $(\CC{m}, \VE)$ to the next primary, the replica $\Primary'$ with $\ID{\Primary'} = (\View + 1) \bmod \n$, via a \Message{ViewState}{\View, \CC{m}, \VE} message.\footnote{In \PoE{}, we distinguish between \MName{Failure} messages, which are \emph{small} and broadcasted to all other replicas, that are used to detect failures; and \MName{ViewState} messages, which are \emph{large} and send only to the next primary, that are used to construct a \MName{NewView} message. If bandwidth is not a limiting factor, then one can simply broadcast \MName{ViewState} messages to all replicas to fulfill both roles.}

The next primary, the replica $\Primary'$ with $\ID{\Primary'} = (\View + 1) \bmod \n$, will wait until it receives well-formed messages \[\VI = \{ \Message{ViewState}{\View, \CC{m_i}_i, \VE_i} \mid 1\leq i \leq \nf \},\] signed by $\nf$ distinct replicas. After $\Primary'$ has received these messages $\VI$, $\Primary'$ broadcasts the message \Message{NewView}{\View+1, \VI} to all replicas. This message announces a new-view whose initial state is based on the information in $\VI$. To assure timely arrival of these messages, we use the following assumption:

\begin{assumption}\label{ass:size}
All \PoE{} messages have a predetermined bounded size. In specific, there is a known upper bound on the size of proposals of the form \Message{Propose}{\Cert{\T}{\Client}, \View, \rn} and there is an upper bound on the number of rounds with prepared proposals that are not yet committed. We refer to this upper bound as the \emph{window size}.
 \end{assumption}

Using Assumption~\ref{ass:size}, we can assume, without loss of generality, that message delivery times are independent of the message size and are fully determined by some unknown message delay.

\begin{lemma}\label{lem:sync_nv}
Assume communication is reliable and has message delay $\delta$. If the first non-faulty replica to enter the new-view proposal stage does so at time $t$, then any non-faulty next primary will be able to deliver a new-view proposal before $t + 4\delta$.
\end{lemma}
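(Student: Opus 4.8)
The plan is to chain message delays, taking the synchronization guarantee of Lemma~\ref{lem:sync} as the starting point. That lemma already gives us that all $\nf$ non-faulty replicas enter the new-view proposal stage for view $\View$ before-or-at time $t + 2\delta$. What remains is to track what each such replica does on entry, bound the time for the next primary $\Primary'$ (with $\ID{\Primary'} = (\View+1) \bmod \n$) to collect enough \MName{ViewState} messages, and bound the time for its \MName{NewView} message to reach all non-faulty replicas.

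First I would note that, by the new-view proposal stage of Figure~\ref{fig:poe_vc} (\lfref{fig:poe_vc}{ep}), every non-faulty replica that enters this stage halts the normal-case and check-commit protocols for view $\View$, builds its state summary $(\CC{m}, \VE)$, and immediately sends a well-formed \Message{ViewState}{\View, \CC{m}, \VE} message to $\Primary'$. Hence all $\nf$ non-faulty replicas send a well-formed \MName{ViewState} message to $\Primary'$ before-or-at $t + 2\delta$. Since communication is reliable with delay $\delta$, and since by Assumption~\ref{ass:size} these (potentially large) \MName{ViewState} messages are still delivered within $\delta$ irrespective of their size, $\Primary'$ receives all $\nf$ of them — signed by $\nf$ distinct replicas and well-formed — before-or-at $t + 3\delta$. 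This meets the wait condition of the event at \lfref{fig:poe_vc}{nve}, so a non-faulty $\Primary'$ (being the primary of view $\View+1$) constructs $\VI$ from these messages and broadcasts \Message{NewView}{\View+1, \VI} to all replicas before-or-at $t + 3\delta$ (\lfref{fig:poe_vc}{request_nv}). Applying reliability and Assumption~\ref{ass:size} once more, every non-faulty replica receives this valid new-view proposal before-or-at $t + 4\delta$, which yields the claim.

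The only non-routine point — and hence the main obstacle — is the timing of the \MName{ViewState} and \MName{NewView} messages: unlike the small \MName{Failure} messages driving Lemma~\ref{lem:sync}, these messages carry prepared certificates and commit certificates and are therefore large, so a priori one cannot assume they are delivered within the same $\delta$. This is exactly the role of Assumption~\ref{ass:size}: by bounding the size of proposals and the window size (the number of uncommitted rounds), it bounds how much certificate data a \MName{ViewState} or \MName{NewView} message can carry, which lets us keep treating delivery time as governed solely by $\delta$. A minor bookkeeping remark is that $\Primary'$ only needs $\nf$ \emph{well-formed} \MName{ViewState} messages, and since the $\nf$ non-faulty replicas each produce one, its wait condition is satisfied without relying on any faulty replica; the slack between the ``before-or-at $t+4\delta$'' I derive and the ``before $t+4\delta$'' in the statement is immaterial. (This lemma is what one then feeds into the timing analysis of Lemma~\ref{lem:poe_timing_proof} and Theorem~\ref{thm:poe_timing} to set the timer at \lfref{fig:poe_vc}{failure_next} large enough that a non-faulty $\Primary'$ is never spuriously accused.)
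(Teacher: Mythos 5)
Your proof is correct and follows essentially the same route as the paper's: invoke Lemma~\ref{lem:sync} to get all non-faulty replicas into the new-view proposal stage by $t+2\delta$, add one $\delta$ for the \MName{ViewState} messages to reach $\Primary'$, and one more for the broadcast \MName{NewView} message, with Assumption~\ref{ass:size} justifying that delivery time is independent of message size. Your explicit discussion of why the large \MName{ViewState}/\MName{NewView} messages still fit the $\delta$ bound is a welcome elaboration of a point the paper handles in the surrounding text rather than in the proof itself.
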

\begin{proof}
Due to Lemma~\ref{lem:sync}, all non-faulty replicas will have entered the new-view proposal stage at $t + 2\delta$. Hence, the next primary will have received sufficient \MName{ViewState} messages at $t + 3\delta$ to propose a new-view (\lfref{fig:poe_vc}{nve}), and a well-formed new-view proposal will be broadcast at-or-before $t+3\delta$ to all replicas (\lfref{fig:poe_vc}{request_nv}). Consequently, all replicas will receive this new-view proposal before $t + 4\delta$.
\end{proof}

Due to synchronized entry of the new-view proposal stage, every non-faulty replica $\Replica$ will expect a timely new-view proposal. If no such proposal is received, then $\Replica$ will detect failure of view $\View+1$.

\paragraph{The New-View Accept Stage}

Replicas enter the new-view accept stage after they receive a well-formed \Message{NewView}{\View + 1, \VI} message from the primary of view $\View+1$, the replica $\Primary'$ with $\ID{\Primary'} = (\View + 1) \bmod \n$. Based on the information included in $\VI$, each replica will determine the state in which $\View+1$ starts. This state consists of the set of transactions that have been proposed before view $\View +1$ and, hence, determines at which round the primary $\Primary'$ of $\View+1$ can start proposing. We make the following key assumption:

\begin{assumption}\label{ass:vc}
If there exists a commit certificate $\CC{m}$ for proposal $m = \Message{Propose}{\Cert{\T}{\Client}, \View, \rn}$, then every commit certificate $\CC{m'}$ for round $\rn$ is a commit certificate for a proposal proposing $\Cert{\T}{\Client}$.
\end{assumption}

As part of proving the correctness of the view-change protocol, we will prove that Assumption~\ref{ass:vc} holds. Before we do so, we first show how we interpret the unique state represented by a well-formed proposal $n = \Message{NewView}{\View + 1, \VI}$ using Assumption~\ref{ass:vc}. Furthermore, we will show that this derived state satisfies Goal~\ref{vcg:preserve}. Let $\VI = \{ \Message{ViewState}{\View, \CC{m_i}_i, \VE_i} \mid 1 \leq i \leq \nf \}$ be the set of \MName{ViewState} messages included in $n$. Let
\begin{align*}
    \NVcc{\rn} &= \{ \Cert{\T}{\Client} \mid \exists\View\ \exists \CC{m}\ (m = \Message{Propose}{\Cert{\T}{\Client}, \View, \rn}) \},
\intertext{be the set of client requests $\Cert{\T}{\Client}$ that have been committed in round $\rn$ (a view $\View$ exists for which a commit certificate $\CC{m}$ exists with $m = \Message{Propose}{\Cert{\T}{\Client}, \View, \rn}$). Let}
    \NVP{n}{\rn} &= \{ m \mid (\PC{m} \in \VE_i) \land (m = \Message{Propose}{\Cert{\T}{\Client}, \View, \rn}) \}
\intertext{be the set of all proposals included in $\VI$ for round $\rn$, let}
    \NVm{n}{\rn} &= \{ \Cert{\T}{\Client} \mid  (\Message{Propose}{\Cert{\T}{\Client}, \View, \rn} \in \NVP{n}{\rn}) \land{}\\
                &\phantom{= \{ \Cert{\T}{\Client} \mid{}} \View = \max\{ \View' \mid \Message{Propose}{\Cert{\T'}{\Client'}, \View', \rn} \in \NVP{n}{\rn} \} \}
\intertext{be the set of client requests proposed in the most-recent view of any proposals in $\NVP{n}{\rn}$, let}
    \NVrncc{n} &= \max\{\rn_i \mid m_i = \Message{Propose}{\Cert{\T_i}{\Client_i}, \View_i, \rn_i} \}
\intertext{be the latest round for which a commit certificate is included in $\VI$, and let}
    \NVrn{n} &= \max(\{\rn \mid \NVP{n}{\rn} \neq \emptyset\} \union \NVrncc{n})
\end{align*}
be the latest round for which $n$ includes proposals. We will use the sets $\NVcc{\cdot}$ and $\NVm{n}{\cdot}$ to define the unique sequence of client transactions preserved by new-view proposal $n$. To do so, we use the following technical result:

\begin{lemma}\label{lem:unique_vc}
Let $n =  \Message{NewView}{\View + 1, \VI}$ be a well-formed new-view proposal. We have:
\begin{enumerate}
\item if Assumption~\ref{ass:vc} holds, then $\NVcc{\rn}$ is a singleton set ($\abs{\NVcc{\rn}} = 1$) for all $\rn \leq \NVrncc{n}$ ; and
\item $\NVm{n}{\rn}$ is a singleton set ($\abs{\NVm{n}{\rn}} = 1$) for all $\NVrncc{n} < \rn \leq \NVrn{n}$.
\end{enumerate}
\end{lemma}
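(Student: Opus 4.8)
The plan is to treat the two statements separately, and for each to establish \emph{non-emptiness} and \emph{uniqueness} of the set in question by independent arguments. Statement~(1) will rely on Theorem~\subref{thm:main_cc}{trd} (for non-emptiness) and on Assumption~\ref{ass:vc} (for uniqueness); statement~(2) will rely on the ordering discipline of the normal-case protocol (for non-emptiness) and on Theorem~\subref{thm:main_nc}{fst} (for uniqueness).

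For statement~(1), I would first note that, by the very definition of $\NVrncc{n}$, the set $\VI$ contains a commit certificate $\CC{m_k}$ for a round-$\NVrncc{n}$ proposal $m_k = \Message{Propose}{\Cert{\T}{\Client}, \View_k, \NVrncc{n}}$. Applying Theorem~\subref{thm:main_cc}{trd} to $\CC{m_k}$ --- whose conclusion, as its proof shows, uses only the contents of the commit certificate and so applies in the view-change setting --- there is a commit certificate for some proposal in every round $\rn' < \NVrncc{n}$ as well; together with $\CC{m_k}$ this gives a committed request in every round $\rn \leq \NVrncc{n}$, i.e.\ $\NVcc{\rn} \neq \emptyset$ (round $0$, with its dummy commit certificate, being a trivial special case). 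For uniqueness I would use Assumption~\ref{ass:vc} directly: if $\Cert{\T_1}{\Client_1}, \Cert{\T_2}{\Client_2} \in \NVcc{\rn}$, then there are commit certificates $\CC{m_1}, \CC{m_2}$ for round-$\rn$ proposals $m_j = \Message{Propose}{\Cert{\T_j}{\Client_j}, \View_j, \rn}$, and Assumption~\ref{ass:vc} applied to $\CC{m_1}$ forces $\CC{m_2}$ to be for a proposal proposing $\Cert{\T_1}{\Client_1}$, so the two requests coincide and $\abs{\NVcc{\rn}} = 1$.

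For statement~(2), I would fix $\rn$ with $\NVrncc{n} < \rn \leq \NVrn{n}$ and again split into non-emptiness and uniqueness of $\NVm{n}{\rn}$, for which it suffices that $\NVP{n}{\rn}$ be non-empty and that all round-$\rn$ proposals in $\NVP{n}{\rn}$ sharing the maximal view propose the same request. For non-emptiness: since $\rn > \NVrncc{n}$, the definition of $\NVrn{n}$ as the maximum of $\NVrncc{n}$ and the rounds with $\NVP{n}{\cdot} \neq \emptyset$ forces $\NVrn{n} > \NVrncc{n}$, hence $\NVP{n}{\NVrn{n}} \neq \emptyset$, so some $\VE_j$ (paired in $\VI$ with a commit certificate $\CC{m_j}$ for some round $\rn_j \leq \NVrncc{n}$) contains a prepared certificate for a round-$\NVrn{n}$ proposal. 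I would then invoke a contiguity invariant --- which I would state explicitly --- that the prepared certificates a non-faulty replica accumulates since its last commit certificate cover a \emph{contiguous} block of rounds: by the in-order execution discipline of \lfref{fig:poe_nc}{wait} and the ``executed without rollback'' clause of \lfref{fig:poe_vc}{include_ve}, $\VE_j$ reaching round $\NVrn{n}$ means it holds a prepared certificate for every round $\rn'$ with $\rn_j < \rn' \leq \NVrn{n}$, in particular for round $\rn$; thus $\NVP{n}{\rn} \neq \emptyset$. For uniqueness: let $\View_{\max}$ be the largest view among proposals in $\NVP{n}{\rn}$, and suppose $m_1 = \Message{Propose}{\Cert{\T_1}{\Client_1}, \View_{\max}, \rn}$ and $m_2 = \Message{Propose}{\Cert{\T_2}{\Client_2}, \View_{\max}, \rn}$ both lie in $\NVP{n}{\rn}$; each carries a valid prepared certificate ($\nf$ \MName{Prepare} messages, of which at least $\nf - \f \geq \f + 1 \geq 1$ are from non-faulty replicas, which sign at most one proposal per round and view by \lfref{fig:poe_nc}{sup}), so the quorum-intersection argument from the proof of Theorem~\subref{thm:main_nc}{fst}, applied to these two certificates for round $\rn$ of view $\View_{\max}$, gives $m_1 = m_2$, hence $\abs{\NVm{n}{\rn}} = 1$.

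The step I expect to be the main obstacle is the non-emptiness half of statement~(2): it is the only place where I must reason about the \emph{shape} of a replica's reported state rather than about a single certificate, namely that the prepared certificates in $\VE_j$ form a contiguous run of rounds from $\rn_j + 1$ up to the last round the reporting replica executed. This rests on the facts that non-faulty replicas execute (and hence prepare) proposals strictly in round order without gaps, that $\VE$ records exactly this post-commit prefix, and that $\rn_j \leq \NVrncc{n}$ by definition of $\NVrncc{n}$; I would isolate this as a small auxiliary invariant before using it. Once that is in place, the remaining ingredients --- non-emptiness of $\NVcc{\rn}$ from Theorem~\subref{thm:main_cc}{trd}, uniqueness of $\NVcc{\rn}$ from Assumption~\ref{ass:vc}, and uniqueness of $\NVm{n}{\rn}$ from Theorem~\subref{thm:main_nc}{fst} --- are essentially immediate.
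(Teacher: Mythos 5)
Your proposal is correct and follows essentially the same route as the paper's proof: non-emptiness of $\NVcc{\rn}$ via Theorem~\subref{thm:main_cc}{trd} with uniqueness from Assumption~\ref{ass:vc}, and non-emptiness of $\NVm{n}{\rn}$ via the contiguity of the prepared certificates in some $\VE_j$ (from in-order execution and $\rn_j \leq \NVrncc{n}$) with uniqueness from Theorem~\subref{thm:main_nc}{fst}. The only difference is presentational — you isolate the contiguity of $\VE$ as an explicit auxiliary invariant and spell out the quorum-intersection step for uniqueness, both of which the paper states inline.
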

\begin{proof}
First, we prove $\abs{\NVcc{\rn}} = 1$ for all $\rn \leq \NVrncc{n}$. By the definition of $\NVrncc{n}$, there exists a commit certificate for round $\NVrncc{n}$. By Theorem~\subref{thm:main_cc}{trd}, there exist commit certificates for all rounds $\rn \leq \NVrncc{n}$. Hence, for all $\rn \leq \NVrncc{n}$, $\NVcc{\rn} \neq \emptyset$. Finally, by Assumption~\ref{ass:vc}, we conclude $\abs{\NVcc{\rn}} = 1$.

Next, we prove $\abs{\NVm{n}{\rn}} = 1$ for all $\NVrncc{n} < \rn \leq \NVrn{n}$. By the definition of $\NVrn{n}$, we must have $\NVP{n}{\NVrn{n}} \neq \emptyset$ and $\NVm{n}{\NVrn{n}} \neq \emptyset$. Let $\Cert{\T}{\Client} \in \NVm{n}{\NVrn{n}}$. By the definition of $\NVm{n}{\NVrn{n}}$, there exists a message $m_{\MName{vs}} = \Message{ViewState}{\View, \CC{m}, \VE} \in \VI$ such that there exists a prepared certificate $\PC{m'} \in \VE$ with $m' = \Message{Propose}{\Cert{\T}{\Client}, \View, \NVrn{n}}$. Let $\Replica$ be the replica that signed this message $m_{\MName{vs}}$ and let $\rn'$ be the round for which $m$ was proposed. By the definition of $\NVrncc{n}$, we have $\rn' \leq \NVrncc{n}$. As $\Message{NewView}{\View + 1, \VI}$ is well-formed, also the message $m_{\MName{vs}}$ is well-formed. As non-faulty replicas only execute proposals for round $\rn$ after they executed proposals for all preceding rounds (\lfref{fig:poe_nc}{execute}) and $\VE$ is well-formed, it must contain proposals for all rounds $\rn$, $\rn' \leq \NVrncc{n} < \rn \leq \NVrn{n}$. Hence, for all $\NVrncc{n} < \rn \leq \NVrn{n}$, $\NVP{n}{\rn} \neq \emptyset$ which implies $\NVm{n}{\rn} \neq \emptyset$. Due to Theorem~\subref{thm:main_nc}{fst}, we also have $\abs{\NVm{n}{\rn}} \leq 1$ and we conclude $\abs{\NVm{n}{\rn}} = 1$.
\end{proof}

Due to Lemma~\ref{lem:unique_vc}, the sequence of client requests 
\begin{multline*}\Ledger(n) = \Cert{\T_1}{\Client_1}, \dots, \Cert{\T_{\NVrn{n}}}{\Client_{\NVrn{n}}}\text{ with }\\
 \Cert{\T_\rn}{\Client_\rn} \in \begin{cases} \NVcc{\rn} &\text{if $1 \leq \rn \leq \NVrncc{n}$;}\\
                                              \NVm{n}{\rn} &\text{if $\NVrncc{n} < \rn \leq \NVrn{n}$.}\end{cases}
\end{multline*}
is uniquely defined by the new-view proposal $n$ and specifies the state in which view $\View + 1$ starts. Replica $\Replica$ will update its internal state (\lfref{fig:poe_vc}{update}) in accordance to $\Ledger(n)$ in the following way:
\begin{enumerate}
\item $\Replica$ will \emph{rollback} every client request it executed and that is not included in $\Ledger(n)$;
\item $\Replica$ will obtain a commit certificate for each round $\rn$, $1 \leq \rn \leq \NVrncc{n}$, for which it does not yet have a commit certificate, execute the newly obtained requests in order, and inform the client of the result,
\item $\Replica$ will expect the new primary to \emph{repropose} client requests $\Cert{\T_{\rn}}{\Client_{\rn}}$, $\NVrncc{n} < \rn \leq \NVrn{n}$, in round $\rn$ of view $\View + 1$. If the new primary fails to do so, then failure of view $\View + 1$ is detected.
\end{enumerate}

The last step assures that all non-faulty replicas that receive a new-view proposal for view $\View+1$ will only proceed in this new view if they all received \emph{compatible} new-view proposals that each represent the same \emph{unique} ledger.

After updating its internal state, replica $\Replica$ will start the normal-case protocol for view $\View+1$ by accepting any proposal from $\Primary'$ for rounds after $\NVrn{n}$.

Next, we prove that the view-change protocol outlined above satisfies Goal~\ref{vcg:preserve} and Goal~\ref{vcg:recover}.

\paragraph{The View-Change Protocol Satisfies Goal~\ref{vcg:preserve}}
A client request $\Cert{\T}{\Client}$ needs to be preserved by the view-change protocol as the $\rn$-th request if it either has a $(\View, \rn)$-proof-of-execution or a replica stored a commit certificate $\CC{m}$ for some proposal $m = \Message{Propose}{\Cert{\T}{\Client}, \View, \rn}$. If $\Cert{\T}{\Client}$ has a $(\View, \rn)$-proof-of-execution, then there must be a set of identical messages \Message{Inform}{\Cert{\T}{\Client}, \View, \rn, r} signed by $\nf$ distinct replicas. Likewise, if $\Cert{\T}{\Client}$ has a commit certificate $\CC{m}$, then there must be a set of identical messages \Message{CheckCommit}{m} signed by $\nf$ distinct replicas. As there are at-most $\f$ faulty replicas, at-least $\nf - \f$ of these messages must originate from non-faulty replicas, which will only produce these messages after they prepared and executed proposal $m$.  Hence, a \emph{necessary condition} for the preservation of $\Cert{\T}{\Client}$ is the existence of a proposal $m$ that is executed by at-least $\nf - \f$ non-faulty replicas. Next, we prove that the view-change protocol preserves such requests:

\begin{theorem}\label{thm:vc_all}
Let $\View$ be the first view in which a proposal $m = \Message{Propose}{\Cert{\T}{\Client}, \View, \rn}$ for round $\rn$ was executed by $\nf - \f$ non-faulty replicas, let $m' = \Message{Propose}{\Cert{\T'}{\Client'}, \View', \rn}$ be any proposal for round $\rn$ with $\View \leq \View'$, and let $n = \Message{NewView}{\View[w] + 1, \VI}$, $\View \leq \View' \leq \View[w]$, be a well-formed new-view proposal. The following properties hold:
\begin{enumerate}
    \item\label{thm:vc_all:no_commit} there exist no commit certificates $\CC{\Message{Propose}{\Cert{\T''}{\Client''}, \View'', \rn}}$ with $\View'' < \View$;
    \item\label{thm:vc_all:future_sign} in views $\View'$, $\View < \View'$,  non-faulty replicas only sign \Message{Prepare}{m'} if $\Cert{\T'}{\Client'} = \Cert{\T}{\Client}$;
    \item\label{thm:vc_all:prepared_nodiv} if there exists a prepared certificate $\PC{m'}$, then $\Cert{\T'}{\Client'} = \Cert{\T}{\Client}$;
    \item\label{thm:vc_all:nv_rounds} if $\nf - \f$ non-faulty replicas executed $m'$, then $\rn \leq \NVrn{n}$;
    \item\label{thm:vc_all:nv_rounds_cc} if $\nf - \f$ non-faulty replicas committed $m'$, then $\rn \leq \NVrncc{n}$; and
    \item\label{thm:vc_all:nv_ledger} $\Ledger(n)[\rn] = \Cert{\T}{\Client}$;
\end{enumerate}
\end{theorem}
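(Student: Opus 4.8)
The plan is to prove the six properties simultaneously by strong induction on $\View[w]$, the view index of the new-view proposal $n$ (the largest view occurring in the statement). Property \inref{thm:vc_all:no_commit} stands apart and is disposed of first: a commit certificate for a round-$\rn$ proposal $m''$ of some view $\View'' < \View$ would, by Theorem~\subref{thm:main_cc}{trd}, force at least $\nf-\f$ non-faulty replicas to have executed $m''$ in view $\View''$, contradicting the minimality of $\View$. For the induction, fix $w \geq \View$ and assume \inref{thm:vc_all:no_commit}--\inref{thm:vc_all:nv_ledger} hold whenever the relevant new-view proposal has index strictly below $w$; I then establish, in order, \inref{thm:vc_all:future_sign} and \inref{thm:vc_all:prepared_nodiv} for all views $\View' \leq w$, then \inref{thm:vc_all:nv_rounds} and \inref{thm:vc_all:nv_rounds_cc} for $n$, and finally \inref{thm:vc_all:nv_ledger} for $n$. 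Two facts do the bookkeeping: (i) the quorum fact --- any two sets of $\geq \nf-\f$ non-faulty replicas intersect (as $\nf > 2\f$), and a non-faulty replica signs \MName{Prepare} for at most one round-$\rn$ proposal per view; and (ii) \emph{persistence} --- a non-faulty replica that has executed the round-$\rn$ request $\Cert{\T}{\Client}$ never rolls round $\rn$ back to a different request, because every intervening view-change into a view $\leq w$ proceeds via a new-view proposal $n''$ of index $< w$ with $\Ledger(n'')[\rn] = \Cert{\T}{\Client}$ by the induction hypothesis \inref{thm:vc_all:nv_ledger} (and $\rn \leq \NVrn{n''}$ by the induction hypothesis \inref{thm:vc_all:nv_rounds}), so the replica retains either a commit certificate for round $\rn$ or a prepared certificate for a round-$\rn$ proposal proposing $\Cert{\T}{\Client}$.

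For \inref{thm:vc_all:future_sign}: a non-faulty replica in a view $\View' > \View$ signs \MName{Prepare} for a round-$\rn$ proposal only after it accepted the new-view proposal $n'$ of index $\View'-1 < w$ that brought it into $\View'$, and by the induction hypothesis $\Ledger(n')[\rn] = \Cert{\T}{\Client}$, so every round-$\rn$ proposal it prepares (a reproposal consistent with that ledger, or a later new proposal) proposes $\Cert{\T}{\Client}$. Property \inref{thm:vc_all:prepared_nodiv} follows since a prepared certificate carries $\geq \nf-\f \geq 1$ non-faulty \MName{Prepare} signatures: for $\View' > \View$ it inherits the request from \inref{thm:vc_all:future_sign}, and for $\View' = \View$ the quorum fact together with Theorem~\subref{thm:main_nc}{fst} forces the unique prepared round-$\rn$ proposal of view $\View$ to be $m$ itself (which does carry a prepared certificate, since each of the $\nf-\f$ non-faulty executors of $m$ collected one). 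With \inref{thm:vc_all:prepared_nodiv} in hand, every round-$\rn$ commit certificate --- these carry prepared certificates by Theorem~\subref{thm:main_cc}{trd} --- proposes $\Cert{\T}{\Client}$; this is precisely Assumption~\ref{ass:vc} restricted to round $\rn$ and the views in play, so Lemma~\ref{lem:unique_vc} applies to $n$.

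Properties \inref{thm:vc_all:nv_rounds} and \inref{thm:vc_all:nv_rounds_cc} are intersection arguments: the $\nf-\f$ non-faulty replicas that executed (resp.\ committed) $m'$ meet the $\geq \nf-\f$ non-faulty senders of \MName{ViewState} messages aggregated in $\VI$; a replica $\Replica$ in the intersection executed $m'$ (which proposes $\Cert{\T}{\Client}$ by \inref{thm:vc_all:prepared_nodiv}), and by persistence it still witnesses round $\rn$ when it later sends its \MName{ViewState} for the view-change underlying $n$ --- either with a prepared certificate in its $\VE$, giving $\rn \leq \NVrn{n}$, or, in the committed case, with a commit certificate for a round $\geq \rn$ (using the induction hypothesis \inref{thm:vc_all:nv_rounds_cc} to rule out demotion of round $\rn$ at intermediate view-changes), giving $\rn \leq \NVrncc{n}$. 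For \inref{thm:vc_all:nv_ledger}, instantiating $m' := m$, $\View' := \View$ in the just-proven \inref{thm:vc_all:nv_rounds} gives $\rn \leq \NVrn{n}$, so $\Ledger(n)[\rn]$ is defined; if $\rn \leq \NVrncc{n}$, then by \inref{thm:vc_all:no_commit}, \inref{thm:vc_all:prepared_nodiv}, Theorem~\subref{thm:main_cc}{trd} and Lemma~\ref{lem:unique_vc} the singleton $\NVcc{\rn}$ equals $\{\Cert{\T}{\Client}\}$; if $\NVrncc{n} < \rn \leq \NVrn{n}$, then persistence puts a round-$\rn$ proposal from a view $\geq \View$ into $\NVP{n}{\rn}$, so the most-recent view $v^\ast \geq \View$ occurring in $\NVP{n}{\rn}$ carries, by \inref{thm:vc_all:prepared_nodiv} and Theorem~\subref{thm:main_nc}{fst}, the unique proposal $m$-at-$v^\ast$ proposing $\Cert{\T}{\Client}$, whence $\NVm{n}{\rn} = \{\Cert{\T}{\Client}\}$. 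Either way $\Ledger(n)[\rn] = \Cert{\T}{\Client}$. The base case $\View[w] = \View$ is this same argument with every appeal to an intervening view-change vacuous (every \MName{ViewState} sender in $\VI$ is still in view $\View$, so all proposals in $\NVP{n}{\rn}$ come from views $\leq \View$).

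The main obstacle is keeping the induction well-founded: properties \inref{thm:vc_all:future_sign}--\inref{thm:vc_all:nv_ledger}, Assumption~\ref{ass:vc}, and Lemma~\ref{lem:unique_vc} are mutually recursive, and the delicate point is verifying that every appeal --- inside the persistence argument, inside the no-demotion step for \inref{thm:vc_all:nv_rounds_cc}, and in the derivation of Assumption~\ref{ass:vc} from \inref{thm:vc_all:prepared_nodiv} --- is to an instance whose new-view proposal has index strictly below $w$ (or to the standalone \inref{thm:vc_all:no_commit}), so that no circularity arises. A secondary difficulty is the case analysis within persistence between a round covered by a commit certificate and a round covered only by an executed-and-reproposed request, together with the observation that a non-faulty replica executes its round-$\rn$ proposal \emph{before} it halts the normal-case protocol and emits its \MName{ViewState} message.
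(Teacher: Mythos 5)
Your proposal is correct and follows essentially the same route as the paper's proof: property~\inref{thm:vc_all:no_commit} is disposed of standalone via Theorem~\subref{thm:main_cc}{trd} and the minimality of $\View$, and properties~\inref{thm:vc_all:future_sign}--\inref{thm:vc_all:nv_ledger} are proved jointly by induction on $\View[w]$ using quorum intersection, Theorem~\subref{thm:main_nc}{fst}, and the induction hypothesis applied to the intermediate new-view proposals. The only difference is presentational: you isolate the ``persistence'' invariant and the well-foundedness of the mutual recursion with Assumption~\ref{ass:vc} and Lemma~\ref{lem:unique_vc} as explicit bookkeeping, where the paper inlines the same reasoning by tracking the last view in which the intersecting replica executed transactions.
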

\begin{proof}
First, we prove~\inref{thm:vc_all:no_commit} by contradiction. Assume there exists a commit certificate $\CC{\Message{Propose}{\Cert{\T''}{\Client''}, \View'', \rn}}$ with $\View'' < \View$. Hence, there exist messages \Message{CheckCommit}{\PC{m''}_i}, $1 \leq i \leq \nf$, signed by $\nf$ distinct replicas (\lfref{fig:poe_cc}{store}). At-least $\nf-\f$ of these messages are signed by non-faulty replicas. As non-faulty replicas only construct and sign messages \Message{CheckCommit}{\PC{m''}_i} if they prepared and executed $m''$ (\lfref{fig:poe_cc}{broad}) and $\View$ was the first view in which $\nf-\f$ non-faulty replicas executed a proposal for round $\rn$, we must have $\View \leq \View''$, a contradiction, and we conclude that $\CC{m''}$ does not exist.

\smallskip

We prove~\inref{thm:vc_all:future_sign}--\inref{thm:vc_all:nv_ledger} by induction on the current view $\View[w]$. As the base case, we prove that each of the statements~\inref{thm:vc_all:future_sign}--\inref{thm:vc_all:nv_ledger} hold in view $\View[w] = \View$.
 
\inref{thm:vc_all:future_sign}. We have $\View[w] = \View' = \View$. Hence, the statement voidly holds.

\inref{thm:vc_all:prepared_nodiv}. As $\View[w] = \View$, we only need to consider $\View = \View'$. If $\View = \View'$, then, by Theorem~\subref{thm:main_nc}{fst}, we have $m' = m$ and $\Cert{\T'}{\Client'} = \Cert{\T}{\Client}$.

\inref{thm:vc_all:nv_rounds} and~\inref{thm:vc_all:nv_rounds_cc}.
As $\View[w] = \View$, we can use~\inref{thm:vc_all:no_commit} and the proof of~\inref{thm:vc_all:prepared_nodiv}, to derive that $m' = m$. Let $n = \Message{NewView}{\View[w] + 1, \VI}$, $\View[w] = \View$, be a well-formed new-view proposal. Let $C$ be the set of $\nf-\f$ non-faulty replicas that executed $m$ (proof for~\inref{thm:vc_all:nv_rounds}) or committed $m$ (proof for~\inref{thm:vc_all:nv_rounds_cc}), let $S$ be the set of $\nf$ distinct replicas that signed the \MName{ViewState} messages included in $\VI$, and let $T = S \difference \Faulty$ be the non-faulty replicas in $S$. By construction, we have $\abs{T} \geq \nf - \f$.  Hence, using the same contradiction argument as used in the proof of Theorem~\subref{thm:main_nc}{fst}, we can conclude that $(C \intersect T) \neq \emptyset$. Let $\Replica \in (C \intersect T)$ be such a non-faulty replica that executed $m$ (proof for~\inref{thm:vc_all:nv_rounds}) or committed $m$ (proof for~\inref{thm:vc_all:nv_rounds_cc}) and that signed a message $\Message{ViewState}{\View, \CC{m_{\Replica}}, \VE} \in \VI$ with $m_{\Replica} = \Message{Propose}{\Cert{\T_{\Replica}}{\Client_{\Replica}}, \View_{\Replica}, \rn_{\Replica}}$.

If $\Replica$ committed $m$, then we must have $\rn \leq \rn_{\Replica}$ as non-faulty replicas commit proposals in order (\lfref{fig:poe_cc}{wait_store}) and $m_{\Replica}$ is the last proposal $\Replica$ committed (\lfref{fig:poe_vc}{include_ve}). Hence, by the definition of $\NVrncc{n}$ and $\NVrn{n}$, we have $\rn \leq \rn_{\Replica} \leq \NVrncc{n} \leq \NVrn{n}$. 

If $\Replica$ executed $m$ without committing $m$, then $\rn > \rn_{\Replica}$. As $\Replica$ executed $m$, $\Replica$ must have stored a prepared certificate $\PC{m}$ (\lfref{fig:poe_nc}{store}) and we conclude $\PC{m} \in \VE$ (\lfref{fig:poe_vc}{include_ve}). By the definition of $\NVP{n}{\rn}$ and the definition of $\NVrn{n}$, we conclude $m \in \NVP{n}{\rn}$ and $\rn \leq \NVrn{n}$.

\inref{thm:vc_all:nv_ledger}. As $\nf-\f$ non-faulty replicas executed $m$ in view $\View[w] = \View$, we have $\rn \leq \NVrn{n}$ by~\inref{thm:vc_all:nv_rounds_cc}. Hence, $\Ledger(n)[\rn]$ is defined.

If $\rn \leq \NVrncc{n}$, then, by the definition of $\Ledger(n)$ and $\NVcc{\rn}$, we have $\Ledger(n)[\rn] = \Cert{\T_n}{\Client_n}$ with $\Cert{\T_n}{\Client_n} \in \NVcc{\rn}$ and there exists a commit certificate $\CC{m_n}$ for some proposal $m_n = \Message{Propose}{\Cert{\T_n}{\Client_n}, \View_n, \rn}$. By~\inref{thm:vc_all:no_commit}, we conclude that $\View \leq \View_n$ and, hence, $\View_n = \View$. Due to Theorem~\subref{thm:main_cc}{trd}, there exists a prepared certificate $\PC{m_n}$ and, by~\inref{thm:vc_all:prepared_nodiv}, we conclude $\Cert{\T_n}{\Client_n} = \Cert{\T}{\Client}$.

If $\NVrncc{n} < \rn \leq \NVrn{n}$, then, by the definition of $\Ledger(n)$, $\NVm{n}$, and $\NVP{n}{\rn}$, we have $\Ledger(n)[\rn] = \Cert{\T_n}{\Client_n}$ with $\Cert{\T_n}{\Client_n} \in \NVm{n}{\rn}$ and there exists a prepared certificate $\PC{m_n}$ for some proposal $m_n = \Message{Propose}{\Cert{\T_n}{\Client_n}, \View_n, \rn}$ with $m_n \in \NVP{n}{\rn}$. Using the same reasoning as in the proof of~\inref{thm:vc_all:nv_rounds}, there exists a non-faulty replica $\Replica$ that executed $m$ and signed a message $\Message{ViewState}{\View, \CC{m_{\Replica}}, \VE} \in \VI$. As $\NVrncc{n} < \rn$, $\Replica$ did not commit $m$, $\PC{m} \in \VE$, and $m \in \NVP{n}{\rn}$. By the definition of $\NVm{n}{\rn}$, we conclude $\View \leq \View_n$. Hence, $\View_n = \View$ and, by~\inref{thm:vc_all:prepared_nodiv}, we conclude $\Cert{\T_n}{\Client_n} = \Cert{\T}{\Client}$.

\smallskip

As the induction hypothesis, we assume that~\inref{thm:vc_all:future_sign}--\inref{thm:vc_all:nv_ledger} hold in every view $j$, $\View \leq j < \View[w]$. Now consider view $\View[w]$ and let $n_{\View[w]} = \Message{NewView}{\View[w], \VI_{\View[w]}}$ be a well-formed new-view proposal that can be used to enter view $\View[w]$ (\lfref{fig:poe_vc}{nv_msg}). Next, we prove the inductive step for each of the statements~\inref{thm:vc_all:future_sign}--\inref{thm:vc_all:nv_ledger}. 

\inref{thm:vc_all:future_sign}. Let \Message{Prepare}{m'}, $\View < \View' \leq \View[w]$, be a message signed by a non-faulty replica. We apply induction hypothesis~\inref{thm:vc_all:nv_rounds} on $n_{\View[w]}$ and conclude $\rn \leq \NVrn{n_{\View[w]}}$. If, furthermore, $\rn \leq \NVrncc{n_{\View[w]}}$, then the primary of view $\View[w]$ cannot propose for round $\rn$ and we must have $\View \leq \View' < \View[w]$. We apply induction hypothesis~\inref{thm:vc_all:future_sign} on $\Message{Prepare}{m'}$ to conclude $\Cert{\T'}{\Client'} = \Cert{\T}{\Client}$. Otherwise, if $\NVrncc{n_{\View[w]}} < \rn \leq \NVrn{n_{\View[w]}}$, then the primary of view $\View[w]$ can only propose $\Cert{\T}{\Client}$ for round $\rn$ (\lfref{fig:poe_vc}{update}), and we conclude $\Cert{\T'}{\Client'} = \Cert{\T}{\Client}$.

\inref{thm:vc_all:prepared_nodiv}. Let $\PC{m'}$ be a prepared certificate, $\View \leq \View' \leq \View[w]$. If $\View' < \View[w]$, then we apply induction hypothesis~\inref{thm:vc_all:prepared_nodiv} to conclude that $\Cert{\T'}{\Client'} = \Cert{\T}{\Client}$. Otherwise, if $\View' = \View$, then there exist messages \Message{Prepare}{m} signed by $\nf$ distinct replicas (\lfref{fig:poe_nc}{store}). At-least $\nf-\f$ of these messages are signed by non-faulty replicas. By~\inref{thm:vc_all:future_sign}, these non-faulty replicas will only sign $m'$ if $\Cert{\T'}{\Client'} = \Cert{\T}{\Client}$.

\inref{thm:vc_all:nv_rounds} and~\inref{thm:vc_all:nv_rounds_cc}.
Let $n = \Message{NewView}{\View[w] + 1, \VI}$ be a well-formed new-view proposal. Let $C$ be the set of $\nf-\f$ non-faulty replicas that executed $m'$ (proof for~\inref{thm:vc_all:nv_rounds}) or committed $m'$ (proof for~\inref{thm:vc_all:nv_rounds_cc}), let $S$ be the set of $\nf$ distinct replicas that signed the \MName{ViewState} messages included in $\VI$, and let $T = S \difference \Faulty$ be the non-faulty replicas in $S$. By construction, we have $\abs{T} \geq \nf - \f$.  Hence, using the same contradiction argument as used in the proof of Theorem~\subref{thm:main_nc}{fst}, we can conclude that $(C \intersect T) \neq \emptyset$. Let $\Replica \in (C \intersect T)$ be such a non-faulty replica that executed $m'$ (proof for~\inref{thm:vc_all:nv_rounds}) or committed $m'$ (proof for~\inref{thm:vc_all:nv_rounds_cc}) and that signed a message $\Message{ViewState}{\View, \CC{m_{\Replica}}, \VE} \in \VI$ with $m_{\Replica} = \Message{Propose}{\Cert{\T_{\Replica}}{\Client_{\Replica}}, \View_{\Replica}, \rn_{\Replica}}$. Let $\View''$ be the last view in which $\Replica$ executed transactions. As $\Replica$ executed $m'$, we have $\View \leq \View' \leq \View'' \leq \View[w]$. 

If $\View' = \View''$, then $\Replica$ is guaranteed to execute $m'$ in view $\View''$ (proof of~\inref{thm:vc_all:nv_rounds}) or to commit $m'$ in view $\View''$ (proof of~\inref{thm:vc_all:nv_rounds_cc}). If $\View' < \View''$, then $\Replica$ used a well-formed new-view message $n'' = \Message{NewView}{\View[w], \VI_{\View[w]}}$ to enter view $\View'' \leq \View[w]$ (\lfref{fig:poe_vc}{nv_msg}). For the proof of~\inref{thm:vc_all:nv_rounds}, we apply induction hypothesis~\inref{thm:vc_all:nv_rounds} on $n''$ to conclude that $\rn \leq \NVrn{n''}$. As $\Replica$ executed transactions in view $\View''$, $\Replica$ is guaranteed to have executed some proposal for round $\rn$ while updating its internal state (\lfref{fig:poe_vc}{update}). For the proof of~\inref{thm:vc_all:nv_rounds_cc}, we apply induction hypothesis~\inref{thm:vc_all:nv_rounds_cc} on $n''$ to conclude that $\rn \leq \NVrncc{n''}$. As $\Replica$ executed transactions in view $\View''$, $\Replica$ is guaranteed to have committed some proposal for round $\rn$ while updating its internal state (\lfref{fig:poe_vc}{update}).  

If $\Replica$ is guaranteed to have committed some proposal $m''$ in round $\rn$ of view $\View''$ or in round $\rn$ when entering view $\View''$, then we must have $\rn \leq \rn_{\Replica}$ as non-faulty replicas commit proposals in order (\lfref{fig:poe_cc}{wait_store}) and $m_{\Replica}$ is the last proposal $\Replica$ committed (\lfref{fig:poe_vc}{include_ve}). Hence, by the definition of $\NVrncc{n}$ and $\NVrn{n}$, we have $\rn \leq \rn_{\Replica} \leq \NVrncc{n} \leq \NVrn{n}$. 

If $\Replica$ is guaranteed to have executed some proposal $m''$ in round $\rn$ of view $\View''$ or in round $\rn$ when entering view $\View''$ without committing $m''$, then $\rn > \rn_{\Replica}$. As $\Replica$ executed $m''$, $\Replica$ must have stored a prepared certificate $\PC{m''}$ (\lfref{fig:poe_nc}{store}) and we conclude $\PC{m''} \in \VE$ (\lfref{fig:poe_vc}{include_ve}). By the definition of $\NVP{n}{\rn}$ and the definition of $\NVrn{n}$, we conclude $m \in \NVP{n}{\rn}$ and $\rn \leq \NVrn{n}$.

\inref{thm:vc_all:nv_ledger}
 As $\nf-\f$ non-faulty replicas executed $m$ in view $\View$, we have $\rn \leq \NVrn{n}$ by~\inref{thm:vc_all:nv_rounds_cc}. Hence, $\Ledger(n)[\rn]$ is defined.

If $\rn \leq \NVrncc{n}$, then, by the definition of $\Ledger(n)$ and $\NVcc{\rn}$, we have $\Ledger(n)[\rn] = \Cert{\T_n}{\Client_n}$ with $\Cert{\T_n}{\Client_n} \in \NVcc{\rn}$ and there exists a commit certificate $\CC{m_n}$ for some proposal $m_n = \Message{Propose}{\Cert{\T_n}{\Client_n}, \View_n, \rn}$. By~\inref{thm:vc_all:no_commit}, we conclude that $\View \leq \View_n$. Due to Theorem~\subref{thm:main_cc}{trd}, there exists a prepared certificate $\PC{m_n}$ and, by~\inref{thm:vc_all:prepared_nodiv}, we conclude $\Cert{\T_n}{\Client_n} = \Cert{\T}{\Client}$.

If $\NVrncc{n} < \rn \leq \NVrn{n}$, then, by the definition of $\Ledger(n)$, $\NVm{n}{\rn}$, and $\NVP{n}{\rn}$, we have $\Ledger(n)[\rn] = \Cert{\T_n}{\Client_n}$ with $\Cert{\T_n}{\Client_n} \in \NVm{n}{\rn}$ and there exists a prepared certificate $\PC{m_n}$ for some proposal $m_n = \Message{Propose}{\Cert{\T_n}{\Client_n}, \View_n, \rn}$ with $m_n \in \NVP{n}{\rn}$. As $\nf-\f$ replicas executed $m$, we can use the same reasoning as in the proof of~\inref{thm:vc_all:nv_rounds} to obtain a non-faulty replica $\Replica$ that executed $m$ and signed a message $\Message{ViewState}{\View, \CC{m_{\Replica}}, \VE} \in \VI$ with $m_{\Replica} = \Message{Propose}{\Cert{\T_{\Replica}}{\Client_{\Replica}}, \View_{\Replica}, \rn_{\Replica}}$. Let $\View''$ be the last view in which $\Replica$ executed transactions. As $\Replica$ executed $m$, we have $\View \leq \View'' \leq \View[w]$.

If $\View = \View''$, then $\Replica$ executed some proposal $m'' = m$ in view $\View''$. If $\View' < \View''$, then $\Replica$ used some new-view message $n'' = \Message{NewView}{\View[w], \VI_{\View[w]}}$ to enter view $\View'' \leq \View[w]$  (\lfref{fig:poe_vc}{nv_msg}). As $\Replica$ executed transactions in view $\View''$, $\Replica$ is guaranteed to have executed some proposal $m''$ for round $\rn$ while updating its internal state (\lfref{fig:poe_vc}{update}). As $\NVrncc{n} < \rn$, $\Replica$ did not commit $m''$ in round $\rn$ , $\PC{m''} \in \VE$, and $m'' \in \NVP{n}{\rn}$. By the definition of $\NVm{n}{\rn}$, we conclude $\View \leq \View_n$. By~\inref{thm:vc_all:prepared_nodiv}, we conclude $\Cert{\T_n}{\Client_n} = \Cert{\T}{\Client}$.
\end{proof}

Theorem~\ref{thm:vc_all} not only proves that the view-change protocol satisfies Goal~\ref{vcg:preserve}, it also proves that Assumption~\ref{ass:vc} holds.

\begin{corollary}[Assumption~\ref{ass:vc}]\label{cor:ass}
If there exists a commit certificate $\CC{m}$ for proposal $m = \Message{Propose}{\Cert{\T}{\Client}, \View, \rn}$, then every commit certificate $\CC{m'}$ for round $\rn$ is a commit certificate for a proposal proposing $\Cert{\T}{\Client}$.
\end{corollary}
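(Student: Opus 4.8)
The plan is to obtain Corollary~\ref{cor:ass} as a direct consequence of Theorem~\ref{thm:vc_all}, using Theorem~\subref{thm:main_cc}{trd} only to supply the hypotheses that Theorem~\ref{thm:vc_all} expects. In particular, no new induction, quorum-intersection, or certificate-counting argument is needed: all of that work already lives inside Theorem~\ref{thm:vc_all}.

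First I would set up notation. Suppose $\CC{m}$ and $\CC{m'}$ are commit certificates for round $\rn$, where $m = \Message{Propose}{\Cert{\T}{\Client}, \View, \rn}$ and $m' = \Message{Propose}{\Cert{\T'}{\Client'}, \View', \rn}$; the goal is $\Cert{\T'}{\Client'} = \Cert{\T}{\Client}$. By Theorem~\subref{thm:main_cc}{trd}, the existence of $\CC{m}$ forces at-least $\nf-\f$ non-faulty replicas to have stored a prepared certificate for $m$ and executed $m$, and the same holds for $\CC{m'}$. Hence the set of views in which some round-$\rn$ proposal is executed by $\nf-\f$ non-faulty replicas is non-empty; let $\View_0$ be its minimum and let $m_0 = \Message{Propose}{\Cert{\T_0}{\Client_0}, \View_0, \rn}$ be a proposal witnessing membership of $\View_0$ (we do not need it to be unique). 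Since $\View$ and $\View'$ also lie in this set, we have $\View_0 \leq \View$ and $\View_0 \leq \View'$.

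Next I would invoke Theorem~\ref{thm:vc_all} with $\View_0$ playing the role of its ``first view'' and $m_0$ playing the role of its proposal $m$. Instantiating the theorem's arbitrary round-$\rn$ proposal (which is only required to have view at-least $\View_0$) first as the corollary's $m$ and then as the corollary's $m'$: in both cases a prepared certificate exists — again by Theorem~\subref{thm:main_cc}{trd} applied to $\CC{m}$ and to $\CC{m'}$ — so Theorem~\subref{thm:vc_all}{prepared_nodiv} yields $\Cert{\T}{\Client} = \Cert{\T_0}{\Client_0}$ and $\Cert{\T'}{\Client'} = \Cert{\T_0}{\Client_0}$. Chaining these two equalities gives $\Cert{\T'}{\Client'} = \Cert{\T}{\Client}$, so $\CC{m'}$ is indeed a commit certificate for a proposal proposing $\Cert{\T}{\Client}$, as claimed. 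As a sanity check, Theorem~\subref{thm:vc_all}{no_commit} additionally rules out any round-$\rn$ commit certificate in a view below $\View_0$, so the premises used here are mutually consistent.

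The only step requiring any care — the sole ``obstacle,'' if there is one — is justifying that the premises of Theorem~\ref{thm:vc_all} are available: one must route through Theorem~\subref{thm:main_cc}{trd} to pass from ``a commit certificate exists'' to ``$\nf-\f$ non-faulty replicas executed the proposal and a prepared certificate for it exists,'' and one must perform the $\View_0$-minimization so that the ``first view'' hypothesis of Theorem~\ref{thm:vc_all} is met. Beyond that, the proof is a verbatim application of Theorem~\subref{thm:vc_all}{prepared_nodiv}.
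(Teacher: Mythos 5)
Your proposal is correct and follows essentially the same route as the paper: both pass through Theorem~\subref{thm:main_cc}{trd} to extract prepared certificates and the ``executed by $\nf-\f$ non-faulty replicas'' hypothesis, minimize over views to obtain the first such view, and then invoke Theorem~\subref{thm:vc_all}{prepared_nodiv}. The only difference is presentational: the paper applies the theorem once (leaving the symmetric application to the second certificate implicit), whereas you apply it to both $m$ and $m'$ and chain the equalities through $\Cert{\T_0}{\Client_0}$, which is the more explicit and arguably cleaner rendering of the same argument.
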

\begin{proof}
By Theorem~\subref{thm:main_cc}{trd}, at-least $\nf - \f$ non-faulty replicas stored a prepared certificate for $m$ and executed $m$ whenever a commit certificate $\CC{m}$ exists. Let $m' = \Message{Propose}{\Cert{\T'}{\Client'}, \View', \rn}$, $\View' \leq \View$, be the first proposal for round $\rn$ that is executed by $\nf - \f$ non-faulty replicas. By Theorem~\subref{thm:vc_all}{prepared_nodiv}, we conclude that $\Cert{\T'}{\Client'} = \Cert{\T}{\Client}$.
\end{proof}

Theorem~\ref{thm:vc_all} and Corollary~\ref{cor:ass} assure that one can derive a unique ledger from each well-formed new-view proposal $n = \Message{NewView}{\View + 1, \VI}$. To resume the normal-case protocol, we also need the guarantee that each non-faulty replica derives the \emph{same} ledger from the new-view proposal they receive, even if they receive different new-view proposals. For rounds for which a commit certificate exists, Theorem~\ref{thm:vc_all} already provides this guarantee, whereas for rounds for which only prepared certificates exists, the \emph{repropose} mechanisms will enforce this guarantee. Finally, to resume the normal-case protocol, each individual replica also needs to be able to derive from $n$ the exact content of $\Ledger(n)$. Unfortunately, the message $n$ itself only contains all necessary information to derive $\Ledger(n)[\rn]$ for all $\rn$, $\NVrncc{n} \leq \rn \leq \NVrn{n}$. Fortunately, also the remainder of $\Ledger(n)$ can be derived:

\begin{lemma}\label{lem:missing_cc}
Let $n =  \Message{NewView}{\View + 1, \VI}$ be a well-formed new-view proposal. If communication is reliable, then every non-faulty replica can derive $\Ledger(n)$ (this independent of their internal state).
\end{lemma}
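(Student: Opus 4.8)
The plan is to separate $\Ledger(n)$ into the portion that is directly encoded inside $n$ and the portion that must be fetched from other replicas, and then to show the latter can always be obtained under reliable communication. Since $n$ carries $\VI$, every non-faulty replica can immediately compute $\NVrncc{n}$, $\NVrn{n}$, and the sets $\NVP{n}{\rn}$ and $\NVm{n}{\rn}$ for all $\NVrncc{n} < \rn \leq \NVrn{n}$; hence, by Lemma~\ref{lem:unique_vc} (whose hypothesis Assumption~\ref{ass:vc} is discharged by Corollary~\ref{cor:ass}), it can compute $\Ledger(n)[\rn]$ for every round $\NVrncc{n} < \rn \leq \NVrn{n}$. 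Moreover, by the definition of $\NVrncc{n}$, the set $\VI$ contains a commit certificate $\CC{m}$ for some round-$\NVrncc{n}$ proposal $m = \Message{Propose}{\Cert{\T}{\Client}, \View, \NVrncc{n}}$, and by Corollary~\ref{cor:ass} this certificate witnesses the unique request in $\NVcc{\NVrncc{n}}$, so $\Ledger(n)[\NVrncc{n}]$ is directly derivable as well. The only entries that may be missing are therefore $\Ledger(n)[\rn] = \NVcc{\rn}$ for those rounds $1 \leq \rn < \NVrncc{n}$ for which $n$ happens to include no commit certificate.

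For these remaining entries I would invoke the catch-up step already alluded to in the new-view accept stage (``$\Replica$ will obtain a commit certificate for each round $\rn \leq \NVrncc{n}$ for which it does not yet have one''). Applying Theorem~\subref{thm:main_cc}{trd} to the commit certificate $\CC{m}$ for round $\NVrncc{n}$ produces a set $R$ of at-least $\nf - \f \geq \f+1$ non-faulty replicas, each of which has stored commit certificates for \emph{every} round $\rn' < \NVrncc{n}$. In particular $R$ is non-empty, and each replica in $R$ holds the complete prefix of commit certificates for rounds $1, \dots, \NVrncc{n}-1$. A non-faulty replica lacking some of these certificates broadcasts a request for them; since communication is reliable, this request reaches every replica in $R$, which reply with the requested certificates. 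Thus the requesting replica receives, for each round $\rn < \NVrncc{n}$, at-least one commit certificate for round $\rn$.

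It remains to argue that the fetched certificates are both self-verifying and unambiguous. A commit certificate is a set of $\nf$ \MName{CheckCommit} messages signed by distinct replicas for a common well-formed round-$\rn$ proposal, so its validity is checkable from the public keys alone and is independent of the receiver's internal state. By Corollary~\ref{cor:ass} --- equivalently, by the singleton property of $\NVcc{\rn}$ in Lemma~\ref{lem:unique_vc} --- every valid commit certificate for round $\rn$ witnesses the same request $\NVcc{\rn}$, so regardless of which certificate the receiver obtains it derives exactly $\Ledger(n)[\rn]$. Combining this with the directly-derivable part shows that every non-faulty replica can reconstruct all of $\Ledger(n)$, as claimed.

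The point that needs the most care is the gap between what is literally serialized in $n$ and what $\Ledger(n)$ demands: a \MName{NewView} message carries only one commit certificate per \MName{ViewState} message, so one must both exhibit the recovery step and justify --- via Theorem~\subref{thm:main_cc}{trd} --- that enough non-faulty replicas retain the \emph{full} prefix of commit certificates for the recovery to be guaranteed to succeed. Once that is in place, verifiability and uniqueness of the recovered certificates follow routinely from the non-forgeability of signatures and Corollary~\ref{cor:ass}.
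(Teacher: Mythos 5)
Your proposal is correct and follows essentially the same route as the paper's proof: it isolates the rounds $\rn < \NVrncc{n}$ as the only entries not directly serialized in $n$, invokes Theorem~\subref{thm:main_cc}{trd} on the round-$\NVrncc{n}$ commit certificate to guarantee that at-least $\nf - \f$ non-faulty replicas hold the full prefix of commit certificates, recovers them via a query under reliable communication (the paper formalizes this as the \MName{QueryCC}/\MName{RespondCC} protocol of Figure~\ref{fig:query_protocol}), and appeals to Corollary~\ref{cor:ass} for uniqueness of the recovered requests. Your explicit treatment of the directly-derivable portion via Lemma~\ref{lem:unique_vc} is a detail the paper handles in the surrounding text rather than in the proof itself, but the substance is identical.
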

\begin{proof}
Let $\Replica$ be a non-faulty replica that receives $n$ and is unaware of a client request $\Ledger(n)[\rn]$, $\rn < \NVrncc{n}$. By the definition of $\NVrncc{n}$, there exists a message $m_{\MName{vs}} = \Message{ViewState}{\View, \CC{m}, \VE}$ with $m = \Message{Propose}{\Cert{\T}{\Client}, \View[w], \NVrncc{n}}$, $\View[w] \leq \View$. By Theorem~\subref{thm:main_cc}{trd}, the existence of $\CC{m}$ assures that at-least $\nf-\f$ non-faulty replicas stored commit certificates for some proposal $m' = \Message{Propose}{\Cert{\T'}{\Client'}, \View[w]', \rn}$.

Hence, $\Replica$ can query all replicas for the transaction they committed in round $\rn$. To do so, $\Replica$ uses the query protocol outlined in Figure~\ref{fig:query_protocol}. As the first step, $\Replica$ broadcasts a message \Message{QueryCC}{\rn} to all replicas (\lfref{fig:query_protocol}{broad}). Non-faulty replicas will respond with the message \Message{RespondCC}{\PC{m''}, \CC{m''}} if they committed some proposal $m''$ in round $\rn$ and will not respond otherwise (\lfref{fig:query_protocol}{response}). By Corollary~\ref{cor:ass}, no commit certificates for round $\rn$ can exist that proposes a request other than $\Cert{\T'}{\Client'}$. Hence, using any of the received messages \Message{RespondCC}{\CC{m''}}, $\Replica$ will be able to prepare, execute, and commit a proposal for $\Cert{\T'}{\Client'}$ in round $\rn$ (\lfref{fig:query_protocol}{perform}).
\end{proof}

\begin{figure}[t]
    \begin{myprotocol}
        \TITLE{Query role}{running at replica $\Replica$}
        \STATE Broadcast \Message{QueryCC}{\rn} to all replicas.\label{fig:query_protocol:broad}
        \EVENT{$\Replica$ receives a well-formed message \Message{RespondCC}{\PC{m},\CC{m}}}\label{fig:query_protocol:respond}
            \STATE Let $m = \Message{Propose}{\Cert{\T}{\Client}, \View', \rn'}$ and $\PC{m} = \{ m_1, \dots, m_{\nf} \}$.
            \IF{$\Replica$ did not commit in round $\rn'$}\label{fig:query_protocol:perform}
                \STATE Wait until all previous rounds have a commit certificate.
                \STATE Prepare $m$ and execute $\T$ using \lsfref{fig:poe_nc}{store}{inform} with \MName{Prepare} messages $m_1, \dots, m_{\nf}$.
                \STATE Store \emph{commit certificate} $\CC{m}$.\label{fig:query_protocol:store_cc}
            \ENDIF
        \ENDEVENT
        \SPACE
        \TITLE{Query response role}{running at every replica $\Replica[q] \in \Replicas$}
        \EVENT{$\Replica[q]$ receives messages \Message{QueryCC}{\rn} from $\Replica$}
            \IF{$\Replica[q]$ stored prepared certificate $\PC{m}$ and commit certificate $\CC{m}$\\
                \phantom{\textbf{event }}for some proposal $m = \Message{Propose}{\Cert{\T}{\Client}, \View, \rn}$}
                \STATE Send \Message{RespondCC}{\PC{m},\CC{m}} to replica $\Replica$.\label{fig:query_protocol:response}
            \ENDIF
        \ENDEVENT
        \end{myprotocol}
    \caption{The query protocol of Lemma~\ref{lem:missing_cc} used by replicas $\Replica \in \Replicas$ when recovering a missing committed proposal for round $\rn$.}\label{fig:query_protocol}
\end{figure}

\paragraph{The View-Change Protocol Satisfies Goal~\ref{vcg:recover}}

In Lemma~\ref{lem:sync} and Lemma~\ref{lem:sync_nv}, we already outlined the main timing-based synchronization steps used by the view-change protocol to guarantee Goal~\ref{vcg:recover}. This mechanism relies on the availability of some message delay value $\delta$ known to all non-faulty replicas that is long enough to guarantee the delivery of any message used by \PoE{} (see also Assumption~\ref{ass:size}). In practice, non-faulty replicas are not aware of this bound and must use some internal message delay estimate.

Let $\Replica$ be a non-faulty replica that uses internal message delay estimate $\Delta(\View, \Replica)$ in view $\View$. If $\Delta(\View, \Replica)$ is too small, then $\Replica$ can erroneously detect failure of $\View+1$ (e.g., at \lfref{fig:poe_vc}{failure_next}). To assure that $\Replica$ will not erroneously detect failure of consecutive views, $\Replica$ needs to eventually use an internal message delay estimate $\Delta(\View + i, \Replica)$ in view $\View + i$ for which $\Delta(\View + i, \Replica) \geq \delta$ holds. To do so, we assume that all non-faulty replicas $\Replica[q]$ have some \emph{backoff function} $f(i)$ for which $f(i) \geq i$ holds and use the internal message delay estimate $\Delta(\View + i, \Replica[q]) = f(i) \cdot \Delta(\View, \Replica[q])$. We have:

\begin{lemma}\label{lem:poe_timing_proof}
Let $\Delta(\View, \Replica)$ be the internal message delay estimate of replica $\Replica$ at view $\View$, let $f(i)$ be the backoff function used by $\Replica$, and let $\delta$ be a message delay value. There exists a $j$ such that $\Delta(\View + j, \Replica) \geq \delta$.
\end{lemma}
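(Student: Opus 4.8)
The plan is to read the statement straight off the explicit shape of the backoff schedule, so this will be a short argument rather than an inductive one. Recall that replica $\Replica$ uses the estimate $\Delta(\View + i, \Replica) = f(i) \cdot \Delta(\View, \Replica)$ in view $\View + i$, where the backoff function satisfies $f(i) \geq i$ for every $i$, and that $\Delta(\View, \Replica)$ is a strictly positive real (a delay estimate of $0$ would be meaningless and is implicitly excluded). Since $\Delta(\View, \Replica) > 0$, I would choose the index
\[ j = \max\{1,\ \lceil \delta / \Delta(\View, \Replica) \rceil\}, \]
which is a positive integer and hence a legitimate view offset.

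With this choice the computation is immediate: $\Delta(\View + j, \Replica) = f(j) \cdot \Delta(\View, \Replica) \geq j \cdot \Delta(\View, \Replica)$, using $f(j) \geq j$ and $\Delta(\View, \Replica) > 0$; and $j \cdot \Delta(\View, \Replica) \geq \lceil \delta / \Delta(\View, \Replica) \rceil \cdot \Delta(\View, \Replica) \geq \delta$. Chaining the two inequalities gives $\Delta(\View + j, \Replica) \geq \delta$, which is exactly the claim.

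There is essentially no obstacle here; the only points deserving a sentence of care are the two degenerate cases: that $\Delta(\View, \Replica)$ must be positive (both for the division to make sense and for $j \cdot \Delta(\View, \Replica)$ to grow without bound in $j$), and that $j$ should be taken to be a positive integer so that $f$ is applied to a valid argument — both handled by the definition of the backoff function and by the $\max$ with $1$. I would also remark, for context, that the lemma is deliberately non-constructive from the replica's point of view — $\delta$ is unknown to $\Replica$ — so it only asserts the \emph{existence} of such a $j$; converting this into the operational guarantee that the view-change protocol eventually stops spuriously timing out is the role of Theorem~\ref{thm:poe_timing}, which builds on this lemma.
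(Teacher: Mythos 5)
Your proof is correct and follows essentially the same argument as the paper's: both pick $j \geq \lceil \delta / \Delta(\View, \Replica) \rceil$ and chain $f(j) \geq j$ with $j \cdot \Delta(\View, \Replica) \geq \delta$. Your added care about the positivity of $\Delta(\View, \Replica)$ and taking $j \geq 1$ is reasonable but does not change the substance.
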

\begin{proof}
By definition, we have $\Delta(\View + i, \Replica) = f(i) \cdot \Delta(\View, \Replica) \geq i \cdot \Delta(\View, \Replica)$. Hence, for all $j \geq \smash{\left\lceil \tfrac{\delta}{\Delta(\View, \Replica)} \right\rceil}$, we have $\Delta(\View + j, \Replica) \geq \delta$. 
\end{proof}

Frequently, an exponential function is used as the backoff function, leading to so-called \emph{exponential backoff}:

\begin{example}\label{ex:vc_bc}
Consider a system with $\Replicas = \{ \Replica_1, \Replica_2, \Replica_3, \Replica_4 \}$ in an environment with a message delay of $\delta = \SI{30}{\milli\second}$. Assume that all replicas are non-faulty and that we have the following internal message delay estimates in view $\View$:
\begin{align*}
	\Delta(\View, \Replica_1) &= \SI{7}{\milli\second};&
	\Delta(\View, \Replica_2) &= \SI{3}{\milli\second};\\
	\Delta(\View, \Replica_3) &= \SI{5}{\milli\second};&
	\Delta(\View, \Replica_4) &= \SI{1}{\milli\second},
\intertext{and assume that all replicas use the \emph{exponential backup function} $f(i) = 2^i$. We have:}
	\Delta(\View + 3, \Replica_1) &= 2^3 \cdot 7 = \SI{56}{\milli\second};&
	\Delta(\View + 3, \Replica_2) &= 2^3 \cdot 3 = \SI{24}{\milli\second};\\
	\Delta(\View + 3, \Replica_3) &= 2^3 \cdot 5 = \SI{40}{\milli\second};&
	\Delta(\View + 3, \Replica_4) &= 2^3 \cdot 1 = \SI{8}{\milli\second}.
\end{align*}
Hence, within three \emph{backoffs}, two out of four replicas have an internal message delay estimate that is sufficiently large. The other replicas will require four and five backoffs, respectively.
\end{example}

Next, we shall use backoff-based internal message delay estimates to prove that the view-change protocol satisfies Goal~\ref{vcg:recover}.

\begin{theorem}\label{thm:poe_timing} 
If non-faulty replicas use backoff-based internal message delay estimates, the normal-case protocol was disrupted, and communication becomes reliable, then the view-change protocol guarantees that the normal-case protocol will eventually be reestablished. 
\end{theorem}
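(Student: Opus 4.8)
The plan is to establish \emph{termination} of the view-change protocol: after the disruption and once communication becomes reliable (with some message delay $\delta$ that the replicas do not know), all non-faulty replicas eventually settle in one common view whose primary is non-faulty and in which the normal-case protocol of Section~\ref{ss:poe_nc} then runs without further disruption. I would combine three ingredients: a \emph{progress} argument showing the view-change cascade always advances, the \emph{backoff} argument of Lemma~\ref{lem:poe_timing_proof} showing that timeouts eventually grow large enough, and a pigeonhole argument on the (at most $\f$) faulty primaries.

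First I would argue progress. After the disruption, some non-faulty replica detects failure of its current view and broadcasts a \MName{Failure} message; by the $(\f+1)$-amplification rule (\lfref{fig:poe_vc}{amplify}) and reliable communication, every non-faulty replica detects failure of its current view within \BigO{\delta} time and rebroadcasts, so by Lemma~\ref{lem:sync} all non-faulty replicas enter the new-view proposal stage for a common view and send \MName{ViewState} to the next primary. Depending on whether that (possibly faulty) next primary delivers a valid \MName{NewView} in time, each non-faulty replica then either (i) accepts it, derives $\Ledger(n)$ by Lemma~\ref{lem:missing_cc}, updates its state, and starts the normal-case protocol for the next view, or (ii) times out (\lfref{fig:poe_vc}{failure_next}) and detects failure of the next view. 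The crucial point is that a faulty primary cannot perpetually avoid outcome (ii) at all non-faulty replicas \emph{without} behaving like a correct, progress-making primary: any proposal pattern that leaves a non-faulty replica in the dark (Example~\subref{ex:failure}{prim:dark}) or that stalls a round (Example~\subref{ex:failure}{prim:none}) is eventually converted into a failure detection through the timer a non-faulty replica arms whenever it expects a proposal, and then propagated by the amplification rule. Hence the sequence of views visited by non-faulty replicas keeps increasing until one is reached that runs without disruption.

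Next I would fix the target view. Applying Lemma~\ref{lem:poe_timing_proof} with the message-delay value taken large enough to dominate the longest timeout used by the protocol (a constant multiple of $\delta$, e.g.\ the $4\delta$ bound of Lemma~\ref{lem:sync_nv}), each non-faulty replica eventually uses an internal estimate of at least that value; since $f(i) \geq i$ this persists for all later views, and since there are only finitely many non-faulty replicas there is a view $\View_0$ from which \emph{every} non-faulty replica has a sufficiently large estimate. Among the $\f+1$ consecutive views $\View_0, \View_0+1, \dots, \View_0+\f$ the primaries have $\f+1$ distinct identifiers ($\ID{\Primary} = \View \bmod \n$ and $\f+1 \leq \n$), so at least one of them, say $\View^\star$, has a non-faulty primary. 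By the progress argument the non-faulty replicas advance through every view and hence reach the new-view proposal stage of $\View^\star$; there they all send \MName{ViewState} to the non-faulty primary $\Primary'$, which by Lemma~\ref{lem:sync_nv} broadcasts a valid \MName{NewView} within $4\delta$ of the first entry into the stage, and since no non-faulty replica's timeout in view $\View^\star$ is shorter than this bound, all non-faulty replicas accept it, derive the same $\Ledger(n)$ by Lemma~\ref{lem:missing_cc}, update their state identically, and start the normal-case protocol of $\View^\star$. With a non-faulty primary and reliable communication, Theorem~\subref{thm:main_nc}{snd}, applied round by round, then shows the normal-case protocol proceeds without disruption, which is the claim.

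The main obstacle is the progress argument: one must rule out that the cascade gets \emph{stuck} in some view $\View' < \View^\star$ with a faulty primary that does ``just enough'' to keep a subset of non-faulty replicas from detecting failure while still blocking overall progress. The facts to invoke are (a) a non-faulty replica sets a failure timer whenever it expects a proposal --- after forwarding a client request, or after receiving \MName{Prepare} messages for a round from $\f+1$ distinct replicas without the matching \MName{Propose} --- and detects failure if that round is not finished in time, and (b) the $(\f+1)$-amplification rule, which lifts a single non-faulty replica's failure detection to all non-faulty replicas. A secondary, more tedious point is to handle non-faulty replicas that lag in their view numbers when communication recovers: the $\View' \geq \View$ conditions in the failure-detection and new-view events, together with Lemma~\ref{lem:missing_cc} for recovering missed committed rounds, are what let the laggards synchronize onto the common view before $\View^\star$ is reached.
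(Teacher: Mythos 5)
Your proposal is correct and follows essentially the same route as the paper's proof: a progress argument based on periodic rebroadcast of \MName{Failure} messages and the amplification rule, Lemma~\ref{lem:poe_timing_proof} to obtain sufficiently large timeouts at all non-faulty replicas from some view onward, a pigeonhole argument over $\f+1$ consecutive views to locate a non-faulty primary, and Lemma~\ref{lem:sync_nv} to show its \MName{NewView} proposal arrives within every replica's (now sufficiently large) timeout. The paper handles the possibility of an earlier successful view-change by a simple dichotomy (``if reestablished before view $\View+i+\sigma$, the claim holds voidly''), which is the same role your progress discussion plays; otherwise the two arguments match step for step.
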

\begin{proof}
Assume the normal-case protocol was disrupted in some view $\View$. Consequently, non-faulty replicas will perform view-changes for consecutive views $\View' \geq \View$ until the normal-case protocol is reestablished (\lfref{fig:poe_vc}{failure_next}). During the view-change for such view $\View'$, non-faulty replicas periodically rebroadcast their messages \Message{Failure}{\View'} until they reach the new-view proposal stage for $\View'$ (\lfref{fig:poe_vc}{request}). Hence, each non-faulty replica will eventually broadcast messages \Message{Failure}{\View[w]} for some view $\View[w] \geq \View$ whenever communication becomes reliable. As all $\nf$ non-faulty replicas are guaranteed to receive these messages from all $\nf$ non-faulty replicas when communication becomes reliable, all non-faulty replicas will enter the new-view proposal stage for consecutive views $\View' \geq \View$ when communication becomes reliable (\lfref{fig:poe_vc}{ep}).

Assume that communication becomes reliable with message delay $\delta$ unknown to all non-faulty replicas. Without loss of generality, we can assume that each non-faulty replica $\Replica \in (\Replicas \difference \Faulty)$ uses a backoff function $f_{\Replica}$ and uses the internal message delay estimate $\Delta(\View + i, \Replica) = f_{\Replica}(i) \cdot \Delta(\View, \Replica)$ in view $\View + i$. Hence, by Lemma~\ref{lem:poe_timing_proof}, there exists a $j_{\Replica}$ such that, for all views $j \geq j_{\Replica}$, $\Delta(\View + j, \Replica) \geq \delta$. Consequently, in all views $\View + j'$ with $j' \geq \max\{ j_{\Replica} \mid \Replica \in (\Replicas \difference \Faulty) \}$, the internal message delay estimate of all non-faulty replicas is at-least $\delta$.

Choose $i \geq \max\{ j_{\Replica} \mid \Replica \in (\Replicas \difference \Faulty) \}$ such that view $\View + i$ is a view for which all non-faulty replicas entered the new-view proposal stage after communication became reliable. Let $\Primary_{j}$ be the primary of view $\View + i + j$  (with $\ID{\Primary_{j}} = (\View + i + j) \bmod \n$). Finally, choose the first view $\View + i + \sigma$ for which $\Primary_{\sigma}$ is non-faulty. As there are at-most $\f$ faulty replicas, we must have $0 \leq \sigma \leq \f$. If the view-change protocol reestablish the normal-case protocol in any view before view $\View + i + \sigma$, then the statement of this theorem voidly holds.

To complete the proof, we assume that the view-change protocol did not reestablish the normal-case protocol before view $\View + i + \sigma$ and we show that the view-change protocol will reestablish the normal-case protocol for view $\View + i + \sigma$. Let $t(\Replica)$ be the time at which a non-faulty replica $\Replica \in (\Replicas \difference \Faulty)$ enters the new-view proposal stage of view $\View + i + \sigma - 1$ and let $t_{\min} = \min\{ t(\Replica) \mid \Replica \in (\Replicas \difference \Faulty) \}$ be the time at which the first non-faulty replica enters this new-view proposal stage. By Lemma~\ref{lem:sync_nv}, all replicas will receive an identical new-view proposal \Message{NewView}{\View + i + \sigma, \VI} (\lfref{fig:poe_vc}{request}) at-or-before $t_{\min} + 4\delta$. Finally, let $\Replica[q]$ be any non-faulty replica. Due to the internal message delay estimate used by $\Replica[q]$, replica $\Replica[q]$ will expect \Message{NewView}{\View + i + \sigma, \VI} before \[t(\Replica[q]) + 4\Delta(\View + i + \sigma, \Replica[q]) \geq t + 4 \Delta(\View + i +\sigma, \Replica[q]).\] As $i \geq j(\Replica)$, we can conclude $t_{\min} + 4 \Delta(\View+ i + \sigma, \Replica[q]) \geq t + 4\delta$, that $\Replica[q]$ receives \Message{NewView}{\View + i + \sigma, \VI}, and that $\Replica[q]$ enters the new-view accept stage. Hence, all non-faulty replicas will enter the new-view accept stage with the same new-view proposal \Message{NewView}{\View + i + \sigma, \VI}, will update their internal state accordingly, and reestablish the normal-case protocol in view $\View + i + \sigma$.
\end{proof}

Next, we further illustrate Theorem~\ref{thm:poe_timing} in an environment with exponential backoff.

\begin{example}
Consider the situation of Example~\ref{ex:vc_bc}. Due to some disruptions, view $\View$ fails and all replicas participate in consecutive view-changes (until one of these view-changes succeeds). Let $t_j(\Replica_i)$ be the time at which replica $\Replica_i$, $1 \leq i \leq 4$, enters the new-view proposal phase for view $\View + j$ and let $t_{j,\min} = \min\{ t_j(\Replica_i) \mid 1 \leq i \leq 4 \}$ be the time at which the first replica enters the new-view proposal phase for view $\View + j$.

If all replicas are non-faulty, then one can expect a valid new-view proposal for view $\View + j + 1$ at $t_{j,\min} + 4\delta$, while replica $\Replica_i$, $1\leq i \leq 4$, will expect a valid new-view proposal for view $\View + j + 1$ before \[t_j(\Replica_i) + 4\Delta(\View + j, \Replica_i) \geq t_j + 4\cdot 2^j \cdot \Delta(\View, \Replica_i).\] Consequently, the new-view proposal for view $\View+j+1$ is guaranteed to arrive on time for replica $\Replica_i$ whenever $t_{j,\min} + 4\cdot 2^j \cdot \Delta(\View, \Replica_i) \geq t_{j,\min}  + 4\delta$ and, hence, when $2^j \cdot \Delta(\View, \Replica_i) \geq \delta$, which simplifies to $j \geq \smash{\log_2\left\lceil\tfrac{\delta}{\Delta(\View, \Replica_i)}\right\rceil}$. Filling in yields
\begin{align*}
j \geq \log_2\left\lceil\tfrac{\delta}{\Delta(\View, \Replica_1)}\right\rceil = \log_2\left\lceil\tfrac{30}{7}\right\rceil = 3 &&\text{(for replica $\Replica_1$)};\\
j \geq \log_2\left\lceil\tfrac{\delta}{\Delta(\View, \Replica_2)}\right\rceil = \log_2\left\lceil\tfrac{30}{3}\right\rceil = 4 &&\text{(for replica $\Replica_2$)};\\
j \geq \log_2\left\lceil\tfrac{\delta}{\Delta(\View, \Replica_3)}\right\rceil = \log_2\left\lceil\tfrac{30}{5}\right\rceil = 3 &&\text{(for replica $\Replica_3$)};\\
j \geq \log_2\left\lceil\tfrac{\delta}{\Delta(\View, \Replica_4)}\right\rceil = \log_2\left\lceil\tfrac{30}{1}\right\rceil = 5 &&\text{(for replica $\Replica_4$)}.
\end{align*}
We conclude that within at-most five consecutive view-changes, all replicas will have a sufficiently large internal message delay estimate to guarantee a successful view-change.
\end{example}

\begin{remark}
Theorem~\ref{thm:poe_timing} will use arbitrarily large internal message delay estimates to assure that all non-faulty replicas are eventually sufficiently synchronized. In most practical deployments, one can utilize a reasonable upper bound on any message estimate (e.g., \SI{10}{\second} when sending small messages over a wide-area network), as this upper bound will always hold whenever the network is operating correctly (and communication is reliable).
\end{remark}

\subsection{\PoE{} Provides Consensus}\label{ss:poe_proofs}

In Sections~\ref{ss:poe_nc}--\ref{ss:poe_vc}, we have laid out the design of \PoE{}. From the details presented, one can already derive how \PoE{} provides non-divergence (Theorem~\subref{thm:main_nc}{fst} and Theorem~\ref{thm:vc_all}) and termination (Theorem~\subref{thm:main_cc}{fst}  and Theorem~\ref{thm:poe_timing}). To prove that \PoE{} provides weak consensus, we also need to detail how \PoE{} provides \emph{non-triviality}. To do so, we introduce a mechanism that allows non-faulty clients to force replication of their transactions whenever communication is sufficiently reliable.

Consider a client $\Client$ that wants to request transaction $\T$. In the normal-case protocol, it is assumed that $\Client$ knows primary $\Primary$ of the current view, in which case $\Client$ will simply send $\Cert{\T}{\Client}$ to $\Primary$. If this primary $\Primary$ is non-faulty and communication is reliable, then this normal-case protocol will assure a proof-of-execution (Theorem~\subref{thm:main_nc}{snd}). To deal with deviations of the normal-case (e.g., an unknown primary or a faulty primary), \PoE{} allows client $\Client$ to send its request $\Cert{\T}{\Client}$ to \emph{any} replica $\Replica$. If $\Replica$ is non-faulty, then it will forward $\Cert{\T}{\Client}$ to the primary of the current view, after which $\Replica$ expects a timely proposal of $\Cert{\T}{\Client}$ via the normal-case protocol. If no such timely proposal arrives, $\Replica$ will detect failure of the current view (\lfref{fig:poe_vc}{detect}), and, after a successful view-change, forward $\Cert{\T}{\Client}$ to the next primary.

Using the above request-forward mechanism, clients can force consensus on requests whenever communication is sufficiently reliable: to force consensus on request $\Cert{\T}{\Client}$, client $\Client$ simply sends this request to all replicas. Consequently, all non-faulty replicas will forward $\Cert{\T}{\Client}$ to the current primary and expect a timely proposal of $\Cert{\T}{\Client}$. If this does not happen, then all non-faulty replicas will detect failure of the current view. As there are $\nf$ non-faulty replicas, this is sufficient to trigger a view-change (\lfref{fig:poe_vc}{ep}). After a new view is established, the non-faulty replicas forward $\Cert{\T}{\Client}$ to the next primary, this until the request is proposed and executed. 

To prevent abuse of the request-forward mechanism by malicious clients, \PoE{} uses two heuristics. First, malicious clients can try to suppress requests by other clients by continuously sending requests to all replicas. To deal with such behavior, non-faulty replicas can limit the rate at which they forward requests of a single client to the primary (assuring that the primary has the room to timely propose requests for all clients). Second, malicious clients can send distinct, conflicting, requests to different replicas. In this case, the primary has to choose one of these transactions to propose first, due to which the primary might be unable to propose all client requests forwarded to it within a timely fashion. To prevent failure detection of the view due to this malicious client behavior, any replica $\Replica$ that forwarded a request of client $\Client$ to the current primary will consider any subsequent proposal of a transaction requested by  $\Client$ as a timely proposal by the primary (even if that proposal does not match the request forwarded by $\Replica$).

Using the request-forward mechanism, we are finally able to prove that \PoE{} provides weak consensus:

\begin{theorem}\label{thm:cons_serv}
If \PoE{} is operated in a system with $\n > 3\f$, then \PoE{} provides weak consensus.
\end{theorem}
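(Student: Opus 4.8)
The plan is to verify the three defining properties of weak consensus---non-divergence (which must hold unconditionally), and termination together with non-triviality (which are only required during periods of reliable communication)---by assembling the results already proved in Sections~\ref{ss:poe_nc}--\ref{ss:poe_vc}. Throughout I would read the ``$\rn$-th decision'' of a non-faulty replica $\Replica$ as the event in which $\Replica$ obtains a commit certificate $\CC{m}$ for some round-$\rn$ proposal $m = \Message{Propose}{\Cert{\T}{\Client}, \View, \rn}$---equivalently, $\Replica$ commits round $\rn$ via the check-commit protocol, via the query protocol of Lemma~\ref{lem:missing_cc}, or while applying a new-view ledger $\Ledger(n)$ at \lfref{fig:poe_vc}{update} (where rounds up to $\NVrncc{n}$ carry commit certificates)---and the decided value as the contained request $\Cert{\T}{\Client}$. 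The hypothesis $\n > 3\f$ (i.e.\ $\nf > 2\f$) is exactly what underlies the quorum-intersection arguments in all the cited results.

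First I would dispatch non-divergence, which needs no timing assumptions. Suppose non-faulty replicas $\Replica_1, \Replica_2$ make $\rn$-th decisions with commit certificates $\CC{m_1}, \CC{m_2}$ for round $\rn$. By Theorem~\subref{thm:main_cc}{trd}, the existence of (say) $\CC{m_1}$ already forces at-least $\nf - \f$ non-faulty replicas to have executed some round-$\rn$ proposal, so Theorem~\ref{thm:vc_all} applies to round $\rn$; in particular Corollary~\ref{cor:ass} tells us every commit certificate for round $\rn$ proposes one and the same request $\Cert{\T}{\Client}$, and Theorem~\subref{thm:vc_all}{nv_ledger} makes $\Cert{\T}{\Client}$ the $\rn$-th entry of the ledger of every well-formed new-view proposal that reaches round $\rn$. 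Hence both decided values equal $\Cert{\T}{\Client}$, so divergence is impossible regardless of the network.

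Next, termination, where I assume communication eventually becomes and stays reliable and that some non-faulty $\Replica$ has an $\rn$-th decision witnessed by $\CC{m}$, $m = \Message{Propose}{\Cert{\T}{\Client}, \View, \rn}$; by Theorem~\subref{thm:main_cc}{trd}, $\nf-\f$ non-faulty replicas executed a round-$\rn$ proposal. If the normal-case protocol is never disrupted again, then all non-faulty replicas are in view $\View$ and, since (by non-divergence) they all executed the same $\rn-1$ transactions, Theorem~\subref{thm:main_cc}{snd} hands every non-faulty replica a commit certificate for round $\rn$. Otherwise a view-change is triggered and Theorem~\ref{thm:poe_timing} reestablishes the normal-case protocol in some view $\View''$ in which every non-faulty replica accepts a common well-formed new-view proposal $n$; since $\nf-\f$ non-faulty replicas executed a round-$\rn$ proposal, Theorem~\subref{thm:vc_all}{nv_rounds} gives $\rn \leq \NVrn{n}$ and Theorem~\subref{thm:vc_all}{nv_ledger} gives $\Ledger(n)[\rn] = \Cert{\T}{\Client}$. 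Applying $\Ledger(n)$ at \lfref{fig:poe_vc}{update}, every non-faulty replica then either obtains a commit certificate for round $\rn$ directly when $\rn \leq \NVrncc{n}$ (recovering it via Lemma~\ref{lem:missing_cc} if it is missing) or, when $\NVrncc{n} < \rn \leq \NVrn{n}$, sees $\Cert{\T}{\Client}$ reproposed in round $\rn$ of view $\View''$ by the non-faulty primary, prepares and executes it, and commits it via the check-commit protocol (Theorem~\subref{thm:main_cc}{snd}). Either way all non-faulty replicas make an $\rn$-th decision.

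Finally, non-triviality via the request-forward mechanism: when a non-faulty replica learns a decision $\Cert{\T}{\Client}$ is needed it forwards $\Cert{\T}{\Client}$ to the current primary and detects failure of the view (\lfref{fig:poe_vc}{detect}) if no timely proposal follows, re-forwarding after each view-change. Sending $\Cert{\T}{\Client}$ to all replicas makes all $\nf$ non-faulty replicas do this, so absent a timely proposal all $\nf$ detect failure---enough to reach the new-view proposal stage (\lfref{fig:poe_vc}{ep}). By Theorem~\ref{thm:poe_timing}, once communication is reliable the view-changes terminate and a view with a non-faulty primary is reached; that primary must then propose $\Cert{\T}{\Client}$ (or, under the rate-limit and same-client heuristics, another request of the same client) or trigger yet another view-change, after which Theorem~\subref{thm:main_nc}{snd} and Theorem~\subref{thm:main_cc}{snd} guarantee it is executed and committed---so consensus is forced. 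I expect the main obstacle to be the termination argument: it must reconcile a decision reached in one view by check-commit with the requirement that every non-faulty replica eventually decides even across later view-changes, carefully matching the two ways round $\rn$ can surface in a new-view ledger (a committed round recoverable by Lemma~\ref{lem:missing_cc} versus a merely-executed round the new primary must repropose) against Theorem~\ref{thm:vc_all} and Theorem~\ref{thm:poe_timing}, while also ensuring---through non-triviality and the backoff mechanism---that those view-changes actually terminate.
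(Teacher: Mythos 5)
Your proposal is correct and follows essentially the same route as the paper's own proof: the same reading of an ``$\rn$-th decision'' as storing a commit certificate, non-divergence via Theorem~\subref{thm:main_cc}{trd} plus Theorem~\ref{thm:vc_all} (the paper invokes parts~\inref{thm:vc_all:no_commit} and~\inref{thm:vc_all:prepared_nodiv} directly where you go through Corollary~\ref{cor:ass}, which packages the same facts), termination via Theorem~\subref{thm:main_cc}{snd}, Theorem~\ref{thm:poe_timing}, Theorem~\subref{thm:vc_all}{nv_ledger} and Lemma~\ref{lem:missing_cc}, and non-triviality via the request-forward mechanism. The only point the paper treats a bit more explicitly is the non-faulty replica that cannot participate in the newly established view (because of a faulty new primary) and must detect and recover the committed round via $\f+1$ \MName{CheckCommit} messages and the query protocol, but your appeal to Lemma~\ref{lem:missing_cc} covers the same ground.
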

\begin{proof}
We say that a non-faulty replica $\Replica$ decides on $\Cert{\T}{\Client}$ in round $\rn$ whenever it stores a commit certificate $\CC{m}$ with $m = \Message{Propose}{\Cert{\T}{\Client}, \View, \rn}$. Due to Theorem~\subref{thm:main_cc}{trd}, the existence of $\CC{m}$ implies that at-least $\nf - \f$ non-faulty replicas executed $m$. Using these facts, we will prove that \PoE{} provides the three guarantees of consensus.

\paragraph{\textbf{Termination} (in periods of reliable communication)}
Due to Theorem~\subref{thm:main_cc}{trd}, the existence of $\CC{m}$ implies that at-least $\nf - \f$ non-faulty replicas executed $m$. If communication is reliable, then Theorem~\subref{thm:main_cc}{snd} guarantees termination. If communication is unreliable and leads to a view failure, then Theorem~\ref{thm:poe_timing} assures that a new view, view $\View[w]$, will be established whenever communications becomes reliable and Theorem~\subref{thm:vc_all}{nv_ledger} and Lemma~\ref{lem:missing_cc} assure that all replicas participating in view $\View[w]$ will recover $\Cert{\T}{\Client}$ as the $\rn$-th decision.

Now consider non-faulty replica $\Replica[q]$ that is not able to participate in the new view $\View[w]$, e.g., due to faulty behavior of the primary of view $\View[w]$. Replica $\Replica[q]$ can make a decision for round $\rn$ in two ways. First, if the primary of view $\View[w]$ can propose in round $\rn$, then either a proposal $m' = \Message{Propose}{\Cert{\T}{\Client}, \View[w], \rn}$ gets committed by a non-faulty replica, in which case Theorem~\subref{thm:main_cc}{fst} guarantees that $\Replica[q]$ commits $m'$, or the failure of view $\View[w]$ will be detected. If the primary of view $\View[w]$ cannot propose in round $\rn$, then $\Replica[q]$ can use Lemma~\ref{lem:missing_cc} to derive $\Cert{\T}{\Client}$ as the $\rn$-th request whenever $\Replica[q]$ detects the existence of a commit certificate for round $\rn$. Replica $\Replica[q]$ can detect the existence of commit certificates for round $\rn$, $\rn < \rn'$, after receiving well-formed \MName{CheckCommit} messages for round $\rn' $ from at-least a single non-faulty replica (by receiving such messages from $\f+1$ distinct replicas).

\paragraph{\textbf{Non-Divergence}}
Let $\View'$ be the first view in which a proposal $m' = \Message{Propose}{\Cert{\T'}{\Client'}, \View', \rn}$ was executed by $\nf - \f$ non-faulty replicas. By Theorem~\subref{thm:vc_all}{no_commit}, $\View' \leq \View$. Hence, by Theorem~\subref{thm:vc_all}{prepared_nodiv}, we must have $\Cert{\T'}{\Client'} = \Cert{\T}{\Client}$.

\paragraph{\textbf{Non-Triviality} (in periods of reliable communication)} 
We say that a decision needs to be made for request $\Cert{\T}{\Client}$ of client $\Client$ whenever $\Client$ is non-faulty. In this case, a non-faulty replica $\Replica[q]$ learns that a decision needs to be made for $\Cert{\T}{\Client}$ whenever it receives $\Cert{\T}{\Client}$. As non-faulty clients will repeatedly send $\Cert{\T}{\Client}$ to all replicas whenever they do not receive a proof-of-execution for $\T$, every non-faulty replica will receive $\Cert{\T}{\Client}$ when communication becomes reliable. Assume that communication becomes reliable in view $\View$. Either the primary of view $\View$ will propose $\Cert{\T}{\Client}$, or all non-faulty replicas will trigger view changes until reaching a view whose primary will propose $\Cert{\T}{\Client}$. As there are at-most $\f$ faulty replicas, at-most $\f$ such view changes will happen before $\Cert{\T}{\Client}$ is proposed.
\end{proof}

\begin{remark}
\PoE{} is designed to be able to deal with many types of failures, e.g., Byzantine failures, network failures, and crashes. First, as proven in Theorem~\ref{thm:cons_serv}, \PoE{} provides both \emph{consensus} and \emph{client services} if at-most $\f$ concurrent replicas fail (e.g., Byzantine replicas or crashed replicas), $\n > 3\f$. Second, as a consequence of Theorem~\subref{thm:main_nc}{fst} and Theorem~\ref{thm:vc_all}, \PoE{} will always provide \emph{non-divergence} if at-most $\f$ replicas are Byzantine. Third, \PoE{} can recover form any number of non-Byzantine replica crashes \emph{as long as} these crashes are \emph{recoverable}: if replicas store any certificates they produce in permanent store (\lfref{fig:poe_nc}{store} and \lfref{fig:poe_cc}{store}) such that they can always recover these certificates after a crash, then non-Byzantine crashes can only lead to service disruption (similar to network failure), and service will be recovered when sufficient replicas have recovered.
\end{remark}

\subsection{\PoE{} Provides Client Service}\label{ss:poe_client}

Theorem~\ref{thm:cons_serv} proves that \PoE{} provides weak consensus. Hence, due to the non-triviality guarantee of \PoE{} (as proven in Theorem~\ref{thm:cons_serv}), every client can force a consensus decision on their requests. Hence, to provide \emph{client services}, we only need to show that \PoE{} guarantees that a consensus decision for request $\Cert{\T}{\Client}$ will lead to a proof-of-execution for $\T$. If communication is reliable and no view changes happen, then either Theorem~\subref{thm:main_nc}{snd} or Theorem~\subref{thm:main_cc}{snd} guarantees such proof-of-execution. Due to the speculative design of \PoE{}, view changes can result in consensus decisions \emph{without} an accompanying proof-of-execution:

\begin{example}\label{ex:poe_fail}
Consider a system with $\n = 3\f+1$ replicas. We partition the non-faulty replicas in sets $A$, $B$, and $C$ with $\abs{A} = \abs{B} = \f$ and $\abs{C} = 1$. Now let $\Cert{\T}{\Client}$ be a client request and consider the following sequence of events in round $\rn$ of views $\View$, $\View + 1$, and $\View + 2$.

\paragraph{View $\View$} Due to \emph{unreliable communication} the $\f$ replicas in $B$ become unreachable while the primary $\Primary_{\View}$ of view $\View$ proposes $m_{\View} = \Message{Propose}{\Cert{\T}{\Client}, \View, \rn}$. Hence, only the $\nf$ replicas in $A \union C \union \Faulty$ receive $m_{\View}$ and exchange \MName{Prepare} messages. Consequently, the $\nf - \f$ non-faulty replicas in $A \union C$ are able to execute $\T$ and inform the client $\Client$. The faulty replicas in $\Faulty$ decide to \emph{not} inform the client. Hence, the client $\Client$ only receives $\f+1$ messages of the form \Message{Inform}{\Cert{\T}{\Client}, \View, \rn, r}, which is insufficient for a proof-of-execution.

Next, the replicas in $A \union C$ exchange \MName{CheckCommit} messages, while the faulty replicas only send \MName{CheckCommit} messages to $A$. 	Consequently, the replicas in $A$ are able to construct commit certificates $\CC{m_{\View}}$, while the replicas in $C$ are unable to do so. When communication becomes reliable, the $\f+1$ non-faulty replicas in $B \union C$ will be able to successfully trigger a view change.

\paragraph{View $\View +1$} 
Due to \emph{unreliable communication}, the $\f$ non-faulty replicas in $A$ become unreachable. Hence, the primary $\Primary_{\View+1}$ of view $\View+1$ ends up proposing a new view based on the information provided by the replicas in $B \union C \union \Faulty$. In specific, the non-faulty replicas in $B$ only provide information on rounds before $\rn$, the non-faulty replica in $C$ provides a prepared certificate for $m_{\View}$, and the faulty replicas in $\Faulty$ \emph{lie} and only provide information on rounds before $\rn$. Hence, any replica that enters view $\View +1$ will rollback transaction $\T$ and will expect the primary $\Primary_{\View+1}$ to propose  $m_{\View+1} = \Message{Propose}{\Cert{\T}{\Client}, \View+1, \rn}$.

After primary $\Primary_{\View+1}$ proposes $m_{\View+1}$, the  $\nf$ replicas in $B \union C \union \Faulty$ receive $m_{\View+1}$ and exchange \MName{Prepare} messages. Consequently, the $\nf - \f$ non-faulty replicas in $B \union C$ are able to execute $\T$ and inform the client $\Client$. The faulty replicas in $\Faulty$ once again decide to \emph{not} inform the client. Hence, at this point, the client received $\f+1$ messages of the form \Message{Inform}{\Cert{\T}{\Client}, \View, \rn, r} and $\f+1$ messages of the form \Message{Inform}{\Cert{\T}{\Client}, \View+1, \rn, r}, which is still insufficient for a proof-of-execution.

Next, the replicas in $B \union C$ exchange \MName{CheckCommit} messages, while the faulty replicas only send \MName{CheckCommit} messages to $B$. 	Consequently, the replicas in $B$ are able to construct commit certificates $\CC{m_{\View+1}}$, while the replicas in $C$ are unable to do so. When communication becomes reliable, the $\f+1$ non-faulty replicas in $A \union C$ will be able to successfully trigger a view change.

\paragraph{View $\View +2$} 
The (faulty) primary $\Primary_{\View+2}$ of view $\View+2$ will receive commit certificates $\CC{m_{\View}}$ from the replicas in $A$, commit certificates $\CC{m_{\View+1}}$ from the replicas in $B$, a prepared certificate for $m_{\View+1}$ from $C$, and, again, the faulty replicas in $\Faulty$ \emph{lie} and only provide information on rounds before $\rn$. Using this information, the primary $\Primary_{\View+2}$ decides to construct \emph{two} distinct new-view proposals $n_A$ and $n_B$ that are based on the information provided by the $\nf$ replicas in $A \union C \union \Faulty$ and $B \union C \union \Faulty$, respectively.

The replicas in $A$ receive $n_A$, the replicas in $B$ receive $n_B$, and replica $C$ also receives $n_B$ (due to which $C$ only has to store an additional commit certificate). Hence, after the view-change, at-most $\f+1$ non-faulty replicas hold identical commit certificates for round $\rn$,  while \emph{all} non-faulty replicas hold a commit certificate for a proposal proposing $\Cert{\T}{\Client}$ in round $\rn$ and no non-faulty replica will send additional \MName{Inform} messages for the request $\Cert{\T}{\Client}$ to client $\Client$. As such, the client $\Client$ fails to obtain a proof-of-execution for a fully-decided request.

As the final step in this example, the (faulty) primary $\Primary_{\View+2}$ successfully proposes $m = \Message{Propose}{\Cert{\T'}{\Client'}, \View+2, \rn+1}$. Due to this proposal, all non-faulty replicas end up with commit certificates $\CC{m}$, thereby assuring that future view changes will omit any information on round $\rn$ and, thus, assuring that no proof-of-execution will be produced in future views.
\end{example}

To deal with the issue raised in Example~\ref{ex:poe_fail}, we utilize a recovery protocol by which clients can obtain a proof-of-commit that \emph{implies} a proof-of-execution. This recovery protocol is based on a straightforward observation: if a client $\Client$ can observe that a non-faulty replica \emph{committed} a proposal $\Message{Propose}{\Cert{\T}{\Client}, \View, \rn}$ for request $\Cert{\T}{\Client}$, then, by Theorem~\ref{thm:cons_serv}, this observation provides proof that all non-faulty replicas will eventually execute $\Cert{\T}{\Client}$ in round $\rn$.

Note that as part of providing non-triviality, a client $\Client$ that requires execution for a request $\Cert{\T}{\Client}$ will eventually send $\Cert{\T}{\Client}$ to all replicas, this to enforce an eventual consensus decision. When a non-faulty replica $\Replica$ receives $\Cert{\T}{\Client}$ after the replica stored a commit certificate for a proposal proposing $\Cert{\T}{\Client}$, then, as the first step of the proof-of-commit recovery protocol, replica $\Replica$ responds with a message \Message{InformCC}{\Cert{\T}{\Client}, \rn, r}, in which $r$ was the original execution result $\Replica$ obtained while executing $\T$ in round $\rn$. As the second step of the proof-of-commit recovery protocol, client $\Client$ will considers $\T$ executed after it receives a \emph{proof-of-commit} for $\Cert{\T}{\Client}$ consisting of identical \Message{InformCC}{\Cert{\T}{\Client}, \rn, r} messages from $\f+1$ distinct replicas. The pseudo-code for the proof-of-commit recovery protocol can be found in Figure~\ref{fig:poe_poc}.\footnote{Non-faulty replicas in \PoE{} can opt to \emph{always} send an \MName{InformCC} message to clients after they commit a proposal: \MName{InformCC} messages carry the same information to the client as normal client replies do in consensus protocols such as \PBFT{}. Such a design will trade an increase of network bandwidth for a decrease of latencies when replicas are left in the dark.}

\begin{figure}[t]
    \begin{myprotocol}
        \TITLE{Client recovery role}{used by client $\Client$ to force transaction $\T$}
        \STATE Send $\Cert{\T}{\Client}$ to all replicas $\Replica \in \Replicas$.
        \STATE Await a \emph{$\rn$-proof-of-commit for $\Cert{\T}{\Client}$} consisting of identical messages \Message{InformCC}{\Cert{\T}{\Client}, \rn, r} from $\f+1$ distinct replicas.\label{fig:poe_poc:await}
        \STATE Considers $\T$ executed, with result $r$, as the $\rn$-th transaction.\label{fig:poe_poc:cc}
        \SPACE
        \TITLE{Replica recovery role}{running at every replica $\Replica \in \Replicas$}\label{fig:poe_poc:replica}
        \EVENT{$\Replica$ receives $\Cert{\T}{\Client}$}
            \IF{$\Replica$ stored commit certificate $\PC{m}$ for some proposal\\
            \phantom{\textbf{event }}$m = \Message{Propose}{\Cert{\T}{\Client}, \View, \rn}$}
                \STATE Let $r$ be the result $\Replica$ obtained while executing $\T$ in round $\rn$.
                \STATE Send \Message{InformCC}{\Cert{\T}{\Client}, \rn, r} to $\Client$.\label{fig:poe_poc:inform}
            \ENDIF
        \ENDEVENT
    \end{myprotocol}
    \caption{The proof-of-commit recovery protocol in \PoE{}.}\label{fig:poe_poc}
\end{figure}

Using the proof-of-commit recovery protocol, we can finally prove that \PoE{} provides client services: 
\begin{theorem}
If \PoE{} is operated in a system with $\n > 3\f$, then \PoE{} provides client service whenever communication is reliable.
\end{theorem}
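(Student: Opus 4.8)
The plan is to bolt the \emph{proof-of-commit recovery protocol} of Figure~\ref{fig:poe_poc} onto the weak-consensus guarantee already established in Theorem~\ref{thm:cons_serv}. First I would recall that, since \PoE{} provides weak consensus, its non-triviality guarantee lets a non-faulty client $\Client$ force a consensus decision on $\Cert{\T}{\Client}$: by repeatedly broadcasting $\Cert{\T}{\Client}$ to all replicas, $\Client$ ensures that, once communication is reliable, there is a fixed round $\rn$ such that every non-faulty replica stores a commit certificate $\CC{m}$ for a proposal $m = \Message{Propose}{\Cert{\T}{\Client}, \View[w], \rn}$ and executes $\T$ as its $\rn$-th transaction with some result $r$ (this follows from the termination argument inside Theorem~\ref{thm:cons_serv}, which in turn rests on Theorem~\ref{thm:poe_timing}, Theorem~\subref{thm:vc_all}{nv_ledger}, and Lemma~\ref{lem:missing_cc}). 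By determinism together with non-divergence, all non-faulty replicas obtain the \emph{same} result $r$ for round $\rn$.

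Next I would split into two cases. If the round in which $\Cert{\T}{\Client}$ is proposed is not disrupted by a view change, then Theorem~\subref{thm:main_nc}{snd} (normal-case success) or Theorem~\subref{thm:main_cc}{snd} (a commit certificate is first established) directly hands $\Client$ a $(\View, \rn)$-proof-of-execution, and the client is served. The remaining case is the scenario of Example~\ref{ex:poe_fail}, where interleaved view changes prevent any proof-of-execution from forming. Here I invoke the recovery protocol of Figure~\ref{fig:poe_poc}: by the first paragraph, once communication is reliable, all $\nf$ non-faulty replicas have committed a proposal proposing $\Cert{\T}{\Client}$ in round $\rn$ (Theorem~\subref{thm:main_cc}{trd} already guarantees at-least $\nf-\f$ such replicas, and termination extends this to all of them), and $\Client$ keeps sending $\Cert{\T}{\Client}$ to everyone. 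Consequently each non-faulty replica responds with \Message{InformCC}{\Cert{\T}{\Client}, \rn, r}, so $\Client$ collects $\nf \geq \f+1$ identical \MName{InformCC} messages, i.e., a $\rn$-proof-of-commit, and considers $\T$ executed with result $r$.

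Finally I would verify that the result the client derives is genuine: a $\rn$-proof-of-commit consists of $\f+1$ identical \MName{InformCC} messages, so at-least one was sent by a non-faulty replica that actually committed $\Cert{\T}{\Client}$ in round $\rn$; by Theorem~\ref{thm:cons_serv}, the existence of this commit certificate forces every non-faulty replica to eventually execute $\Cert{\T}{\Client}$ as its $\rn$-th transaction, and by determinism they all report the same $r$. Hence in every case $\Client$ obtains either a proof-of-execution or a proof-of-commit, both of which reliably yield the execution result, and \PoE{} provides client service whenever communication is reliable.

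The main obstacle is the bookkeeping that pins the forced consensus decision to a \emph{single stable round} $\rn$ across all non-faulty replicas once communication becomes reliable, so that the \MName{InformCC} replies are mutually consistent and all reach the client; this is exactly where Theorem~\ref{thm:poe_timing}, Theorem~\subref{thm:vc_all}{nv_ledger}, and Lemma~\ref{lem:missing_cc} must be stitched together, whereas the cryptographic and counting steps ($\nf - \f \geq \f+1$) are routine.
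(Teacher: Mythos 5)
Your proposal is correct and follows essentially the same route as the paper's own proof: invoke Theorem~\ref{thm:cons_serv} to force a consensus decision, handle the case where a proof-of-execution arrives directly, and otherwise fall back to the proof-of-commit recovery protocol of Figure~\ref{fig:poe_poc}, using the fact that all non-faulty replicas commit the same proposal in the same round with the same result so that the client collects $\nf \geq \f+1$ identical \MName{InformCC} messages while faulty replicas cannot forge a conflicting proof-of-commit. The only cosmetic difference is that you justify the identical results via determinism plus non-divergence where the paper cites Theorem~\subref{thm:main_cc}{trd} and Corollary~\ref{cor:ass} for the identical prefix of $\rn-1$ transactions, which amounts to the same argument.
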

\begin{proof}
When communication is reliable and client $\Client$ requests $\Cert{\T}{\Client}$, then, by Theorem~\ref{thm:cons_serv}, client $\Client$ can force execution of $\Cert{\T}{\Client}$. If client $\Client$ receives a $(\View, \rn)$-proof-of-execution for $\T$, then client $\Client$ considers $\T$ executed and, due to Theorem~\ref{thm:vc_all}, the executed state observed by the client will be preserved.

Next, we consider the case in which client $\Client$ does not receive a proof-of-execution for $\T$. Due to Theorem~\ref{thm:cons_serv}, all non-faulty replicas will eventually commit a proposal for $\Cert{\T}{\Client}$ in some round $\rn$. By Theorem~\subref{thm:main_cc}{trd} and Corollary~\ref{cor:ass}, all non-faulty replicas executed the same sequence of $\rn-1$ transactions before executing $\T$. Hence, if client $\Client$ resends $\Cert{\T}{\Client}$ to the non-faulty replicas, then these non-faulty replicas will eventually all be able to send identical messages \Message{InformCC}{\Cert{\T}{\Client}, \rn, r} to the client (\lfref{fig:poe_poc}{inform}). Consequently, the client $\Client$ will receive at-least $\nf > \f$ such messages and client $\Client$ will consider $\T$ executed (\lfref{fig:poe_poc}{cc}). Due to Theorem~\ref{thm:vc_all}, the executed state observed by the client will be preserved. Finally, as there are at-most $\f$ faulty replicas, the faulty replicas can produce at-most \Message{InformCC}{\Cert{\T}{\Client}, \rn', r'} messages distinct form \Message{InformCC}{\Cert{\T}{\Client}, \rn, r} and, hence, cannot produce an invalid proof-of-commit.
\end{proof}

\begin{remark}
\PoE{} does not enforce that client requests are proposed \emph{in order}: if a client $\Client$ first sends $\Cert{\T_1}{\Client}$ to the current primary $\Primary$ and then sends $\Cert{\T_2}{\Client}$, then nothing prevents $\Primary$ of first proposing $\Cert{\T_2}{\Client}$ in round $\rn_2$ and then proposing $\Cert{\T_1}{\Client}$ in round $\rn_1$ ($\rn_2 < \rn_1$). If a client requires ordered execution, then it can simply request $\T_2$ only after it receives a proof-of-execution of $\T_1$. If the application that utilizes \PoE{} requires ordered execution of all client requests, then one can require that valid client requests come with a baked-in counter indicating its order and that subsequent proposals use subsequent counter values.
\end{remark}

\section{On the Complexity of \PoE{}}\label{sec:complexity}

In Section~\ref{sec:poe} we provided an in-detail description and proof of correctness of the Proof-of-Execution consensus protocol (\PoE{}). Next, we will take a deep dive into the exact message complexity of this protocol. Furthermore, as we kept the presentation of \PoE{} described in Section~\ref{sec:poe} as simple as possible, we will also introduce techniques one can employ to obtain \PoE{} variants with a reduced message complexity. We have summarized the complexity of \PoE{} and its variants in Figure~\ref{fig:complex_summary}.

\begin{figure*}
\begin{minipage}{\textwidth}
\renewcommand{\thefootnote}{\alph{footnote}}
\centering
\small
\begin{tabular}{lccccc}
\toprule
Element&Example&\multicolumn{3}{c}{Size}&Amount\\
\cmidrule(l{4pt}r{4pt}){3-5}
&&Standard \PoE{}&Using Digests&Using Threshold Certificates &  \\
&&(Section~\ref{ss:std})&(Section~\ref{ss:fine})&(Section~\ref{ss:tsc})&\\
\midrule
Client request & $\Cert{\T}{\Client}$ & $\CSize$ & $\CSize$ & $\CSize$ & 1\\
\midrule
\MName{Propose} message& $m = \Message{Propose}{\Cert{\T}{\Client}, \View, \rn}$ & $\BigO{\CSize}$& $\BigO{\CSize}$& $\BigO{\CSize}$& $\n-1$ \\
\MName{Prepare} message& \Message{Prepare}{m} & $\BigO{\CSize}$ & $\BigO{1}$ & $\BigO{1}$ & $\n(\n-1)$ \footnotemark[1]\\
\MName{Inform} message & \Message{Inform}{\Cert{\T}{\Client}, \View, \rn, r} & $\BigO{\CSize + \Size{r}}$ & $\BigO{1 + \Size{r}}$ & $\BigO{1 + \Size{r}}$ & $\n$\\
\midrule
\MName{CheckCommit} message & \Message{CheckCommit}{\PC{m}} & $\BigO{\nf + \CSize}$ & $\BigO{1}$ \footnotemark[2] & $\BigO{1}$ \footnotemark[2] & $\n(\n-1)$\\
\midrule
\MName{Failure} message & \Message{Failure}{\View} & $\BigO{1}$ & $\BigO{1}$ & $\BigO{1}$ & $\n(\n-1)$\\
\MName{ViewState} message & \Message{ViewState}{\View, \CC{m}, \VE} & $\BigO{\nf^2 + \WSize(\nf + \CSize)}$ & $\BigO{\WSize\nf}$ \footnotemark[2] & $\BigO{\WSize}$ \footnotemark[2] & $(\n-1)$\\
\MName{NewView} message & \Message{NewView}{\View + 1, \VI} &
$\BigO{\nf\left(\nf^2 + \WSize(\nf + \CSize)\right)}$ &
$\BigO{\WSize\nf^2}$ \footnotemark[2] &
$\BigO{\WSize\nf}$ \footnotemark[2]& $(\n-1)$\\
\midrule
\MName{QueryCC} message & \Message{QueryCC}{\rn} & $\BigO{1}$ & $\BigO{1}$ & $\BigO{1}$ & $\n$ \footnotemark[3] \\
\MName{RespondCC} message &\Message{RespondCC}{\PC{m}, \CC{m}} & $\BigO{\nf^2+\CSize}$ & $\BigO{\nf}$ \footnotemark[2] & $\BigO{1}$ \footnotemark[2] & $\n$ \footnotemark[3]\\
\MName{InformCC} message & \Message{InformCC}{\Cert{\T}{\Client}, \rn, r} & $\BigO{\CSize + \Size{r}}$ & $\BigO{1 + \Size{r}}$ & $\BigO{1 + \Size{r}}$ & $\n$ \footnotemark[3]\\
\midrule
Prepared certificate & $\PC{m}$ & $\BigO{\nf + \CSize}$ & $\BigO{\nf}$ \footnotemark[4] & $\BigO{1}$ \footnotemark[4] & \\
Commit certificate & $\CC{m}$ & $\BigO{\nf^2 + \CSize}$ & $\BigO{\nf}$ \footnotemark[4] & $\BigO{1}$ \footnotemark[4] & \\
\bottomrule
\end{tabular}
\footnotetext[1]{The \MName{Prepare} message of the primary can be merged into the \MName{Propose} it broadcasts, reducing the amount of \MName{Prepare} messages to $(\n-1)^2$.}
\footnotetext[2]{The complexities noted are for versions that minimize message size at the cost of additional recovery steps for replicas to obtain missing client requests. See Remark~\ref{rem:simple_recovery} for details.}
\footnotetext[3]{These messages are only required for individual recovery processes and are not used when communication is sufficiently reliable.}
\footnotetext[4]{As client requests are not part of these certificates, replicas need to store a copy of the client request alongside these certificates.}
\end{minipage}
\caption{The message complexity (of messages exchanged) and storage complexity (of certificates stored) in the standard variants of \PoE{}. We assume that client requests have a size bounded by $\CSize$, that the primary can propose at-most $\WSize$ requests out-of-order after the last commit certificate (the window size), and that the execution result $r$ has size $\Size{r}$.}\label{fig:complex_summary}
\end{figure*}

\subsection{The Complexity of Standard \PoE{}}\label{ss:std}

As the first step, we take a look at the complexity of the \emph{standard} variant of \PoE{} as it is presented in Section~\ref{sec:poe}. To simplify presentation, we have assumed in Section~\ref{sec:poe} that replicas send messages to themselves whenever they broadcast messages. In practice, these messages can be eliminated.

In the normal case protocol of Figure~\ref{fig:poe_nc}, the primary broadcasts $\n - 1$ messages $m = \Message{Propose}{\Cert{\T}{\Client}, \View, \rn}$ whose size is directly determined by the size $\CSize = \Size{\Cert{\T}{\Client}}$ of the proposed client request $\Cert{\T}{\Client}$. Next, all $\n$ replicas broadcast messages \Message{Prepare}{m} to all other $(\n-1)$ replicas, resulting in $\n(\n-1)$ messages of size $\BigO{\CSize}$. Finally, all $\n$ replicas respond to the client via messages \Message{Inform}{\Cert{\T}{\Client}, \View, \rn, r}, whose size is directly determined by the size $\CSize$ of a client request and the size $\Size{r}$ of the result $r$.

In the check-commit protocol of Figure~\ref{fig:poe_cc}, all replicas broadcast a message \Message{CheckCommit}{\PC{m}}, in which $\PC{m}$ is a prepared certificate for proposal $m$. This prepared certificate $\PC{m}$ consists of $\nf$ messages \Message{Prepare}{m} signed by $\nf$ distinct replicas. As these $\nf$ messages are identical, only a single copy needs to be included (with size $\BigO{\CSize}$) together with $\nf$ signatures that have a constant size each.\footnote{To simplify presentation in Section~\ref{sec:poe}, the \MName{CheckCommit} messages carry prepared certificates. Without affecting the correctness of \PoE{} or the complexity of recovery, these prepared certificates can be eliminated in favor of constant-sized \MName{CheckCommit} messages. We refer to Section~\ref{ss:fine} and Section~\ref{ss:mac} for further details.}

In the view-change protocol of Figure~\ref{fig:poe_vc}, all at-most $\n$ replicas that detect failure of view $\View$ broadcast a message \Message{Failure}{\View} of constant size to all other $(\n-1)$ replicas.  All at-most $\n$ replicas that enter the new-view proposal stage in view $\View$ will send a single message of the form \Message{ViewState}{\View, \CC{m}, \VE}, with $\CC{m}$ a commit certificate and $\VE$ a set of prepared certificates, to the new primary. The commit certificate $\CC{m}$ consists of $\nf$ messages \Message{Commit}{\PC{m}_i}, $1 \leq i \leq \nf$, signed by $\nf$ distinct replicas. As each replica can construct a prepared certificate $\PC{m}_i$ distinct from all other prepared certificates and the signature on \Message{Commit}{\PC{m}_i} can only be verified when the message \Message{Commit}{\PC{m}_i} is known, a commit certificate needs to store the $\nf$ individual prepared certificates: one can only eliminate $\nf-1$ redundant copies of the original proposal $m$. Hence, each commit certificate consists of a proposal $m$, $\nf^2$ signatures of prepare message \Message{Prepare}{m}, and $\nf$ signatures of messages \Message{Commit}{\PC{m}_i}, $1 \leq i \leq \nf$. By Assumption~\ref{ass:size}, the number of prepared certificates in $\VE$ is upper bounded by some \emph{window size} $\WSize$. Finally, the new primary will broadcast a single message of the form \Message{NewView}{\View + 1, \VI}, with $\VI$ a set of $\nf$ \MName{ViewState} messages, to all $(\n-1)$ other replicas.  

Finally, in the query protocol of Figure~\ref{fig:query_protocol}, each replica that needs to recover the status of round $\rn$ after a period of unreliable communication will broadcast constant-sized messages of the form \Message{QueryCC}{\rn} and get responses \Message{RespondCC}{\PC{m}, \CC{m}} with $\PC{m}$ a prepared certificate and $\CC{m}$ a commit certificate. In the proof-of-commit recovery protocol of Figure~\ref{fig:poe_poc}, each replica will send a message \Message{InformCC}{\Cert{\T}{\Client}, \rn, r} whose size is determined by $\CSize$ to clients that are recovering from unreliable communication.

\subsection{Reducing Message Sizes with Digests}\label{ss:fine}

A close look at \PoE{} shows that in the original design of \PoE{}, full copies of \MName{Propose} messages are included in \MName{Prepare}, \MName{CheckCommit}, \MName{ViewState}, and \MName{NewView} messages. Furthermore, full copies of the client request are included in \MName{Inform} and \MName{InformCC} messages.

When client requests are large, this will incur a substantial communication cost. The typical way to support large client requests is by replacing client requests $\Cert{\T}{\Client}$ by a constant-sized message digest $\Digest{\Cert{\T}{\Client}}$ obtained from $\Cert{\T}{\Client}$ using a strong cryptographic hash function $\Digest{\cdot}$~\cite{cryptobook}. To do so, one simply replaces the message \Message{Propose}{\Cert{\T}{\Client}, \View, \rn} the primary broadcasts when proposing transaction $\T$, requested by client $\Client$, as the $\rn$-th transaction in some view $\View$ by the pair $(m_d = \Message{Propose}{d, \View, \rn}, \Cert{\T}{\Client})$ such that $d = \Digest{\Cert{\T}{\Client}}$. All other messages will use the constant-sized proposal $m_d$, which only includes the constant-sized digest $d$, in all messages that contain a \MName{Propose} message. Furthermore, also the client request $\Cert{\T}{\Client}$ in \MName{Inform} and \MName{InformCC} messages can be replaced by the constant-sized digest $d$.

Receipt of a message \Message{CheckCommit}{\PC{m_d}} is no longer sufficient to prepare and execute client request $\Cert{\T}{\Client}$ (\lfref{fig:poe_cc}{prepare_exec}): after receiving message \Message{CheckCommit}{\PC{m_d}}, one only has a digest $d = \Digest{\Cert{\T}{\Client}}$ of client request $\Cert{\T}{\Client}$, but not the client request itself. Fortunately, the existence of prepared certificate $\PC{m_d}$ guarantees that at-least $\nf-\f$ non-faulty replicas have sent \Message{Prepare}{m_d} and, hence, have received a pair $(m_d, \Cert{\T}{\Client})$. Hence, after a replica $\Replica[q]$ receives prepared certificate $\PC{m_d}$, it can query the replicas whose signed prepared certificates are in $\PC{m_d}$ for the missing client request. As $\Replica[q]$ can derive the digest $d$ from $\PC{m_d}$, it can verify whether any request $\Cert{\T'}{\Client'}$ it received in response to its query is valid by verifying whether $d = \Digest{\Cert{\T'}{\Client'}}$ (under normal cryptographic assumptions). Similar strategies can be made to recover missing client requests due to \MName{NewView} messages. As such, the usage of digests to eliminate copies of client requests trades a  reduction of message complexity with a more-complex recovery path.

\begin{remark}
Although digests reduce the impact of \emph{very large} client requests on throughput, they do not eliminate the need for a limit on the size of client requests as specified in Assumption~\ref{ass:size}: if the size of client requests is not limited, then non-faulty replicas cannot expect a timely response of \emph{any} forwarded client requests, as the primary could always be busy with sending an arbitrarily large proposal.
\end{remark}

The second source of large messages in \PoE{} are \MName{CheckCommit} messages, which impacts the size of commit certificates and of \MName{ViewState}, \MName{NewView}, and \MName{RespondCC} messages. To further reduce the size of \MName{CheckCommit} messages, one can replace messages of the form \Message{CheckCommit}{\PC{m}}, with $\PC{m}$ a prepared certificate for proposal $m$, by messages of the form \Message{CheckCommit}{m_d} (where $m_d$ is the proposal obtained from $m$ when the client request is represented by a digest). Doing so will reduce the size of a \MName{CheckCommit} message from $\BigO{\nf + \CSize}$ to $\BigO{1}$. Due to this change, all commit certificates $\CC{m_d}$ will consists of $\nf$ messages \Message{CheckCommit}{m_d} signed by $\nf$ distinct replicas. As these $\nf$ messages are identical, only a single copy needs to be included in $\CC{m_d}$ (with size $\BigO{1}$) together with $\nf$ signatures that have a constant size each, thereby reducing the size of commit certificates from $\BigO{\smash{\nf^2 + \CSize}}$ to $\BigO{\nf}$.

Eliminating prepared certificates from \MName{CheckCommit}  messages, by replacing messages of the form \Message{CheckCommit}{\PC{m}} by messages of the form \Message{CheckCommit}{m_d}, does impact the recovery mechanism of \lfref{fig:poe_cc}{prepare} by which replicas can prepare and execute proposals: a \emph{single} message \Message{CheckCommit}{m_d} not only lacks a client request, it also lacks a prepared certificate that provides the necessary information to obtain $\Cert{\T}{\Client}$, to prepare $m_d$, and to execute $\Cert{\T}{\Client}$. To deal with this lack of information, a replica $\Replica[q]$ that wants to use the recovery mechanism of \lfref{fig:poe_cc}{prepare} has to wait for identical \Message{CheckCommit}{m_d} messages signed by $\f+1$ distinct replicas, after which $\Replica[q]$ has a guarantee that at-leas one of these signing replicas is non-faulty and can be queried for the necessary prepared certificate $\PC{m_d}$ and client request $\Cert{\T}{\Client}$. (For completeness, we note that this change in the number of required \MName{CheckCommit} messages necessary to recover prepared certificates does not invalidate the core properties of the check-commit protocol proven in Theorem~\ref{thm:main_cc}). As such, also this simplification of \MName{CheckCommit} messages trades a reduction of message complexity with a more-complex recovery path.

\begin{remark}\label{rem:simple_recovery}
One can strike other balances between message complexity and the complexity of the recovery path than the approach outlined above. E.g., we can retain most message size benefits, while keeping the recovery path simple with a slightly different approach. In specific,
\begin{enumerate}
\item To fully retain the recovery path of \lfref{fig:poe_cc}{prepare_exec}, while also reducing the size of commit certificates to $\BigO{\nf}$, one can replace messages of the form \Message{CheckCommit}{\PC{m}} by triples of the form $(\Message{CheckCommit}{m_d}, \PC{m_d}, \Cert{\T}{\Client})$, in which the signed constant-sized message \Message{CheckCommit}{m_d} is used in the construction of commit certificates of size $\BigO{\nf}$, while the prepared certificate $\PC{m_d}$ and the client request $\Cert{\T}{\Client}$ can be used for the recovery path of \lfref{fig:poe_cc}{prepare_exec}.
\item To assure that the new primary has access to all necessary client requests, while also reducing the size of \MName{ViewState} and, consequently, \MName{NewView} messages, one can replace messages of the form \Message{ViewState}{\View, \CC{m}, \VE} by pairs of the form $(\Message{ViewState}{\View, \CC{m_d}, \VE_d}, S)$ in which $S$ is the set of client requests proposed by $m_d$ and by the proposals $m_d'$ with $\PC{m_d'} \in \VE$.
\item To assure that replicas have knowledge of all client requests necessary to start a new view, while also reducing the size of \MName{NewView} mesages, the new primary can replace messages of the form \Message{NewView}{\View+1, \VI} by pairs of the form $(\Message{NewView}{\View+1, \VI_d}, \Cert{\T}{\Client})$, with $n_d = \Message{NewView}{\View+1, \VI_d}$ and  $\Ledger{n_d}[\NVrncc{n_d}] = \Cert{\T}{\Client}$ the client request proposed by any commit certificate included in $\VI$ in the last round $\NVrncc{n_d}$ for which $\VI$ includes commit certificates. We note that the \MName{NewView} message $n_d$ only requires this single client request to convey sufficient information: all client requests for rounds $\rn$, $\NVrncc{n_d} < \rn \leq \NVrn{n_d}$ need to be re-proposed by the new primary, while all replicas can validate whether the new primary proposed the right requests via the digests available in $n_d$.
\end{enumerate}
\end{remark}

\subsection{Threshold Signatures for Certificates}\label{ss:tsc}

A further look at the usage of certificates in \PoE{} shows that the prepared and commit certificates, which are used in several messages, will grow large in deployments with many replicas and will cause high storage costs and communication costs in such deployments. E.g., consider the prepared and commit certificates for a \MName{Propose} message $m = \Message{Propose}{\Cert{\T}{\Client}, \View, \rn}$. Even when using digests, both $\PC{m}$ and $\CC{m}$ will consists of a set of $\nf$ distinct (constant-sized) signatures. Hence, these certificates have a size that is linear with respect to the number of (non-faulty) replicas.

To further reduce the size of certificates, one can employ \emph{threshold signatures}~\cite{rsasign,eccsign}.
\begin{definition}
A \emph{$x:y$ threshold signature scheme} is a digital signature scheme for a set $X$ of $\abs{X} = x$ participants in which any set $Y \subseteq X$ of $\abs{Y} = y$ participants can cooperate to produce a certificate for any given value $v$. To do so, each participant in $Y$ produces their signature share for $v$. Using only valid signature shares for $v$ produced by at-least $y$ distinct participants, anyone can produce a valid certificate for $v$.

In specific, each participant $p \in X$ will receive a distinct private key $k(p)$ that $p$ can use to sign any value $v$, resulting in a \emph{signature share} $\PCert{v}{p}$. As with traditional public-key cryptography, the signature share of $p$ can only be produced using $k(p)$ and anyone can verify the authenticity of signature share $\PCert{v}{p}$ using the public key associated with $k(p)$. Given a set of signature shares $\{ k(p) \mid p \in Y \}$, one can produce a single constant-sized \emph{threshold signature} that certifies that value $v$ was signed by at-least $y$ distinct participants from the set  $X$ of participants in the threshold signature scheme.
\end{definition}

In the setting of \PoE{}, one can use a $\n:\nf$ threshold signature scheme in which each replica is a participant. Next, instead of using normal digital signatures to sign a message $m'$, each replica $\Replica$ uses their private key $p(\Replica)$ to produce signature shares $\PCert{m}{\Replica}$ for $m'$. These signature shares can be used to provide  authenticated communication. If a replica receives $\nf$ signature shares for a given \MName{Prepare} message for proposal $m$, then one can use these signature shares to produce a prepared certificate for $m$ of constant size. Likewise, one can use $\nf$ signature shares for a given \MName{CheckCommit} message to produce a commit certificate of constant size.

\subsection{A Linear Proof-of-Execution}\label{ss:lin}

Although using digests (Section~\ref{ss:fine}) or threshold certificates (Section~\ref{ss:tsc}) drastically reduces the size of individual messages, they do not change the amount of messages being sent: all replicas will broadcast \MName{Prepare} and \MName{CheckCommit} messages in the normal case, a \emph{quadratic amount}. Next, we show how to further apply threshold signatures to assure that each phase of communication only uses a linear amount of messages. We note that such applications of threshold signatures are common among modern variants of \PBFT{} such as \SBFT{}~\cite{sbft}, \LinBFT{}~\cite{linbft}, and \HS{}~\cite{hotstuff}. Hence, we shall mainly focus on the novel design necessary to make \emph{all} normal-case communication of \PoE{} linear, this including the decentralized check-commit protocol.  Next, we detail the design of Linear-\PoE{}.

The normal-case of \PoE{} consists of \emph{two} phases of quadratic communication: in the \emph{first} phase, \MName{Prepare} messages are exchanged, while in the \emph{second} phase \MName{CheckCommit} messages are exchanged. These two phases are rather different in their design: the \emph{prepare phase} (the first phase) follows the traditional primary-backup design of \PBFT{} and only has to succeed \emph{when the primary is non-faulty}, whereas the \emph{check-commit phase} (the second phase) is fully decentralized in the sense that it should succeed \emph{independent of any non-faulty behavior} in all rounds that do not lead to primary failure.

To make the prepare phase linear, we can apply the well-known transformation from \emph{all-to-all} communication to \emph{all-to-one-to-all} communication using threshold signatures (e.g.,~\cite{linbft,sbft,hotstuff,mc_2021}). To do so, we replace the prepare phase by two subphases. In specific, upon arrival of the \emph{first} proposal for round $\rn$ of view $\View$ via some \MName{Propose} message pair $(m_d = \Message{Propose}{d, \View, \rn}, \Cert{\T}{\Client})$, with $d = \Digest{\Cert{\T}{\Client}}$, each (non-faulty) replica $\Replica$ that received this pair will enter the \emph{prepare phase} for $m_d$. As the first subphase of the prepare phase, $\Replica$ supports the proposal $m_d$ by sending a reply message \Message{Support}{\PCert{\Message{Prepare}{m_d}}{\Replica}} to the primary. As the last subphase, the primary collects well-formed messages of the form \Message{Support}{\PCert{\Message{Prepare}{m_d}}{\Replica[q]}}, $\Replica[q] \in S, S \subset \Replicas$, from a set of $\nf = \abs{S}$ replicas. Next, the primary uses the provided signature shares to produce a constant-sized prepared certificate $\PC{m_d}$. Finally, the primary broadcasts $\PC{m_d}$ to all replicas via a message \Message{Certify}{\PC{m_d}}. After replicas receive \Message{Certify}{\PC{m_d}}, they use the prepared certificate $\PC{m_d}$ included in \Message{Certify}{\PC{m_d}} to prepare $m$ and execute $\T$ using \lsfref{fig:poe_nc}{store}{inform}. We note that the linear version of the prepare phase relies on a single replica, the primary, to aggregate \MName{Support} messages into a prepared certificate. Primaries will only fail to do so if they are faulty or communication is unreliable. Hence, the normal view-change protocol can deal with failures of this linear version of the prepare phase. An illustration of the working of this linear version of the normal-case protocol can be found in Figure~\ref{fig:poe_lin_nc_ill}.

\begin{figure}[t!]
    \centering
    \begin{tikzpicture}[yscale=0.5,xscale=0.75]
        \draw[thick,draw=black!75] (1.75,   0) edge ++(6.5, 0)
                                   (1.75,   1) edge ++(6.5, 0)
                                   (1.75,   2) edge ++(6.5, 0)
                                   (1.75,   3) edge[blue!50!black!90] ++(6.5, 0);

        \draw[thin,draw=black!75] (2, 0) edge ++(0, 3)
                                  (4, 0) edge ++(0, 3)
                                  (6, 0) edge ++(0, 3)
                                  (8, 0) edge ++(0, 3);

        \node[left] at (1.8, 0) {$\Replica_3$};
        \node[left] at (1.8, 1) {$\Replica_2$};
        \node[left] at (1.8, 2) {$\Replica_1$};
        \node[left] at (1.8, 3) {$\Replica[p]$};

        \path[->] (2, 3) edge (4, 2)
                         edge (4, 1)
                         edge (4, 0)
                           
                  (4, 2) edge (6, 3)
                  (4, 1) edge (6, 3)
                  
                  (6, 3) edge (8, 2)
                         edge (8, 1)
                         edge (8, 0);
                                 
        \node[dot,colA] at (8, 0) {};
        \node[dot,colA] at (8, 1) {};
        \node[dot,colA] at (8, 2) {};
        \node[dot,colA] at (8, 3) {};
        
        \path (8, 3) edge[thick,colA] (8, -0.8);
        \node[label,below right,align=left] at (8, 0) {Execute $\T$\\\phantom{Decide $\T$}};

        \node[label,below,yshift=3pt] at (3, 0) {\vphantom{Execute $\T$}\MName{Propose}};
        \node[label,below,yshift=3pt] at (5, 0) {\vphantom{Execute $\T$}\MName{Support}};
        \node[label,below,yshift=3pt] at (7, 0) {\vphantom{Execute $\T$}\MName{Certify}};
    \end{tikzpicture}
    \caption{A schematic representation of the \emph{linear} normal-case protocol of \PoE{}: the primary $\Replica[p]$ proposes transaction $\T$ to all replicas via a \MName{Propose} message $m$. Next, all other replicas respond with \MName{Support} message holding a signature share for a message of the form \Message{Prepare}{m}. The primary combines $\nf$ of these signature shares to construct a constant-sized prepared certificate $\PC{m}$ holding a constant-sized threshold signature. Finally, the primary sends the prepared certificate $\PC{m}$ to all replicas via a \MName{Certify} message, after which replicas can execute $\T$. In this example, replica $\Replica_3$ is faulty and does not participate.}\label{fig:poe_lin_nc_ill}
\end{figure}
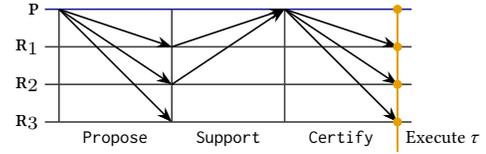

Next, we look at the check-commit phase. We reiterate that the check-commit phase operates rather differently than the prepare phase (and than the commit phases in most \PBFT{}-style primary-backup consensus protocols). Indeed, the \emph{decentralized} design of the check-commit protocol of Figure~\ref{fig:poe_cc} is crucial for its correctness: the check-commit protocol of Figure~\ref{fig:poe_cc} guarantees that any replicas that are left in the dark by the primary (without the primary disrupting the progress of the normal-case protocol) will be able to recover any missing proposals via the check-commit protocol (this independent of the behavior of any faulty replicas).

Consequently, we cannot simply make the check-commit protocol linear by assigning a fixed replica $\Replica[q]$ that aggregates signature shares of the form $\PCert{\Message{Commit}{m_d}}{\Replica}$ into commit certificates $\CC{m_d}$: replica $\Replica[q]$ can be faulty and will have the power to keep individual replicas in the dark, thereby breaking a main correctness guarantee of the check-commit protocol. 

To make the ckeck-commit protocol linear, we employ three techniques:
\begin{description}
\item[aggregator rotation.] In round $\rn$, the replica $\Replica[q]$ with $\ID{\Replica[q]} = \rn \bmod \n$ will be the aggregator that receives signature shares. Hence, every round has a different aggregator, assuring that $\nf$ out of $\n$ consecutive rounds have a non-faulty aggregator (when communication is reliable) and can succeed even with interference of other replicas.

\item[aggregated multi-round check-commits.] As the previous rounds can have faulty aggregators, due to which these rounds did not properly finish their check-commit steps, the aggregator $\Replica[q]$ of round $\rn$ will perform a \emph{multi-round} check-commit for round $\rn$ and the preceding $\n-1$ rounds (hence, a multi-round check-commit for all rounds \emph{since} the previous round for which $\Replica[q]$ was the aggregator). In this way, the correct behavior of  $\Replica[q]$ can guarantee that it can always perform a successful check-commit when it is the aggregator. To assure that the aggregated check-commit messages for round $\rn$ \emph{can} have a constant size, we do not check-commit on an individual  \MName{Propose} message, but on the digest $\Digest{\PC{m_{d,1}}, \dots, \PC{m_{d,\n}}}$ of $\n$ propose certificates, one for each \MName{Propose} message of the last $\n$ rounds. (For simplicity, we assume that all replicas have agreed on default propose certificates and proposals for any rounds before the first round).

\item[recovery certificates.] The aggregator $\Replica[q]$ of round $\rn$ is only guaranteed to be able to construct a commit certificate for $\Digest{\PC{m_{d,1}}, \dots, \PC{m_{d,\n}}}$ if \emph{all} non-faulty replicas have a prepared certificates for the last $\n$ rounds. A core functionality of the check-commit protocol is to provide such prepared certificates to non-faulty replicas that are missing them (if at-least $\f+1$ non-faulty replicas obtained such prepared certificates, as otherwise the view-change protocol will take care of recovery). To be able to provide such recovery, the linear check-commit protocol will employ a recovery step that uses a $\n:\f+1$ threshold signature scheme to produce \emph{recovery certificates} that can be used by non-faulty replicas to reliably obtain any missing prepared certificates. This $\n:\f+1$ threshold signature scheme is distinct from the normal $\n:\nf$ threshold signature scheme we use to generate prepared and commit certificates and we write $\RCert{v}{\Replica}$ to denote a \emph{signature share} produced by $\Replica$ using this $\n:\f+1$ threshold signature scheme.
\end{description}

The pseudo-code for this \emph{linear} check-commit protocol can be found in Figure~\ref{fig:lin_poe_cc} and an illustration of the working of this linear version of the check-commit protocol can be found in Figure~\ref{fig:lin_poe_cc_ill}.

\begin{figure}[t]
    \begin{myprotocol}
        \TITLE{Check-commit role}{running at every replica $\Replica \in \Replicas$}
        \EVENT{$\Replica$ prepared $m = \Message{Propose}{\Cert{\T}{\Client}, \View, \rn}$ and executed $\T$}
            \STATE Wait until all rounds up-to round $\rn - \n$ have a commit certificate.\label{fig:lin_poe_cc:wait}
            \IF{$\View$ is the current view}
                \STATE Let $D = \Digest{\PC{m_{d,1}}, \dots, \PC{m_{d,\n}}}$ be the digest for the prepared certificates stored in the last $\n$ rounds (the rounds $\rn-(\n-1),\dots,\rn$).
                \STATE Send \Message{SupportCC}{\rn, D, \PCert{D}{\Replica}, \RCert{D}{\Replica}} to the aggregator $\Replica[q]$ of round $\rn$ (replica $\Replica[q]$ with $\ID{\Replica[q]} = \rn \bmod \n$).
                .\label{fig:lin_poe_cc:broad}
            \ENDIF
        \ENDEVENT
        \EVENT{$\Replica$ with $\ID{\Replica} = \rn \bmod \n$ receives $\f+1$ well-formed messages\\
        \phantom{\textbf{event }}$m_i = \Message{SupportCC}{\rn, D, \PCert{D}{\Replica_i}, \RCert{D}{\Replica_i}}$, $1 \leq i \leq \f+1$, from\\
        \phantom{\textbf{event }}$\f+1$ distinct replicas}
            \STATE Broadcast \Message{RecoveryCC}{\rn, D, \RC{D}} to all replicas, in which $\RC{D}$ is the \emph{recovery certificate} constructed using the signature shares $\RCert{D}{\Replica_i}$, $1 \leq i \leq \f+1$.
        \ENDEVENT
        \EVENT{$\Replica$ receives a message $m_r =\Message{RecoveryCC}{\rn, D, \RC{D}}$ from \\
        \phantom{\textbf{event }}aggregator $\Replica[q]$ of round $\rn$ ($\Replica[q]$ with $\ID{\Replica[q]} = \rn \bmod \n$)}
            \IF{$\Replica$ cannot construct $D$ with its local prepared certificates}
                \STATE Query $\Replica[q]$ for the missing prepared certificates. If $\Replica[q]$ does not have these, then $\Replica[q]$ can query the $\f+1$ distinct replicas (of which at-least one is non-faulty) from which $\Replica[q]$ received the \MName{SupportCC} messages used to construct $m_r$ for the prepared certificates they used to construct $D$.
            \ENDIF
        \ENDEVENT
        \EVENT{$\Replica$ with $\ID{\Replica} = \rn \bmod \n$ receives $\nf$ well-formed messages\\
        \phantom{\textbf{event }}$m_i = \Message{SupportCC}{\rn, D, \PCert{D}{\Replica_i}, \RCert{D}{\Replica_i}}$, $1 \leq i \leq \nf$, from\\
        \phantom{\textbf{event }}$\nf$ distinct replicas}
            \STATE Broadcast \Message{CertifyCC}{\rn, D, \CC{\rn}} to all replicas, in which $\CC{\rn}$ is the \emph{commit certificate} constructed using the signature shares $\PCert{D}{\Replica_i}$, $1 \leq i \leq \nf$.
        \ENDEVENT
        \EVENT{$\Replica$ receives a message $m_c =\Message{CertifyCC}{\rn, D, \CC{\rn}}$ from \\
        \phantom{\textbf{event }}aggregator $\Replica[q]$ of round $\rn$ ($\Replica[q]$ with $\ID{\Replica[q]} = \rn \bmod \n$)}
            \STATE Wait until all rounds up-to round $\rn-\n$ have a commit certificate.\label{fig:lin_poe_cc:wait_store}
            \STATE Store \emph{commit certificate} $\CC{\rn}$ for round $\rn$, which can also be used as a commit certificate for the $\n-1$ rounds preceding round $\rn$.\label{fig:lin_poe_cc:store}
        \ENDEVENT
    \end{myprotocol}
    \caption{The linear check-commit protocol in \PoE{}.}\label{fig:lin_poe_cc}
\end{figure}

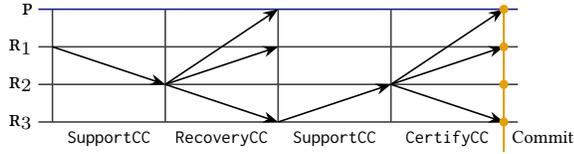
\begin{figure}[t!]
    \centering
    \begin{tikzpicture}[yscale=0.5,xscale=0.75]
        \draw[thick,draw=black!75] (1.75,   0) edge ++(8.5, 0)
                                   (1.75,   1) edge ++(8.5, 0)
                                   (1.75,   2) edge ++(8.5, 0)
                                   (1.75,   3) edge[blue!50!black!90] ++(8.5, 0);

        \draw[thin,draw=black!75] (2, 0) edge ++(0, 3)
                                  (4, 0) edge ++(0, 3)
                                  (6, 0) edge ++(0, 3)
                                  (8, 0) edge ++(0, 3)
                                  (10, 0) edge ++(0, 3);

        \node[left] at (1.8, 0) {$\Replica_3$};
        \node[left] at (1.8, 1) {$\Replica_2$};
        \node[left] at (1.8, 2) {$\Replica_1$};
        \node[left] at (1.8, 3) {$\Replica[p]$};

        \path[->]
                  (2, 2) edge (4, 1)
                 
                  (4, 1) edge (6, 3)
                         edge (6, 2)
                         edge (6, 0)
                         
                  (6, 0) edge (8, 1)
                  
                  (8, 1) edge (10, 3)
                         edge (10, 2)
                         edge (10, 0);

        \node[dot,colA] at (10, 0) {};
        \node[dot,colA] at (10, 1) {};
        \node[dot,colA] at (10, 2) {};
        \node[dot,colA] at (10, 3) {};
        
        \path (10, 3) edge[thick,colA] (10, -0.8);
        \node[label,below right,align=left] at (10, 0) {Commit\\\phantom{Decide $\T$}};

        \node[label,below,yshift=3pt] at (3, 0) {\vphantom{Execute $\T$}\MName{SupportCC}};
        \node[label,below,yshift=3pt] at (5, 0) {\vphantom{Execute $\T$}\MName{RecoveryCC}};
        \node[label,below,yshift=3pt] at (7, 0) {\vphantom{Execute $\T$}\MName{SupportCC}};
        \node[label,below,yshift=3pt] at (9, 0) {\vphantom{Execute $\T$}\MName{CertifyCC}};
    \end{tikzpicture}
    \caption{A schematic representation of the \emph{linear} check-commit protocol of \PoE{}: in round $\rn = 6$, all replicas send a \MName{SupportCC} message holding a signature share for the digest $D$ of the last $\n$ proposals to the aggregator of round $\rn$ (replica $\Replica[q]$ with $\ID{\Replica[q]} = \rn \bmod \n$). In this case, all replicas send to replica $\Replica_2$. The aggregator $\Replica[q]$ uses the first $\f+1$ of these messages to construct a recovery certificate, which can be used for replicas in the dark to recover missing rounds. Next, the aggregator $\Replica[q]$ uses $\nf$ of these messages to construct a \emph{commit certificate} $\CC{\rn}$ for rounds $\rn-(\n-1), \dots, \rn$. Finally, the aggregator $\Replica[q]$ sends the commit certificate $\CC{\rn}$ to all replicas via a \MName{CertifyCC} message, after which replicas can commit $m_{\rn-\n+1},\dots, m_{\rn}$. In this example, replica $\Replica_3$ is left in the dark by a faulty primary and can only participate \emph{after} receiving a recovery certificate.}\label{fig:lin_poe_cc_ill}
\end{figure}

\begin{remark}\label{rem:failure}
We note that the failure detection stage still exchanges a quadratic number of \MName{Failure} messages. As shown in Lemma~\ref{lem:sync}, \PoE{} depends on these all-to-all messages to \emph{synchronize} replicas after periods of unreliable communication. We do not believe we can fully eliminate such quadratic communication when operating in an asynchronous environment in which communication has unreliable periods in which messages are \emph{arbitrary delayed or lost} without making additional assumptions (e.g., by assuming reliable or synchronous communication, or by assuming external services to detect and deal with failures).
\end{remark}

We complete our treatment of Linear-\PoE{} with a caution that applies to \emph{all} consensus protocols that use threshold signatures. First, the usage of threshold signatures will increase the latency between receiving client requests and executing client request. In \PoE{}, this increase stems from the increase from two communication rounds to three communication rounds. Furthermore, even though the usage of threshold signatures does sharply reduces the \emph{overall} communication complexity of consensus (the total number of messages sent between all replicas), this overall reduction does not imply a significant reduction of the communication complexity at the level of individual replicas, however, as illustrated next:
\begin{example}\label{ex:ts}
Consider a deployment of \PoE{} with $\n = 31$ replicas. We assume that \MName{Propose} messages are larger than all other messages, which is typically the case due to the usage of digests (Section~\ref{ss:fine}). Consequently, the primary, whom broadcasts \MName{Propose} messages to all other replicas, has a much higher bandwidth usage than all other replicas. 

Now consider a deployment in which \MName{Propose} messages with client requests have size $\CSize = \SI{10}{\kibi\byte}$ and in which all other messages have a size of $\MSize = \SI{256}{\byte}$. Without using threshold signatures, the primary will send $\n-1$ \MName{Propose} messages, receive $\n-1$ \MName{Prepare} messages, and send and receive $\n-1$ \MName{CheckCommit} messages per consensus decision. Hence, in this case, a single consensus decision costs $(\n-1)(\CSize + 3\MSize) = \SI{322}{\kibi\byte}$ of bandwidth for the primary. If we switch to using threshold signatures, then the primary will still send $\n-1$ \MName{Propose} messages, will receive $\n-1$ \MName{Support} message, and send $\n-1$ \MName{Certify} messages. Furthermore, once every $\n$ rounds the primary will send and receive $3(\n-1)$ messages associated with the linear check-commit protocol. Hence, in this case, a single consensus decision still costs $(\n-1)(\CSize + 2\MSize + \frac{3\MSize}{\n}) \approx \SI{315}{\kibi\byte}$ of bandwidth for the primary, a reduction of only $2\%$.
\end{example}

Due to Example~\ref{ex:ts}, the usage of Linear-\PoE{} over \PoE{} in deployments in which the bandwidth at the primary is the bottleneck for performance will only yield a minor improvement in throughput, this at the cost of a roughly-$\frac{1}{3}$-th increase in latency.

\subsection{\MAC{}-based message authentication}\label{ss:mac}

Up till now, we have presented the design of \PoE{} using digital signatures and threshold signatures, two powerful forms of asymmetric cryptography that provide \emph{strong message authentication}: messages signed by some replica $\Replica$ using either digital signatures or threshold signatures can be safely \emph{forwarded} by replicas, while faulty replicas are unable to forge or tamper with such signed messages. This strong form of message authentication is used throughout the design of \PoE{}, especially within the view-change protocol that relies on message forwarding.

Unfortunately, the usage of asymmetric cryptography for message authentication comes at a high computational cost~\cite{icdcs,mc_2021}. As an alternative, one can consider using message authentication codes (\MAC{}s)~\cite{cryptobook}, which are based on symmetric cryptography, to provide message authentication with much lower costs. Unfortunately, \MAC{}s only provide a \emph{weaker form} of message authentication: \MAC{}s can only be used to verify the sender of messages (if the sender is non-faulty), as \MAC{}s prevent faulty replicas from impersonating non-faulty replicas. Hence, \MAC{}s do not protect messages against tampering when they are forwarded. 

Still, it is well-known that \PBFT{}-style consensus protocols can be built using \MAC{}s only~\cite{pbftj,icdcs,mc_2021}. This is also the case for \PoE{}: when using message digests (as outlined in Section~\ref{ss:fine}), neither the normal-case protocol, nor the check-commit protocol, nor the failure detection stage of the view-change protocol rely on message forwarding, as each of these protocols only rely on counting the number of senders of valid messages. Hence, to assure that \PoE{} can operate using only \MAC{}s, one only needs to redesign the new-view proposal stage and new-view accept stage of the view-change protocol.  Such a redesign is possible in an analogous way as the redesign of \PBFT{} to a \MAC{}-based version~\cite{pbftj}.

\subsection{Out-of-Order Processing}

All variants of \PoE{} support out-of-order processing of consensus decisions, which is crucial for providing high consensus throughput in environments with high message delays such as wide-area (Internet) deployments.

In out-of-order consensus protocols such as \PoE{}, the primary can freely propose requests for future rounds even if the current consensus decision is not yet finalized (as long as those future rounds lay in the current window determined by the window size). By doing so, the primary can fully utilize its outgoing bandwidth to replicate future requests, instead of waiting for other replicas to finish their consensus steps. The following example illustrates the impact of out-of-order processing:

\begin{example}\label{ex:stimate}
Consider a deployment with $\n = 31$ replicas, with a message delay of $\delta = \SI{15}{\milli\second}$ (e.g., replicas are distributed over datacenters in a country), and in which all replicas have an outgoing bandwidth of $\BW = \SI{1}{\giga\bit\per\second}$. In this environment, a single \PoE{} consensus round takes at-least 3 consecutive messages (\MName{Propose}, \MName{Prepare}, and \MName{CheckCommit}) and takes at-least $3\delta = \SI{45}{\milli\second}$. Hence, if consensus processing is not out-of-order (sequential), then this deployment of \PoE{} will only have a throughput of less-than $\frac{\SI{1}{\second}}{3\delta} \approx 22$ consensus decisions per second.

As in Example~\ref{ex:ts}, we consider a deployment in which \MName{Propose} messages with client requests have size $\CSize = \SI{10}{\kilo\byte}$ and in which all other messages have a size of $\MSize = \SI{256}{\byte}$. As noted in Example~\ref{ex:ts}, the primary has the highest bandwidth usage in \PoE{} and in this deployment, each consensus decision will cost $(\n-1)(\CSize + 3\MSize)$. Hence, if consensus processing is out-of-order, then this deployment of \PoE{} can have a throughput of $\frac{\BW}{(\n-1)(\CSize + 3\MSize)} \approx 3028$ consensus decisions per second, which is two orders of magnitude larger than the sequential approach.
\end{example}

\section{Analytical Evaluation}\label{sec:anal}

In Section~\ref{sec:complexity}, we analyzed in-depth the cost of consensus using the \PoE{} consensus protocol and its variants. Next, we compare these costs with the costs of existing and frequently-used consensus protocols. A summary of our comparison can be found in Figure~\ref{fig:anal}.

\subsection{The Baseline of Comparison: \PBFT{}}
The \emph{Practical Byzantine Fault Tolerance} consensus protocol~\cite{pbftj} was introduced two decades ago and to this day is a baseline for providing high-performance consensus in practical environments. \PBFT{} is highly resilient and can even deal with network failures: although network failures can temporarily disrupt \emph{new} consensus decisions in \PBFT{}, \PBFT{} is able to automatically recover its operations once the network becomes reliable, this without ever loosing any previously-made consensus decisions.  Furthermore, recent works have shown that highly-optimized and fine-tuned implementations of this protocol can achieve throughputs surpassing more modern protocols in moderately-sized deployments~\cite{mc_2021,icdcs}. 

We have already provided a high-level description of the working of \PBFT{} in Example~\ref{ex:pbft}. As mentioned in Section~\ref{sec:poe}, \PoE{} shares the primary-backup design of \PBFT{} and, as proven in Section~\ref{ss:poe_proofs}, \PoE{} shares the high resilience of \PBFT{}. The main differences between \PBFT{} and \PoE{} can be summarized as follows:
\begin{enumerate}
\item \PoE{} utilizes \emph{speculative execution}, due to which replicas can execute transactions and inform clients directly after the \emph{prepare phase}, whereas \PBFT{} only executes client requests after their \emph{commit phase}. Due to this, \PoE{} can inform clients within only \emph{two communication rounds}, while \PBFT{} requires three communication rounds before it can inform clients.

\item \PoE{} utilizes the \emph{check-commit protocol}, a decentralized single phase protocol that servers the same roles as the \emph{commit phase} of \PBFT{}, which requires a round of communication, and the checkpoint protocol of \PBFT{}, which requires another round of communication. Due to this, the normal case of \PoE{} only takes three rounds of communication (of which only two are all-to-all), whereas the normal case of \PBFT{} requires four rounds of communication (of which three are all-to-all).
\end{enumerate}
The view-change protocols of \PoE{} and \PBFT{} are comparable in communication costs: the main difference between these view-change protocols is that view-changes  in \PoE{} need to account for speculative execution (e.g., perform rollbacks), but this accounting is done while determining the state represented by new view proposals and does not impose additional communication costs.

Due to the above analysis, we can conclude that \PoE{} will \emph{outperform} \PBFT{} in all situations, as \PoE{} lowers communication costs \emph{in all cases} (by eliminating one round of all-to-all communication), while also potentially reducing client latencies due to speculative execution.\

\subsection{Optimistic Protocols: \ZZ{}}
Several consensus protocols have attempted to reduce the communication cost of \PBFT{} via \emph{optimistic consensus}~\cite{zyzzyvaj,fabj,bft700j}. In an optimistic consensus protocol, the normal-case consensus protocol consists of a \emph{fast path} that will succeed under optimal conditions (no faulty behavior or unreliable communication) and a \emph{slow path} to deal with non-optimal conditions.

An example of such optimistic consensus is \ZZ{}~\cite{zyzzyvaj}: in \ZZ{}, all replicas execute directly after they receive a proposal of the primary, directly inform the client, and proceed with the next round of consensus. Hence, in \ZZ{} the optimal-case cost of consensus are minimal: only one round of primary-to-backup communication. We note that this \emph{fast path} is not able to detect failures between replicas: \ZZ{} requires clients to inform replicas of any failures, after which replicas can enter a \emph{slow path} to deal with these failures.

Unfortunately, consensus protocols with \emph{optimistic} fast paths such as \ZZ{}~\cite{zyzzyvaj} and \FaB{}~\cite{fabj} have over time shown vulnerabilities to faulty behavior, this especially in the presence of unreliable communication~\cite{zfail,zfailfix}. 

We note that speculative execution, as used by \PoE{}, and optimistic execution, e.g., as used by \ZZ{}~\cite{zyzzyvaj} and \FaB{}~\cite{fabj}, are not the same: as part of the normal-case of \PoE{}, \PoE{} will internally detect and correct any replica failures, this without any assumptions on correct behavior by any replicas or clients.

\subsection{Consensus with Threshold Signatures: \SBFT{}}
Several recent \PBFT{}-style consensus protocols have explored the usage of \emph{threshold signatures} to transform the two phases of all-to-all communication in \PBFT{} (the \emph{prepare} and \emph{commit} phases) to two phases of all-to-one-to-all communication. These transformations are similar to how we can transform the prepare phase of \PoE{} from all-to-all communication to an all-to-one support sub-phase and an one-to-all certify sub-phase. We note that such a transformation, when applied on the two all-to-all communication phases of \PBFT{}, will result in a consensus protocol with a high latency: in such a protocol, it will take five communication rounds before replicas can execute client requests and inform clients.  Although such a transformation can successfully reduce the global communication cost of the normal-case operations of the consensus protocol, such threshold signature transformations do little to address the costs associated with any checkpoint and view-change protocols.

A good examples of a \PBFT{}-style consensus protocols that use threshold signatures is \SBFT{}~\cite{sbft}, which uses an optimistic fast path to reduce the number of rounds when all replicas are non-faulty, uses threshold signatures to eliminate all-to-all communication, and uses threshold signatures to reduce the number of messages send to the client. The design of \SBFT{} uses a fast path which starts execution of client requests after the prepare phase \emph{if} no replicas are faulty. Furthermore, \SBFT{} can aggregrate the a proof of the execution results and send such proof in a single message to the client (instead of $\f+1$ messages), this to reduce communication costs towards the client. With this optimization, \SBFT{} is able to inform clients in four rounds of communication (which is one more round than \PBFT{} and Linear-\PoE{} and two more rounds than \PoE{}). In case the fast path fails, \SBFT{} falls back to a \emph{slow path} via a Linear-\PBFT{} implementation. To reduce the load on the primary (see Example~\ref{ex:ts}), \SBFT{} uses non-primary replicas as the aggregator that constructs prepared and commit certificates during the prepare and commit phases. The usages of non-primary aggregators at these parts of the protocol does introduce additional failure cases, however, for which \SBFT{} introduces separate recovery mechanisms. Similar fine-tuning can also be applied to \PoE{}, but we have not explored such fine-tuning in this work (as separate aggregators only introduce minimal bandwidth savings for the primary). Besides the fast path and the slow path, \SBFT{} also requires a checkpoint protocol similar to the one utilized by \PBFT{} (which can be run periodically).

\subsection{Chained consensus: \HS{}} Another approach toward utilizing threshold signatures in primary-backup consensus protocols is provided by \HS{}. \HS{} provides a clean-slate consensus design that is tuned toward minimizing complexity and communication cost, both during normal-case operations and during view-changes. To achieve this, \HS{} relies on \emph{chaining consensus}: the $i$-th consensus proposal builds upon the preceding ($i-1$)-th consensus proposal. This allows \HS{} to represent the state of the ledger via a single value, namely the last-made consensus decision (that builds upon all preceding decisions).  Finally, \HS{} uses threshold signatures to produce constant-size certificates for each consensus decision. This combination of techniques allows \HS{} to implement cheap \emph{primary rotation}: each round starts with a switch of primary via a constant-sized single-message view-change.

The design of \HS{} requires 4 consecutive all-to-one-to-all phases of communication before consensus is reached on a single request, which leads to 7 rounds of communication between the initial proposal of a client request and replicas being able to execute these requests. Both the normal-case and the view-change of \HS{} are \emph{linear}. To deal with unresponsive replicas and replica failures, \HS{} uses a Pacemaker. Unfortunately, the standard Pacemaker of \HS{} assumes \emph{partial synchrony} and cannot recover from network failures. To improve the resilience against network failure in \HS{}, one can replace the standard Pacemaker that \HS{} uses with a Pacemaker suitable for an asynchronous environment. As stated in Remark~\ref{rem:failure}, we believe that any asynchronous Pacemaker that can sufficiently synchronize replicas after network failure will have to operate similarly to the \emph{failure detection stage} of \PoE{}, which would introduce a decentralized all-to-all communication phase in the recovery path of \HS{}.

Due to the chained design of \HS{}, \HS{} does not support out-of-order processing: consensus decisions are strictly made in sequence. \HS{} does support overlapping of rounds of consecutive consensus decisions, however. Hence, in practice, \HS{} is able to propose a request every 2 rounds of communication. Consequently, the performance of typical deployments of \HS{} are \emph{latency based} and non-local deployments can only reach tens-to-hundreds consensus decisions per second. E.g., with a message delay of $\SI{15}{\milli\second}$, \HS{} can perform at-most $33$ consensus decisions per second, whereas, as shown in Example~\ref{ex:stimate}, an out-of-order \PoE{} can easily process thousands of consensus decisions per second.

\subsection{Trusted Hardware}
There is a large body of work on consensus protocols that utilize \emph{trusted hardware} to simplify and optimize consensus. The usage of trusted hardware restricts the behavior allowed by faulty or malicious replicas, e.g., by assuring faulty replicas cannot share their private keys and cannot forge round numbers and other counters. A representative example of a consensus protocol that utilizes trusted hardware is \MinBFT{}~\cite{minbft}. Due to the usage of trusted hardware, \MinBFT{} can skip the prepare phase of \PBFT{} and can achieve consensus in only two rounds of communication. Furthermore, the usage of trusted hardware makes \MinBFT{} more resilient against failure, as it requires $\f$ fewer non-faulty replicas than normal consensus protocols (\MinBFT{} can operate even in deployments in which $3\f \geq \n > 2\f$ holds). Unfortunately, the presented version of \MinBFT{} does require \emph{reliable communication}, as the protocol itself does not deal with message loss.

To the best of our knowledge, \MinBFT{} and other consensus protocols that utilize trusted hardware are the only consensus protocols \emph{besides} \PoE{} that have a latency of two rounds of communication (without relying on a fault-prone optimistic path).

\subsection{Other Consensus Protocols}
There are many other recent \PBFT{}-style consensus protocols that we did not cover in the above. Most of these protocols have properties similar to the ones discussed, however. E.g., \FaB{}~\cite{fabj} is an optimistic protocol that has a fast path in the same vein as \ZZ{}~\cite{zyzzyvaj} and \LinBFT{}~\cite{linbft} is a version of \PBFT{} that uses threshold signatures and is based on techniques similar to those used in \SBFT{}~\cite{sbft} and \HS{}~\cite{hotstuff}.

\section{Conclusion}\label{sec:conclusion}
In this paper, we introduced \PoE{}, a consensus protocol designed for high-performance low-latency resilient data management systems and that can operate in practical asynchronous deployments. \PoE{} introduces the usage of speculative execution and \emph{proof-of-execution} to minimize the latency of transaction processing in such resilient systems and introduces a single-round check-commit protocol to further reduce communication costs of consensus. Furthermore, the flexible design of \PoE{} allows for optimizations that further balance communication costs, transaction latency, and recovery complexity. 

The flexible low-latency high-throughput design of \PoE{} is especially suited for resilient data management systems. To illustrate this, we performed an in-depth analytical and experimental comparison with other consensus protocols, that underlined the outstanding performance of \PoE{}. Hence, we believe that \PoE{} is a promising step towards flexible general-purpose resilient data management systems.

\bibliographystyle{plainurl}
\bibliography{resources}

\end{document}